\newcommand{\C}{\mathbb{C}}
\newcommand{\R}{\mathbb{R}}
\newcommand{\E}{\mathbb{E}}
\newcommand{\N}{\mathbb{N}}
\newcommand{\Z}{\mathbb{Z}}
\newcommand{\Hil}{\mathcal{H}}
\newcommand{\ket}[1]{|{#1}\rangle}
\newtheorem{theorem}{Theorem}[section]
\newtheorem{remark}[theorem]{Remark}
\newtheorem{proposition}[theorem]{Proposition}
\newtheorem{lemma}[theorem]{Lemma}
\newtheorem{definition}[theorem]{Definition}
\newtheorem{corollary}[theorem]{Corollary}
\newenvironment{proof}{{\bf Proof:}}{\hfill$\square$\vskip.5cm}
\newenvironment{proofof}{}{\hfill$\square$\vskip.5cm}
\newcommand{\B}{\mathbb{B}}
\newcommand{\y}{\boldsymbol{y}}
\renewcommand{\k}{\boldsymbol{k}}
\renewcommand{\B}{\mathbb{B}}
\renewcommand{\k}{\boldsymbol{k}}
\renewcommand{\r}{\boldsymbol{r}}
\renewcommand{\E}{\mathfrak{E}}
\renewcommand{\k}{\boldsymbol{\kappa}}
\newcommand{\bR}{\boldsymbol{R}}
\title{Asymptotic Ferromagnetic Ordering of Energy Levels 
for the Heisenberg Model
on Large Boxes
}
\author{Bruno Nachtergaele${}^1$, Wolfgang Spitzer${}^2$ and Shannon Starr${}^3$\\
\small
${}^1$ University of California at Davis, 
Davis, CA 95616, USA\\
\small
${}^2$ FernUniversit\"at in Hagen, Fakult\"at f\"ur Mathematik und Informatik,\\
\small
LG Angewandte Stochastik, 58084 Hagen, Germany\\
\small
${}^3$ University of Alabama at Birmingham,
 Applied Mathematics, Birmingham, AL 35294--1170}
\date{2 September 2015}
\begin{document}

%\onehalfspace

\pagestyle{headings}

\maketitle

\begin{abstract}
We prove a result for the spin-$1/2$ quantum Heisenberg ferromagnet on $d$-dimensional boxes $\{1,\dots,L\}^d \subset \mathbb{Z}^d$. 
For any $n$, if $L$ is large enough, the Hamiltonian satisfies: 
among all vectors whose total spin is at most $(L^d/2)-n$, the minimum energy is attained by a vector whose total spin is exactly $(L^d/2)-n$.

\vspace{8pt}
\noindent
{\small \bf Keywords:} Heisenberg model, quantum spin systems, simple exclusion process, ordering of energy levels,
Aldous ordering, spectral gap.
\vskip .2 cm
\noindent
{\small \bf MCS numbers:} 82B10, 81R05, 81R50.
\end{abstract}

\section{Introduction}

In this article, we prove an asymptotic result for the ferromagnetic Heisenberg model on  boxes.
%By a simple extension, our results also hold for large tori, i.e., boxes with periodic boundary conditions.
%We will describe this extension in Section \ref{sec:extension}.
%
We prove that the {\em ferromagnetic ordering of energy levels (FOEL)} property holds up to some level, if the box is large enough.
% {\em at level-$n$}, on a $d$-dimensional box $\{1,\dots,L\}^d \subset \Z^d$, if the sidelength of the box, $L$, is large enough,
%$L\geq L_{n,d}$,
%with $L_{n,d}$ depending on $n$ and the dimension $d$.
%We will also explain how to deduce the same result for  large tori.

The {\em FOEL condition at level $n$} means the following.
Consider a $d$-dimensional box 
$\B^d(L)$, defined to be $\{1,\dots,L\}^d \subset \Z^d$.
The ferromagnetic Heisenberg Hamiltonian on $\B^d(L)$
commutes with total spin, and the ground state subspace is the 
total spin subspace for the maximum possible spin $s=\frac{1}{2}L^d$.
For any $n$, we may restrict the Hamiltonian to the subspace spanned by total spin eigenvectors whose total spin satisfies $s\leq \frac{1}{2}L^d-n$.
Then the minimum energy eigenvalue among vectors in this subspace is 
attained by an eigenvector whose total spin is  $s=\frac{1}{2}L^d-n$.
Roughly stated: lower energies are attained at higher spin.
We prove that, for fixed $n$ and $d$ in $\{1,2,\dots\}$,  there exists  $L_0(n,d) \in \{1,2,\dots\}$ such that FOEL-$n$ holds on $\B^d(L)$
for each $L\geq L_0(n,d)$.
\subsection*{Discussion}
This property was proposed in \cite{NachtergaeleSpitzerStarr} as a ferromagnetic version of a famous theorem
by Lieb and Mattis for antiferromagnets and ferrimagnets \cite{LiebMattis}.
Moreover, the authors proved that FOEL holds at level $n$ for all $n \in \{0,1,\dots\}$,
for any $d=1$ dimensional box/chain of sidelength $L\geq 2n$. This is an optimal result because total spin
can never be less than $0$.
The proof from \cite{NachtergaeleSpitzerStarr} did not extend to periodic boundary conditions.
It relied on a special basis, called the ``Hulth\`en bracket basis,'' defined and studied
in the context of Bethe ansatz solvable models by Temperley and Lieb \cite{TemperleyLieb}.
Due to the basis considered by Temperley and Lieb,
the result generalized to the XXZ model on 1-dimensional boxes, which possesses 
symmetry of the quantum group $\mathcal{U}_q(\mathfrak{sl}_2)$.
It also could be proved for higher spin $\mathrm{SU}(2)$ models, as in \cite{NachtergaeleStarr},
as well as higher-spin $\mathcal{U}_q(\mathfrak{sl}_2)$-symmetric models, generalizing the XXZ model
in 1-dimension. The latter was carried out in \cite{NachtergaeleNgStarr}.
The FOEL property in $1d$ also has some applications, such as a detailed study of droplets in the thermodynamic limit for the XXZ chain
\cite{NachtergaeleSpitzerStarr2}.

In higher dimensions, or even for periodic boundary conditions, the argument based on the Hulth\`en bracket
basis does not work.
In fact, if $L$ is even, and $L>4$, and one considers the 1-dimensional torus of length $L$ (i.e., the ring),
then there is numerical evidence that FOEL is violated when one takes the level to be $n=(L/2)-1$. See, for example, \cite{SpitzerStarrTran}.

On the other hand, a deep and interesting result of Caputo, Liggett and Richthammer \cite{CaputoLiggettRichthammer}
showed that FOEL at level $n=1$ holds for all graphs. This had been conjectured, previously, by Aldous,
with some evidence by Diaconis. 
Before \cite{CaputoLiggettRichthammer}, Handjani and Jungreis wrote an important paper
including historical perspective \cite{HandjaniJungreis}.
Caputo, Liggett and Richthammer proved Aldous's conjecture.
It is interesting 
to ask how one may generalize their results for $n>1$, in a general way for all graphs?
We do not address that question in this article, but there is important work in this direction \cite{Cesi,AlonKozma,AlonKozma2,Dieker}.

Before their major advance there were two closely related  papers by Conomos and one of the authors \cite{ConomosStarr}
and Morris \cite{Morris} essentially proving Aldous's conjecture for  large boxes.
These papers built on an earlier inductive argument idea of Koma and Nachtergaele \cite{KomaNachtergaele},
who had calculated the exact spectral gap for the XXZ model on 1-dimensional boxes/chains,
and Handjani and Jungreis \cite{HandjaniJungreis}, who had considered examples of graphs for which
Aldous's conjecture was provable at that time.
The argument in \cite{ConomosStarr} and \cite{Morris} is easier than Caputo, Liggett and Richthammer's proof in \cite{CaputoLiggettRichthammer}.
In the present paper we generalize this argument from $n=1$ to $n>1$.
On large boxes with few spin deviations $n$ a kinetic energy term forces the low energy states to ``spread out.''
Then, for such states, the model is well approximated by an ideal Bose gas of ``magnons.''
%But again, for FOEL-$n$, there are some graphs giving counterexamples (at least numerically): namely rings of length $2n$.
%Also, for  physical purposes one does not require universal results for all graphs.

Two years ago, a major advance was made by Correggi, Giuliani and Seiringer \cite{CorreggiGiulianiSeiringer,CorreggiGiulianiSeiringer2}.
They proved that the free energy at low temperature is asymptotically the same as for the non-interacting Bose gas
on the lattice.
One motivation for their work is the important open problem of proving $\mathrm{SU}(2)$ symmetry breaking
at low temperature for the Heisenberg ferromagnet.
See
$$ 
\text{\small \url{http://web.math.princeton.edu/~aizenman/OpenProblems.iamp/9901.HeisenbergFerr.html}.}
$$
There is a relation between our result and theirs.
A lemma in their paper establishes that the minimum energy among all eigenvectors with total spin $s=\frac{1}{2}L^d-n$
is at least $Cn$ for some $C>0$, uniformly in $L$ and $n$ (relative to the ground state energy $0$).
Their constant $C$ is not sharp for small $n$, but their bound is uniform in $n$.
We will obtain a sharp bound for small $n$, but not uniformly.

\section{Set-up and Statement of the main result}

Consider the Heisenberg model on a  finite graph
$\mathscr{G} = (\mathscr{V},\mathscr{E})$ with vertex set $\mathscr{V}$
and edge set $\mathscr{E}$.
We denote a reference Hilbert space $\Hil \cong \C^2$ for a single spin with orthonormal basis $\ket{\uparrow}$,
$\ket{\downarrow}$.
Relative to this basis, the Pauli spin matrices $\sigma^{(1)}$, $\sigma^{(2)}$, $\sigma^{(3)}$, and raising/lowering operators are
$$
\sigma^{(1)}\, =\, \begin{bmatrix} 0 & 1 \\ 1 & 0 \end{bmatrix}\, ,\quad
\sigma^{(2)}\, =\, i\begin{bmatrix} 0 & -1 \\ 1 & 0 \end{bmatrix}\, ,\quad
\sigma^{(3)}\, =\, \begin{bmatrix} 1 & 0 \\ 0 & -1 \end{bmatrix}\, ,\quad
\sigma^{+}\, =\, \begin{bmatrix} 0 & 1 \\ 0 & 0 \end{bmatrix}\, ,\quad
\sigma^{-}\, =\, \begin{bmatrix} 0 & 0 \\ 1 & 0 \end{bmatrix}\, .
$$
The spin-$1/2$ spin matrices are
$S^{(a)} = {\frac{1}{2}}\, \sigma^{(a)}$, for $a \in \{1,2,3\}$, and
$S^{\pm} = \sigma^{\pm}\, =\, S^{(1)} \pm i S^{(2)}$.
\begin{remark}
\label{rem:adj}
For each $a\in\{1,2,3\}$, we have $(\sigma^{(a)})^*=\sigma^{(a)}$. Also, $(\sigma^+)^* = \sigma^-$.
\end{remark}

For each $x \in \mathscr{V}$, there is an isomorphic copy  $\Hil_x \cong \Hil$.
The total Hilbert space for the Heisenberg model on $\mathscr{V}$  is
$\Hil_{\mathscr{V}} = \bigotimes_{x \in \mathscr{V}} \Hil_x$.
For each $x \in \mathscr{V}$, and  $a\in\{1,2,3\}$, we let $S_x^{(a)}$ denote the operator on $\Hil_{\mathscr{V}}$
%which is the tensor product of $S^{(a)}$ acting on $\Hil_x$ with the identity operator
%acting on all other factors. So 
such that,
for simple tensor product vectors, $\bigotimes_{y \in \mathscr{V}} \psi_y$ (where $\psi_y \in \Hil_y$ for each $y \in \mathscr{V}$),
$$
S^{(a)}_x \bigotimes_{y \in \mathscr{V}} \psi_y\,
=\, \bigotimes_{y \in \mathscr{V}} \psi'_y\quad \text{ for }\quad
\psi'_y\, =\,
 \begin{cases}
S^{(a)} \psi_x & \text{ for $y=x$,}\\
\psi_y & \text{ for $y\in \mathscr{V} \setminus\{x\}$.}
\end{cases}
$$
We define $S^{\pm}_x = S^{(1)}_x \pm i S^{(2)}_x$, consistently.
The Heisenberg Hamiltonian $H_{\mathscr{G}}$ is the operator on $\Hil_{\mathscr{V}}$:
\begin{equation}
\label{eq:HamDef}
H_{\mathscr{G}}\, =\, \sum_{\{x,y\} \in \mathscr{E}} h_{xy}\, ,\
\text{ where } \
h_{xy}\, =\, \frac{1}{4}\, \mathbbm{1} - \sum_{a=1}^{3} S_x^{(a)} S_y^{(a)}\, 
=\, \frac{1}{4}\, \mathbbm{1} - S_x^{(3)} S_y^{(3)} - \frac{1}{2}\, S_x^+ S_y^- - \frac{1}{2}\, S_x^- S_y^+\, ,
\end{equation}
where $\mathbbm{1}$ denotes the identity operator on $\Hil_{\mathscr{V}}$.
The model is $\mathrm{SU}(2)$ invariant. 
We may define 
%the  components of the the total spin operator, $S_{\mathscr{V}}^{(a)}$ for $a \in \{1,2,3\}$, the total raising and lowering operators, $S^{\pm}_{\mathscr{V}}$ 
%and the Casimir operator, $\boldsymbol{S}_{\mathscr{V}}^2$:
\begin{equation}
\label{eq:SpinTot}
S_{\mathscr{V}}^{(a)}\, =\, \sum\nolimits_{x \in \mathscr{V}} S_x^{(a)}\, ,\, \text{ for $a=1,2,3$, and }\
S_{\mathscr{V}}^{\pm}\, =\, \sum\nolimits_{x \in \mathscr{V}} S_x^{\pm}\, =\, S_{\mathscr{V}}^{(1)} \pm i S_{\mathscr{V}}^{(2)}\, ,
%\\
%\boldsymbol{S}_{\mathscr{V}}^2\, =\, \sum_{a=1}^{3}
%\left(S_{\mathscr{V}}^{(a)}\right)^2\, =\, \left(S_{\mathscr{V}}^{(3)}\right)^2 + \frac{1}{2}\, S_{\mathscr{V}}^+ S_{\mathscr{V}}^- + \frac{1}{2}\, S_{\mathscr{V}}^- S_{\mathscr{V}}^+\, .
\end{equation}
Then $[H_{\mathscr{G}},S^{(a)}_{\mathscr{V}}]=0$ for $a\in\{1,2,3\}$
which also implies $[H_{\mathscr{G}},S^{\pm}_{\mathscr{V}}]=0$.
%\subsection{$\mathrm{SU}(2)$ representation theory}

We recall basic $\mathrm{SU}(2)$ theory to state the main result.
% and $[H_{\mathscr{G}},\boldsymbol{S}^2_{\mathscr{V}}]=0$.
By Remark \ref{rem:adj} and the definitions
$(S_{\mathscr{V}}^{(a)})^*=S_{\mathscr{V}}^{(a)}$  for $a \in \{1,2,3\}$, 
%which also implies $(\boldsymbol{S}_{\mathscr{V}}^2)^*=\boldsymbol{S}_{\mathscr{V}}^2$ 
and $(S_{\mathscr{V}}^+)^* = S_{\mathscr{V}}^-$.
%\begin{definition}
%Denote $\s(\mathscr{V}) = \frac{1}{2}\, |\mathscr{V}|$.
%\end{definition}
Let $\Omega_{\mathscr{V}}(\emptyset) \in \Hil_{\mathscr{V}}$ be the simple tensor product vector, which is the product of $\ket{\uparrow} \in \Hil_x$, for
each $x \in \mathscr{V}$. 
Hence, $\Omega_{\mathscr{V}}(\emptyset)\in \ker(S_x^{(3)}-\frac{1}{2}\mathbbm{1})$ for each $x \in \mathscr{V}$.
For each $n \in \{1,2,\dots\}$ and each $(x_1,\dots,x_n) \in \mathscr{V}^n$, let
\begin{equation}
\label{eq:OmegaDEFn}
\Omega^{(n)}_{\mathscr{V}}(x_1,\dots,x_n)\, 
\stackrel{\mathrm{def}}{:=}\, S_{x_1}^{-} \cdots S_{x_n}^- \Omega_{\mathscr{V}}(\emptyset)\, .
\end{equation}
Then, for any $\pi \in S_n$, we have $\Omega^{(n)}_{\mathscr{V}}(x_{\pi_1},\dots,x_{\pi_n}) = \Omega^{(n)}_{\mathscr{V}}(x_1,\dots,x_n)$
since $S_{x_1}^-$,\dots,$S_{x_n}^-$ commute.
%Also, if $x_i=x_j$ for some $i<j$ then $\Omega^{(n)}_{\mathscr{V}}(x_1,\dots,x_n)=0$ because $(S_{x_i}^-)^2=0$.
Given any subset $X \subseteq \mathscr{V}$, let us define $\Omega_{\mathscr{V}}(X)$ as follows:
if $|X|=n$ for some $n \in \{1,\dots,|\mathscr{V}|\}$ and $X = \{x_1,\dots,x_n\}$,
then we define $\Omega_{\mathscr{V}}(X)$ to be  $\Omega^{(n)}_{\mathscr{V}}(x_1,\dots,x_n)$.
Then we may easily deduce:
\begin{itemize}
\item
For $x,y \in \mathscr{V}$, $[S_x^{(3)},S_y^{\pm}]=\pm \delta_{x,y}S_y^{\pm}$ and $[S_x^+,S_y^-]=2\delta_{x,y}S_x^{(3)}$.
\item 
Summing, $[S_{\mathscr{V}}^{(3)},S_x^{\pm}] = \pm S_x^{\pm}$ for each $x \in \mathscr{V}$.
Also,
%$[S_{\mathscr{V}}^{(3)},\boldsymbol{S}_{\mathscr{V}}^2]=0$, $[S_{\mathscr{V}}^{\pm},\boldsymbol{S}_{\mathscr{V}}^2]=0$,
$[S_{\mathscr{V}}^{(3)},S_{\mathscr{V}}^{\pm}]=\pm S_{\mathscr{V}}^{\pm}$ and $[S_{\mathscr{V}}^+,S_{\mathscr{V}}^-] = 2 S_{\mathscr{V}}^{(3)}$.
\item
For $X \subseteq \mathscr{V}$ and $x \in\mathscr{V}$,  $S_x^+ \Omega_{\mathscr{V}}(X) = \mathbf{1}_{X}(x) \Omega_{\mathscr{V}}(X\setminus\{x\})$ and $S_x^- \Omega_{\mathscr{V}}(X) = \mathbf{1}_{\mathscr{V}\setminus X}(x) \Omega_{\mathscr{V}}(X\cup\{x\})$.
\end{itemize}
For each $m \in \R$, we define the $m$-magnon subspace,
\begin{equation}
\label{eq:MagSubspaceDef}
\Hil_{\mathscr{V}}^{\mathrm{mag}}(m)\, \stackrel{\mathrm{def}}{:=}\, \ker(S^{(3)}_{\mathscr{V}}-M\cdot\mathbbm{1})\, ,\quad
\text{ for $M= \frac{1}{2}|\mathscr{V}|-m$.}
\end{equation}
We denote the {\em Casimir} operator by the symbol $\mathcal{C}_{\mathscr{V}} = \sum_{a=1}^{3} (S_{\mathscr{V}}^{(a)})^2 = (S_{\mathscr{V}}^{(3)})^2 + \frac{1}{2} S_{\mathscr{V}}^+ S_{\mathscr{V}}^-
+\frac{1}{2} S_{\mathscr{V}}^- S_{\mathscr{V}}^+$.
This is also called the total-spin operator, $\boldsymbol{S}_{\mathscr{V}}^2$.
For $n \in \R$, we define the $n$-spin deviate subspace
\begin{equation}
\label{eq:SpinSubspaceDef}
\Hil^{\mathrm{spin}}_{\mathscr{V}}(n) = \ker(\mathcal{C}_{\mathscr{V}}-s(s+1)\mathbbm{1})\, ,\quad
\text{ for $s=\frac{1}{2}|\mathscr{V}|-n$.}
\end{equation}
A calculation shows $[\mathcal{C}_{\mathscr{V}},S_{\mathscr{V}}^{(a)}]=0$
for $a \in \{1,2,3\}$. 
%(This is why it is called the Casimir operator.)
%Hence, also,  $[\mathcal{C}_{\mathscr{V}},S^{\pm}_{\mathscr{V}}]=0$.
So we can consider simultaneous eigenspaces:
% for $S^{(3)}_{\mathscr{V}}$ and $\mathcal{C}_{\mathscr{V}}$.
\begin{gather*}
\Hil_{\mathscr{V}}\, =\, \bigoplus_{(m,n) \in \mathcal{A}_{\mathscr{V}}} \Hil_{\mathscr{V}}(m,n)\, ,\ \text{where}\\
\forall (m,n) \in \R^2\, ,\
\Hil_{\mathscr{V}}(m,n)\, \stackrel{\mathrm{def}}{:=}\, \Hil^{\mathrm{mag}}_{\mathscr{V}}(m) \cap \Hil_{\mathscr{V}}^{\mathrm{spin}}(n)\, ,\
\text{and}\\
\mathcal{A}_{\mathscr{V}}\, \stackrel{\mathrm{def}}{:=}\, \{(m,n)\in \R^2\, :\, \Hil_{\mathscr{V}}(m,n)\neq \{0\}\}\, .
\end{gather*}
Since $\mathcal{C}_{\mathscr{V}}$ is in the subalgebra generated by $S_{\mathscr{V}}^{(a)}$, for $a \in \{1,2,3\}$,
we see that $H_{\mathscr{G}}$ also commutes with $\mathcal{C}_{\mathscr{V}}$.
Hence, $\Hil_{\mathscr{V}}(m,n)$ is an invariant subspace for $H_{\mathscr{G}}$ for each $(m,n) \in \mathcal{A}_{\mathscr{V}}$.
%The following lemma summarizes elementary facts.
%It is stated in a way that we feel a student could prove the statements, step-by-step.
Also note that since $[\mathscr{C}_{\mathscr{V}},S_{\mathscr{V}}^{(a)}]=0$ for $a \in \{1,2,3\}$, then
$[\mathcal{C}_{\mathscr{V}},S_{\mathscr{V}}^{\pm}]=0$, by \ref{eq:SpinTot}).
The following lemma gives the structure of the $\mathrm{SU}(2)$ representation on $\Hil_{\mathscr{V}}$.
We will not prove this lemma, which is standard.
(See, e.g.,
\cite{Edmonds}.)

Let $\mathcal{P}(\mathscr{V})$ denote the power set of $\mathscr{V}$ and
let 
\begin{equation}
\label{eq:PowerCardm}
\forall m \in \{0,\dots,|\mathscr{V}|\}\, ,\qquad
\mathcal{P}_m(\mathscr{V})\, 
\stackrel{\mathrm{def}}{:=}\, \{X \in \mathcal{P}(\mathscr{V})\, :\, |X|=m\}\, .
\end{equation}
\begin{lemma} 
 \label{lem:preliminary}
(1) For $m \in \{0,\dots,|\mathscr{V}|\}$, we have
$\Hil_{\mathscr{V}}^{\mathrm{mag}}(m) = \operatorname{span}(\{\Omega_{\mathscr{V}}(X)\, :\, X\in\mathcal{P}_m(\mathscr{V})\})$,
and $\langle \Omega_{\mathscr{V}}(X),\Omega_{\mathscr{V}}(Y) \rangle = \delta_{X,Y}$
for $X,Y\in\mathcal{P}_m(\mathscr{V})\})$.
In particular, $\dim(\Hil_{\mathscr{V}}^{\mathrm{mag}}(m)) = \binom{|\mathscr{V}|}{m}$.\\[2pt]
(2) For $m \in \{0,\dots,|\mathscr{V}|-1\}$, we have $S^-_{\mathscr{V}} \Hil^{\mathrm{mag}}_{\mathscr{V}}(m) \subseteq \Hil^{\mathrm{mag}}_{\mathscr{V}}(m+1)$,
and $S^-_{\mathscr{V}} \Hil^{\mathrm{mag}}_{\mathscr{V}}(|\mathscr{V}|)=0$.\\[2pt]
(3)
Define $\mathcal{A}'(\mathscr{V})$ to be  $\{(m,n) \in \Z^2\, :\,
m\geq0\, ,\ n\geq 0\, ,\ n\leq m\, ,\ m+n\leq |\mathscr{V}|\}$. 
For each $(m,n) \in \mathcal{A}'(\mathscr{V})$,
define $\Hil'_{\mathscr{V}}(m,n) \subset \Hil_{\mathscr{V}}$
as
$\Hil'_{\mathscr{V}}(m,n)
 = (S_{\mathscr{V}}^-)^{m-n} 
\left(\Hil^{\mathrm{mag}}_{\mathscr{V}}(n) \cap \ker(S^+_{\mathscr{V}})\right)$,
where $(S_{\mathscr{V}}^-)^0=\mathbbm{1}$.
Then
$\Hil'_{\mathscr{V}}(m,n) \subseteq \Hil_{\mathscr{V}}(m,n)$.\\[2pt]
(4) 
For each $(m,n) \in \mathcal{A}'(\mathscr{V})$,
we have
$\Hil'_{\mathscr{V}}(m,n) \subseteq \ker(S_{\mathscr{V}}^+ S_{\mathscr{V}}^- -(s+M)(s-M+1)\mathbbm{1})$ where $s=\frac{1}{2}|\mathscr{V}|-n$, $M=\frac{1}{2}|\mathscr{V}|-m$.\\[2pt]
(5) For each $(m,n) \in \mathcal{A}'(\mathscr{V})$,
we have $\dim(\Hil'_{\mathscr{V}}(m,n)) = \binom{|\mathscr{V}|}{n}  - \binom{|\mathscr{V}|}{n-1}$, 
where $\binom{|\mathscr{V}|}{-1}$ is set to $0$.\\[2pt]
(6)
The combinatorial identity
$
\sum_{(m,n) \in \mathcal{A}'(\mathscr{V})} \dim(\Hil'_{\mathscr{V}}(m,n)) = \dim(\Hil_{\mathscr{V}})$
is true.
So $\mathcal{A}(\mathscr{V})$ is $\mathcal{A}'(\mathscr{V})$.
And, for each $(m,n) \in \mathcal{A}(\mathscr{V})$, we have that
$\Hil_{\mathscr{V}}(m,n)$ is $\Hil'_{\mathscr{V}}(m,n)$.
%For each
%$m \in \{0,\dots,|\mathscr{V}|\}$, we have
%$\Hil_{\mathscr{V}}^{\mathrm{mag}}(m) = \bigoplus_{n \in \{0,\dots,\min\{m,|\mathscr{V}|-m\}\}} \Hil_{\mathscr{V}}(m,n)$, and 
%for each $n \in \{0,1,\dots,\lfloor \frac{1}{2}|\mathscr{V}| \rfloor\}$, we have $\Hil^{\mathrm{spin}}_{\mathscr{V}}(n) = \bigoplus_{m\in \{n,\dots,|\mathscr{V}|-n\}} \Hil_{\mathscr{V}}(m,n)$.\\[2pt]
%(7)
%$\Hil_{\mathscr{V}} 
%= \bigoplus_{m\in \{0,\dots,|\mathscr{V}|\}} \Hil^{\mathrm{mag}}_{\mathscr{V}}(m) 
%= \bigoplus_{n\in \{0,\dots,\lfloor \frac{1}{2}|\mathscr{V}|\rfloor\}} \Hil^{\mathrm{spin}}_{\mathscr{V}}(n)$.
\end{lemma}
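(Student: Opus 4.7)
The plan is to establish the six parts sequentially, with parts (1)--(2) following directly from the bullet-point identities listed just above the lemma statement, parts (3)--(4) being algebraic consequences of two $\mathfrak{sl}_2$ identities, part (5) reducing to a pair of injectivity arguments via explicit norm formulas, and part (6) reducing to a verifiable finite combinatorial identity that then forces equality throughout. For part (1), I would observe that the standard tensor basis of $\Hil_{\mathscr{V}}$, indexed by subsets $X\subseteq\mathscr{V}$ (with $X$ recording which sites carry $\ket{\downarrow}$), is exactly $\{\Omega_{\mathscr{V}}(X):X\subseteq\mathscr{V}\}$; orthonormality is immediate from the tensor-product structure, and the computation $S^{(3)}_{\mathscr{V}}\Omega_{\mathscr{V}}(X)=(\tfrac{1}{2}|\mathscr{V}|-|X|)\Omega_{\mathscr{V}}(X)$ separates these basis vectors by magnon number, yielding the span description and the dimension $\binom{|\mathscr{V}|}{m}$. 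Part (2) is then immediate from $[S^{(3)}_{\mathscr{V}},S^-_{\mathscr{V}}]=-S^-_{\mathscr{V}}$ together with $S^-_x\Omega_{\mathscr{V}}(\mathscr{V})=0$ for every $x$. For (3) and (4), I would first record two key identities derived from $[S^+_{\mathscr{V}},S^-_{\mathscr{V}}]=2S^{(3)}_{\mathscr{V}}$: the Casimir rewriting
\[
\mathcal{C}_{\mathscr{V}}=(S^{(3)}_{\mathscr{V}})^2+S^{(3)}_{\mathscr{V}}+S^-_{\mathscr{V}}S^+_{\mathscr{V}},
\]
and, by induction on $k$, the commutator $[S^+_{\mathscr{V}},(S^-_{\mathscr{V}})^k]=k(S^-_{\mathscr{V}})^{k-1}(2S^{(3)}_{\mathscr{V}}-k+1)$. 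Applied to any $v\in\Hil^{\mathrm{mag}}_{\mathscr{V}}(n)\cap\ker S^+_{\mathscr{V}}$, the Casimir formula gives $\mathcal{C}_{\mathscr{V}}v=s(s+1)v$ with $s=\tfrac{1}{2}|\mathscr{V}|-n$, and since $\mathcal{C}_{\mathscr{V}}$ commutes with $S^-_{\mathscr{V}}$ while $S^-_{\mathscr{V}}$ increments the magnon number, $(S^-_{\mathscr{V}})^{m-n}v\in\Hil_{\mathscr{V}}(m,n)$, proving (3). The commutator identity, applied with $k=m-n+1$ and $S^+_{\mathscr{V}}v=0$, yields $S^+_{\mathscr{V}}S^-_{\mathscr{V}}\bigl((S^-_{\mathscr{V}})^{m-n}v\bigr)=(m-n+1)(|\mathscr{V}|-m-n)(S^-_{\mathscr{V}})^{m-n}v$, which coincides with $(s+M)(s-M+1)$ upon substituting the definitions of $s$ and $M$.

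The main work will be in part (5). Iterating the commutator identity on a highest weight vector $v\in\Hil^{\mathrm{mag}}_{\mathscr{V}}(n)\cap\ker S^+_{\mathscr{V}}$ yields the norm formula
\[
\|(S^-_{\mathscr{V}})^k v\|^2=k!\,(2s)(2s-1)\cdots(2s-k+1)\,\|v\|^2,
\]
which is strictly positive whenever $k=m-n\leq 2s=|\mathscr{V}|-2n$, a condition met throughout $\mathcal{A}'(\mathscr{V})$; hence $(S^-_{\mathscr{V}})^{m-n}$ is injective on $\Hil^{\mathrm{mag}}_{\mathscr{V}}(n)\cap\ker S^+_{\mathscr{V}}$, and it suffices to compute $\dim(\Hil^{\mathrm{mag}}_{\mathscr{V}}(n)\cap\ker S^+_{\mathscr{V}})$. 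For this, applying $S^+_{\mathscr{V}}S^-_{\mathscr{V}}=S^-_{\mathscr{V}}S^+_{\mathscr{V}}+2S^{(3)}_{\mathscr{V}}$ to $w\in\Hil^{\mathrm{mag}}_{\mathscr{V}}(n-1)$ gives $\langle w,S^+_{\mathscr{V}}S^-_{\mathscr{V}}w\rangle=\|S^+_{\mathscr{V}}w\|^2+(|\mathscr{V}|-2n+2)\|w\|^2$, so $S^-_{\mathscr{V}}w=0$ together with $n\leq|\mathscr{V}|/2$ forces $w=0$. Thus $S^-_{\mathscr{V}}\colon\Hil^{\mathrm{mag}}_{\mathscr{V}}(n-1)\to\Hil^{\mathrm{mag}}_{\mathscr{V}}(n)$ is injective, its adjoint $S^+_{\mathscr{V}}\colon\Hil^{\mathrm{mag}}_{\mathscr{V}}(n)\to\Hil^{\mathrm{mag}}_{\mathscr{V}}(n-1)$ is surjective, and the rank--nullity theorem gives $\dim(\Hil^{\mathrm{mag}}_{\mathscr{V}}(n)\cap\ker S^+_{\mathscr{V}})=\binom{|\mathscr{V}|}{n}-\binom{|\mathscr{V}|}{n-1}$, as claimed.

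Finally, for (6), the Casimir eigenspaces $\Hil^{\mathrm{spin}}_{\mathscr{V}}(n)$ are mutually orthogonal (as eigenspaces of the self-adjoint $\mathcal{C}_{\mathscr{V}}$ for distinct eigenvalues $s(s+1)$), so the subspaces $\Hil'_{\mathscr{V}}(m,n)$ for distinct $(m,n)\in\mathcal{A}'(\mathscr{V})$ are mutually orthogonal, and their dimensions sum to at most $2^{|\mathscr{V}|}=\dim\Hil_{\mathscr{V}}$. I would then verify the combinatorial identity
\[
\sum_{n=0}^{\lfloor|\mathscr{V}|/2\rfloor}(|\mathscr{V}|-2n+1)\Bigl[\binom{|\mathscr{V}|}{n}-\binom{|\mathscr{V}|}{n-1}\Bigr]=2^{|\mathscr{V}|}
\]
by an index-shift telescoping, reducing the left-hand side to $2\sum_{n}\binom{|\mathscr{V}|}{n}$ plus a middle-term boundary contribution handled separately for even and odd $|\mathscr{V}|$ via the symmetry $\binom{|\mathscr{V}|}{n}=\binom{|\mathscr{V}|}{|\mathscr{V}|-n}$. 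Once this equality of dimensions is in place, each containment $\Hil'_{\mathscr{V}}(m,n)\subseteq\Hil_{\mathscr{V}}(m,n)$ from (3) is forced to be an equality, and any $(m,n)\notin\mathcal{A}'(\mathscr{V})$ would contribute extra dimension, so $\mathcal{A}(\mathscr{V})=\mathcal{A}'(\mathscr{V})$. The only potentially subtle point is the positivity $|\mathscr{V}|-2n+2>0$ required in the injectivity step of (5), but this is precisely what the defining condition $n\leq|\mathscr{V}|/2$ of $\mathcal{A}'(\mathscr{V})$ ensures, so in the end it poses no real obstacle.
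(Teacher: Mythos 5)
Your proof is correct. The paper deliberately omits a proof of this lemma, stating only that it is standard $\mathrm{SU}(2)$ representation theory and citing \cite{Edmonds}, so there is no in-paper argument to compare against; what you have written is exactly the standard highest-weight/ladder-operator verification that the reference would supply. All the key computations check out: the Casimir rewriting $\mathcal{C}_{\mathscr{V}}=(S^{(3)}_{\mathscr{V}})^2+S^{(3)}_{\mathscr{V}}+S^{-}_{\mathscr{V}}S^{+}_{\mathscr{V}}$, the inductive commutator $[S^+_{\mathscr{V}},(S^-_{\mathscr{V}})^k]=k(S^-_{\mathscr{V}})^{k-1}(2S^{(3)}_{\mathscr{V}}-k+1)$, the identification $(m-n+1)(|\mathscr{V}|-m-n)=(s+M)(s-M+1)$, the injectivity of $(S^-_{\mathscr{V}})^{m-n}$ on the highest-weight space under $m+n\leq|\mathscr{V}|$, the rank–nullity count $\binom{|\mathscr{V}|}{n}-\binom{|\mathscr{V}|}{n-1}$, and the telescoping evaluation of $\sum_n(|\mathscr{V}|-2n+1)\bigl[\binom{|\mathscr{V}|}{n}-\binom{|\mathscr{V}|}{n-1}\bigr]=2^{|\mathscr{V}|}$ in both parities. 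The final dimension-count argument forcing $\Hil'_{\mathscr{V}}(m,n)=\Hil_{\mathscr{V}}(m,n)$ and $\mathcal{A}(\mathscr{V})=\mathcal{A}'(\mathscr{V})$ is also sound, since the containments from part (3) together with mutual orthogonality of the joint eigenspaces leave no room for discrepancy once the dimensions saturate $\dim\Hil_{\mathscr{V}}$.
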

%Let us introduce some more notation. We define the ``$m$-magnetization deviate subspace'' 
%\begin{equation}
%\label{eq:Hm}
%\Hil_{\mathscr{V}}^{\mathrm{mag}}(m)\, 
%=\, \{\psi \in \Hil_{\mathscr{V}}\, :\, 
%S_{\mathscr{V}}^{(3)} \psi = (s_{\mathscr{V}}-m) \psi\}\, 
%=\, \bigoplus_{n=0}^{\min\{m,|\mathscr{V}|-m\}}\Hil_{\mathscr{V}}(m,n)\, ,
%\end{equation}
%for $m\in\{0,\dots,|\mathscr{V}|\}$ and for $n\in\{0,\dots,\lfloor s_{\mathscr{V}} \rfloor\}$, we define
%the ``$n$-spin deviate subspace''
%\begin{equation}
%\label{eq:rDeviateDef1}
%\Hil_{\mathscr{V}}^{\mathrm{spin}}(n)\, 
%=\, \{\psi \in \Hil_{\mathscr{V}}\, :\, \boldsymbol{S}_{\mathscr{V}}^2 \psi = (s_{\mathscr{V}}-n)(s_{\mathscr{V}}-n+1) \psi\}\,
%=\, \bigoplus_{m=n}^{|\mathscr{V}|-n}\Hil_{\mathscr{V}}(m,n)\,  .
%\end{equation}
%
The following is key for us.
\begin{proposition}
\label{prop:LiebMattis}
$\min\operatorname{spec}(H_{\mathscr{G}})=0$.
Moreover, if $\mathscr{G}$
is connected, then 
$\operatorname{ker}(H_{\mathscr{G}})\, =\, 
\Hil_{\mathscr{V}}^{\mathrm{spin}}(0)$.
\end{proposition}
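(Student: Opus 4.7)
My plan is to exhibit each bond term as (essentially) a transposition projector, reducing the kernel problem to the symmetric subspace, which I will then identify with $\Hil_{\mathscr{V}}^{\mathrm{spin}}(0)$ via Lemma~\ref{lem:preliminary}. First I will let $T_{xy}\colon\Hil_{\mathscr{V}}\to\Hil_{\mathscr{V}}$ denote the operator that exchanges the two tensor factors at sites $x,y$, and verify the spin-$\tfrac12$ identity
\[
h_{xy}\, =\, \tfrac12\bigl(\unity - T_{xy}\bigr)\, .
\]
This follows from the standard computation $T = \tfrac12\unity + 2\sum_a S^{(a)}\otimes S^{(a)}$ on $\Hil_x\otimes\Hil_y$ (equivalently, $h_{xy}$ is the orthogonal projection onto the singlet at the edge $\{x,y\}$). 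Since $T_{xy}^2=\unity$ has spectrum $\{\pm1\}$, each $h_{xy}$ is a nonnegative orthogonal projection, so $H_{\mathscr{G}}\ge 0$. The all-up vector $\Omega_{\mathscr{V}}(\emptyset)$ is fixed by every $T_{xy}$, hence killed by every $h_{xy}$ and by $H_{\mathscr{G}}$; this gives $\min\spec(H_{\mathscr{G}})=0$.

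Next, I will treat the kernel. Because $H_{\mathscr{G}}$ is a sum of positive operators, $\psi\in\ker H_{\mathscr{G}}$ iff $h_{xy}\psi=0$ for every edge $\{x,y\}\in\mathscr{E}$, iff $T_{xy}\psi=\psi$ for every edge. When $\mathscr{G}$ is connected, the edge-transpositions $\{T_{xy}:\{x,y\}\in\mathscr{E}\}$ generate the full symmetric group $S_{\mathscr{V}}$ acting on tensor factors: given any two vertices $u,v$, pick a path in $\mathscr{G}$ and write the transposition $(u,v)$ as a conjugation product of consecutive edge-transpositions along it, then use that transpositions generate $S_{\mathscr{V}}$. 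Thus $\ker H_{\mathscr{G}}$ is exactly the subspace $\Hil_{\mathscr{V}}^{\mathrm{sym}}$ of tensors invariant under the natural permutation action of $S_{\mathscr{V}}$.

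The final step is to identify $\Hil_{\mathscr{V}}^{\mathrm{sym}}$ with $\Hil_{\mathscr{V}}^{\mathrm{spin}}(0)$. By Lemma~\ref{lem:preliminary}(1), $\Hil_{\mathscr{V}}^{\mathrm{mag}}(0)=\C\,\Omega_{\mathscr{V}}(\emptyset)$, and $S^+_{\mathscr{V}}\Omega_{\mathscr{V}}(\emptyset)=0$, so parts (3) and (6) of Lemma~\ref{lem:preliminary} with $n=0$ give
\[
\Hil_{\mathscr{V}}^{\mathrm{spin}}(0)\, =\, \bigoplus_{m=0}^{|\mathscr{V}|}(S_{\mathscr{V}}^-)^m\,\C\,\Omega_{\mathscr{V}}(\emptyset)\, .
\]
Since $\Omega_{\mathscr{V}}(\emptyset)$ is symmetric and $S_{\mathscr{V}}^-=\sum_{x}S_x^-$ is invariant under conjugation by every $T_{xy}$, each $(S_{\mathscr{V}}^-)^m\Omega_{\mathscr{V}}(\emptyset)$ is symmetric, so $\Hil_{\mathscr{V}}^{\mathrm{spin}}(0)\subseteq\Hil_{\mathscr{V}}^{\mathrm{sym}}$. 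I will then conclude by a dimension count: Lemma~\ref{lem:preliminary}(5) with $n=0$ gives $\dim\Hil_{\mathscr{V}}^{\mathrm{spin}}(0)=|\mathscr{V}|+1$, which matches the well-known dimension of the symmetric subspace of $(\C^2)^{\otimes|\mathscr{V}|}$. I do not anticipate a serious obstacle here: the only substantive algebraic input is the spin-$\tfrac12$ identity $h_{xy}=\tfrac12(\unity-T_{xy})$; everything else is the standard fact that edge-transpositions of a connected graph generate $S_{\mathscr{V}}$, together with the dimension bookkeeping supplied by Lemma~\ref{lem:preliminary}.
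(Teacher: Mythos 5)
Your argument is correct, but note that the paper deliberately gives no proof of this proposition: it defers entirely to the Lieb--Mattis reference \cite{LiebMattis}, whose methods (a general ordering-of-energy-levels analysis for antiferromagnets and ferrimagnets on bipartite lattices) are much heavier than what this ferromagnetic special case requires. Your route is the standard elementary one: the spin-$\tfrac12$ identity $h_{xy}=\tfrac12(\unity-T_{xy})$ exhibits each bond as the singlet projector, so $H_{\mathscr{G}}\ge 0$ with kernel equal to the common fixed space of the edge transpositions; connectedness makes these generate all of $S_{\mathscr{V}}$, so the kernel is the symmetric subspace; and the identification of that subspace with $\Hil^{\mathrm{spin}}_{\mathscr{V}}(0)$ follows from Lemma \ref{lem:preliminary} plus the dimension count $|\mathscr{V}|+1$. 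All steps check out (the swap-operator identity, the conjugation trick producing arbitrary transpositions from a path, and the bookkeeping $\dim\Hil'_{\mathscr{V}}(m,0)=1$ for $m=0,\dots,|\mathscr{V}|$). What your approach buys is a short, self-contained proof tied to the permutation/interchange-process picture that the paper itself invokes later (in the spirit of Toth and Thomas); what the citation to \cite{LiebMattis} buys the authors is consistency with the broader motivation of the article, since that reference also supplies the antiferromagnetic ordering result displayed in Figure \ref{fig:OneExample} and generalizes to arbitrary nonnegative couplings as used in Section \ref{sec:Induct}. One small remark: your appeal to the ``well-known'' dimension of the symmetric subspace of $(\C^2)^{\otimes|\mathscr{V}|}$ could be made self-contained by noting that a symmetric vector's component in each $m$-magnon sector must be proportional to $\sum_{|X|=m}\Omega_{\mathscr{V}}(X)$, giving exactly $|\mathscr{V}|+1$ dimensions.
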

Proposition \ref{prop:LiebMattis} is one small part of a famous result of Lieb and Mattis on ``ordering of energy levels,''
 \cite{LiebMattis}.
%(Any reader not already familiar with Lieb and Mattis's paper should read it.)
We will not reprove Proposition \ref{prop:LiebMattis}.
But the reader is strongly recommended to consult \cite{LiebMattis} if not already acquainted with that article.
It provides the main motivation for the present article.

\begin{corollary}
\label{eq:psd}
For any finite graph $\mathscr{G} = (\mathscr{V},\mathscr{E})$, we have that $H_{\mathscr{G}}$ is positive semi-definite.
\end{corollary}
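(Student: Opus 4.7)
The plan is to read the conclusion off directly from Proposition \ref{prop:LiebMattis}. Recall that for a bounded operator on a finite-dimensional Hilbert space, being positive semi-definite is equivalent to being self-adjoint with non-negative spectrum, so the task reduces to two checks: self-adjointness of $H_{\mathscr{G}}$, and $\min\operatorname{spec}(H_{\mathscr{G}}) \geq 0$.

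For self-adjointness I would just unpack (\ref{eq:HamDef}). Each edge term $h_{xy}$ is a real linear combination of $\mathbbm{1}$, $S_x^{(3)} S_y^{(3)}$, $S_x^+ S_y^-$, and $S_x^- S_y^+$. By Remark \ref{rem:adj} the diagonal term $S_x^{(3)} S_y^{(3)}$ is self-adjoint, and since operators supported at distinct sites commute, $(S_x^+ S_y^-)^* = S_y^- S_x^+ = S_x^- S_y^+$, so $\frac{1}{2} S_x^+ S_y^- + \frac{1}{2} S_x^- S_y^+$ is self-adjoint. Hence $h_{xy}^* = h_{xy}$ and $H_{\mathscr{G}}^* = H_{\mathscr{G}}$. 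Then the non-negativity of the spectrum is precisely the first sentence of Proposition \ref{prop:LiebMattis}, which is stated without any connectedness hypothesis. Combining these two facts yields $H_{\mathscr{G}} \geq 0$.

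There is no serious obstacle here; the result is a one-line consequence of the proposition. If one wanted a self-contained argument that avoids Lieb--Mattis altogether, an alternative would be to observe that the two-site operator $\frac{1}{4}\mathbbm{1} - \boldsymbol{S}_x \cdot \boldsymbol{S}_y$ acts as the orthogonal projection onto the spin-$0$ (singlet) subspace of $\Hil_x \otimes \Hil_y$ (acting as $0$ on the triplet subspace), so each $h_{xy}$ is positive semi-definite on $\Hil_{\mathscr{V}}$ and the sum over edges is therefore positive semi-definite as well. Either route gives the corollary immediately.
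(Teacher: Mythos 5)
Your main argument is exactly the paper's proof: the corollary follows in one line from Proposition \ref{prop:LiebMattis}, since all eigenvalues of the (manifestly self-adjoint) $H_{\mathscr{G}}$ are nonnegative. The extra verification of self-adjointness and the alternative singlet-projection argument are correct but not needed; the paper simply cites the proposition.
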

\begin{proof}
From Proposition \ref{prop:LiebMattis} all eigenvalues are nonnegative.
\end{proof}
For operators, $A\geq B$ means $A-B$ is positive semi-definite.
Hence, for later reference, we record:
\begin{equation}
\label{eq:psdHam}
H_{\mathscr{G}}\, \geq\, 0\, .
\end{equation}
We may denote by $\ell^2(\mathscr{V})$ the finite dimensional Hilbert space of all functions $F : \mathscr{V} \to \C$
with $\|F\|^2 = \sum_{x \in \mathscr{V}} |F(x)|^2$.
Then the graph Laplacian is an operator $-\Delta_{\mathscr{G}}$ on $\ell^2(\mathscr{V})$, given by the formula 
\begin{equation}
\label{eq:GraphLap}
-\Delta_{\mathscr{G}} F(x)\, =\, \frac{1}{2}\, \sum_{y \in \mathscr{N}(\mathscr{E},x)} (F(x)-F(y))\, ,
\end{equation}
for each $x \in \mathscr{V}$, where we define 
the neighborhood of $x$ as
\begin{equation}
\label{eq:ExNeighborhoos}
\mathscr{N}(\mathscr{E},x)\, \stackrel{\mathrm{def}}{:=}\, \{y \in \mathscr{V}\, :\, (x,y) \in \mathscr{E}\}\, .
\end{equation} 
Then, for any $F \in \ell^2(\mathscr{V})$, 
\begin{equation}
\label{eq:GraphLaplacianDirichletForm}
\langle F,-\Delta_{\mathscr{G}} F\rangle\, =\, 
\frac{1}{2}\, \sum_{\{x,y\} \in \mathscr{E}}  |F(x)-F(y)|^2\, .
\end{equation}
This is related to  (\ref{eq:psdHam})
as it seems was first observed by Toth \cite{Toth},
because
for each $n \in \{0,\dots,|\mathscr{V}|\}$ there is a graph derived from $\mathscr{G}$ 
such that $H_{\mathscr{G}} \restriction \Hil^{\mathrm{mag}}(n)$ is unitarily equivalent to the 
Laplacian for that graph, if one restricts to symmetric tensors, as one does for a Bose gas.

Let us now explain the ferromagnetic ordering of energy levels property, which we call FOEL, for short.
For each $n\in\{0,\dots,\lfloor \frac{1}{2}|\mathscr{V}| \rfloor\}$, we define $\E_n(\mathscr{G})$ as the minimum energy among $n$-spin deviates:
\begin{equation}
\label{eq:EDef}
\E_n(\mathscr{G})\, =\, \min\operatorname{spec}\big(H_{\mathscr{G}}
\restriction \Hil^{\mathrm{spin}}_{\mathscr{V}}(n)\big)\, =\,
\min\left\{\lambda \in \R\, :\,  
\Hil^{\mathrm{spin}}_{\mathscr{V}}(n) \cap \ker(H_{\mathscr{G}}-\lambda \mathbbm{1})
\neq \{0\}\right\}\, .
%=\, \min_{\psi \in \Hil^{\mathrm{spin}}_{\mathscr{V}}(n) \setminus \{0\}} 
%\frac{\langle \psi, H_{\mathscr{G}}\, 2\psi \rangle}{\|\psi\|^2}\, .
\end{equation}
Then Proposition \ref{prop:LiebMattis} may be restated as follows:
$0 = \E_{0}(\mathscr{G}) \leq \E_n(\mathscr{G})$
for all $n\geq 0$,
and there is a strict inequality for all $n>0$ if $\mathscr{G}$ is connected.
\begin{definition}
\label{def:FOEL}
Given $\mathscr{G} = (\mathscr{V},\mathscr{E})$ and $n\in \{0,\dots,\lfloor \frac{1}{2}|\mathscr{V}|\rfloor\}$, we say
$\mathscr{G}$ satisfies FOEL-$n$ (strict FOEL-$n$) if $\E_{n'}(\mathscr{G}) \geq \E_n(\mathscr{G})$ for all $n' \geq n$
($\E_{n'}(\mathscr{G}) > \E_n(\mathscr{G})$ for any $n' > n$).
\end{definition}
By Proposition \ref{prop:LiebMattis} every graph $\mathscr{G}$ satisfies FOEL-$0$,
and it satisfies strict FOEL-$0$  if $\mathscr{G}$ is connected.
%
%The letters FOEL stand for ``ferromagnetic ordering of energy levels.''
The FOEL property was introduced in \cite{NachtergaeleSpitzerStarr} as a ferromagnetic analogue
of Lieb and Mattis's ``ordering of energy levels'' property \cite{LiebMattis}, which they proved for antiferromagnets and ferrimagnets on bipartite lattices.

%\subsection{Main result}

It is of interest to verify FOEL for a  collection of graphs
arising in physical models.
We note that $\Z^d$ has a natural graph structure. 
%Specifically, there is an edge set $\mathscr{E}(\Z^d)$
%
Denoting an element of $\Z^d$ as $\r= (r_1,\dots,r_d)$, the graph distance to the origin
corresponds to the $\ell^1$-norm $\|\r\|_1 = |r_1|+\dots+|r_d|$.
Setting $\mathscr{V} = \Z^d$, then $\{\r,\r'\}$ is an edge in $\mathscr{E}(\Z^d)$
for $\r,\r' \in \Z^d$ if and only if $\|\r-\r'\|_1=1$.
%The Euclidean $\ell^2$-norm is denoted as the norm without a subscript $\|x\|^2 = (x^1)^2+\dots+(x^d)^2$.
%We use the letter $\ba$ (instead of, say $\boldsymbol{x}$) and we use superscripts for the coordinates of $\ba = (\alpha^1,\dots,\alpha^d)$
%in an attempt to simplify notation in this paper.
%
Given any subset $\Lambda \subseteq \Z^d$, we will write $\mathscr{G}(\Lambda)$ for the graph
where $\mathscr{V}(\Lambda)=\Lambda$ and $\mathscr{E}(\Lambda)$ is the edge set induced from $\Z^d$:
in other words,
\begin{equation}
\label{eq:InducedEdges}
\mathscr{E}(\Lambda)\, =\, \big\{\{\r,\r'\} \subseteq \Lambda\, :\, \{\r,\r'\} \in \mathscr{E}(\Z^d) \big\}\, .
\end{equation}
When $\Lambda \subset \Z^d$ is finite,
we will write $H_{\Lambda}$ in place of $H_{\mathscr{G}(\Lambda)}$ when no confusion arises from this notation. Similarly, we write $\E_n(\Lambda)$ in place of $\E_n(\mathscr{G}(\Lambda))$.
We will say that $\Lambda$ satisfies (strict) FOEL-$n$ if $\mathscr{G}(\Lambda)$ does.

For each $d$ and $L$ in $\{1,2,\dots\}$ we define the $d$-dimensional box of sidelength $L$
$$
\mathbb{B}^d(L)\, =\,
\big\{\r = (r_1,\dots,r_d) \in \Z^d\, :\, r_1,\dots,r_d \in \{1,\dots,L\}\big\}\, .
$$
This is a subset of $\Z^d$. We will prove the FOEL properties for boxes.
\begin{theorem}[Main Result]
\label{thm:main}
For each choice of $n,d \in \{1,2,\dots\}$, there exists a finite integer $L_{n,d}^0 \in \{1,2,\dots\}$ such that 
$\Lambda = \mathbb{B}^d(L)$ satisfies strict FOEL-$n$ for all $L\geq L_{n,d}^0$.
%Moreover,
%\begin{equation}
%\label{eq:nGap}
%\lim_{N \to \infty} N^{-2/d} \E_n(\Lambda_N)\, =\,  \frac{n \pi^2}{2}\, .
%\end{equation}
\end{theorem}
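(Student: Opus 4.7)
The strategy is to sandwich the relevant minimum energies between asymptotically well-separated scales. For the lower bound I would invoke the lemma of Correggi, Giuliani and Seiringer \cite{CorreggiGiulianiSeiringer}, noted in the introduction, which states that $\E_{n'}(\mathbb{B}^d(L)) \geq C n'$ for a universal constant $C > 0$ and all $L, n' \geq 1$. In particular, $\E_{n'}(\mathbb{B}^d(L)) \geq C(n+1)$ for every $n' \geq n+1$, \emph{uniformly in} $L$. If I can couple this with an upper bound of the form $\E_n(\mathbb{B}^d(L)) = O(L^{-2})$, then for $L$ large enough (specifically, $L^2$ large compared to a constant depending only on $n$ and $d$), $\E_n(\mathbb{B}^d(L)) < C(n+1) \leq \E_{n'}(\mathbb{B}^d(L))$ for every $n' \geq n+1$, which is exactly strict FOEL-$n$.

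To produce the upper bound I would use a spin-wave trial vector. Let $\phi \in \ell^2(\mathbb{B}^d(L))$ be a normalized eigenvector of the graph Laplacian $-\Delta_{\mathscr{G}(\mathbb{B}^d(L))}$ orthogonal to the constants; by separation of variables, its eigenvalue is $\lambda_1(L) = 1 - \cos(\pi/L) \sim \pi^2/(2L^2)$. Form the magnon creation operator $A_\phi^- := \sum_x \phi(x)\, S_x^-$ and the candidate vector $\psi_n := (A_\phi^-)^n\, \Omega_{\mathbb{B}^d(L)}(\emptyset) \in \Hil^{\mathrm{mag}}(n)$. Writing $h_{xy} = \tfrac{1}{2}(\mathbbm{1} - P_{xy})$ where $P_{xy}$ transposes the two factors at sites $x$ and $y$, one sees that $\psi_n$ would be an exact eigenvector of $H_{\mathbb{B}^d(L)}$ with eigenvalue $n\lambda_1(L)$ if the magnons were non-interacting bosons; the hardcore correction comes only from edges with both endpoints among the $n$ magnon sites, and a direct computation bounds it by $c_{n,d}\, \lambda_1(L)/L^d = o(\lambda_1(L))$ for fixed $n$ and $L\to\infty$. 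Projecting onto $\Hil(n,n)$ then yields $\E_n(\mathbb{B}^d(L)) \leq n\lambda_1(L)\,(1+o(1))$.

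Combining, for each fixed $n$ and $d$, I choose $L^0_{n, d}$ large enough that $n\lambda_1(L)(1+o(1)) < C(n+1)$ for all $L \geq L^0_{n, d}$, which is possible because $\lambda_1(L) = O(L^{-2})$. Then strict FOEL-$n$ holds on $\mathbb{B}^d(L)$ for every $L \geq L^0_{n, d}$.

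The main obstacle lies in the projection step. The naive trial state $\psi_n$ lives in $\Hil^{\mathrm{mag}}(n)$ but is not in $\ker S^+_{\mathbb{B}^d(L)}$ for $n \geq 2$, since the commutator $[S^+_{\mathbb{B}^d(L)}, A_\phi^-] = 2\sum_x \phi(x)\, S_x^{(3)}$ is nonzero on states with magnons present. Via the decomposition $\Hil^{\mathrm{mag}}(n) = \bigoplus_{n'=0}^{n} \Hil(n, n')$ from Lemma~\ref{lem:preliminary}, I would write $\psi_n = \sum_{n'} \psi_n^{(n')}$ and must show that the lower-spin components $\psi_n^{(n')}$ for $n' < n$ contribute negligibly to both the norm and the energy. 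Overlaps of $\psi_n$ with basis vectors of $\Hil(n, n')$ reduce to elementary symmetric polynomials in the values $\phi(x)$; the combination of $\sum_x \phi(x) = 0$ with $\|\phi\|_\infty = O(L^{-d/2})$ forces these overlaps to be small in $L$. Turning this into a precise estimate, so that the $\Hil(n,n)$-component carries a positive fraction of the norm with Rayleigh quotient $n\lambda_1(L)(1+o(1))$, is the most delicate part of the argument, and is precisely the "spreading-out plus ideal-Bose-gas" step foreshadowed in the introduction.
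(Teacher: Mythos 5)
Your argument hinges on a lower bound that is not true as stated. You quote the Correggi--Giuliani--Seiringer lemma as $\E_{n'}(\mathbb{B}^d(L)) \geq C n'$ \emph{uniformly in $L$}; but this would contradict the fact that the spectral gap $\E_1(\mathbb{B}^d(L)) \sim \gamma L^{-2}$ tends to zero (indeed the paper's own equation (\ref{eq:EnAsymptoticAlgebraic}) gives $\E_{n'} \to 0$ for every fixed $n'$). The correct form of their bound is $\E_{n'}(\mathbb{B}^d(L)) \geq C n' L^{-2}$ with a constant $C$ that is \emph{not} sharp, i.e.\ $C < \gamma$. With the corrected bound your comparison becomes $n\gamma L^{-2}(1+o(1)) \stackrel{?}{<} C(n+1)L^{-2}$, which requires $n\gamma < C(n+1)$ and therefore fails once $n \geq C/(\gamma - C)$. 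A repaired version of your route would have to split the competitors: use the uniform bound $Cn'L^{-2}$ only for $n' > n\gamma/C$, and invoke sharp asymptotics $\E_{n'} \sim n'\gamma L^{-2}$ for each of the finitely many $n' \in \{n+1,\dots,\lceil n\gamma/C\rceil\}$ — but you neither state this nor prove the sharp asymptotics for those $n'$. The paper avoids importing any external bound: the control of $\E_{n'}$ for \emph{all} $n' > n$ simultaneously (including $n'$ of order $L^d$, where spin-wave theory says nothing) is obtained from the inductive ``new lows'' argument with diluted coupling constants (Proposition \ref{prop:inductiveNewLow} and Lemma \ref{lem:Zaftig}), which reduces everything to the asymptotics of $\E_n$ alone along a growing family of quasi-boxes $\Lambda(d,N)$. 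This is the central idea of the proof and it is absent from your proposal.

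Your upper bound is also only half of what is needed, and the half you describe is left unfinished at exactly the hard point. The trial state $(A_\phi^-)^n\Omega$ is indeed the right one (it is the paper's $\widetilde{\Psi}^{(n)}_{d,N}(\boldsymbol{\delta}_{d,1},\dots,\boldsymbol{\delta}_{d,1})$, cf.\ Lemma \ref{lem:mathfrakm}), but you correctly observe that it is not in $\Hil(n,n)$ and then defer the projection estimate. The paper sidesteps the projection entirely by a dimension-counting device (Lemma \ref{lem:specMon}): $\E_n$ is the first energy at which the spectral subspace in $\Hil^{\mathrm{mag}}_{\mathscr{V}}(n)$ becomes strictly larger than the one in $\Hil^{\mathrm{mag}}_{\mathscr{V}}(n-1)$. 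Making that count requires not just an upper bound on $\E_n$ but a complete description of the low-energy spectrum in the $n$- and $(n-1)$-magnon sectors up to energy $(n+\epsilon)\gamma L^{-2}$ — this is Theorem \ref{thm:SpecGather}, whose proof (the trace-type and extension-type discrete Sobolev bounds of Sections \ref{sec:trace0} and \ref{sec:extension0}) constitutes the analytic core of the paper. So the ``most delicate part'' you flag is not a technicality to be filled in; together with the missing lower bound for all $n' > n$, it is essentially the entire content of the proof.
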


In Figure \ref{fig:OneExample} we have plotted the numerical eigenvalues of $H_{\mathscr{G}}$
against the spin $s = \frac{1}{2}|\mathscr{V}|-n$, for two examples
of graphs: a chain $\{1,\dots,8\}$ (i.e., $B^1(8)$)
viewed as a subgraph of $\Z$, and a $3\times 3$ box (i.e., $B^2(3)$) viewed as a subgraph of $\Z^2$.
The figure shows that they both satisfy FOEL-$n$ for all $n$.
These are both boxes, one of dimension $d=1$ and one of dimension $d=2$.
Our theorem applies to sufficiently large boxes.

%In the outlook section, we will also describe how to conclude the same result for sufficiently large tori,
%given the results we will prove for boxes.
%This amounts to a change of boundary conditions: periodic boundary conditions instead of free boundary
%condition.
%See Section \ref{sec:Outlook} for further discussion.

%In Section \ref{sec:Outline} we state the two key steps which together lead to the proof of Theorem \ref{thm:main} our Main Result.
%These are: (1) an inductive argument for FOEL; and (2) a proof that for a fixed $m$, considering $H_{\B^d(L)}$ restricted to $\Hil^{\mathrm{mag}}_{\B^d(L)}(m)$,
%the spectrum below a cut-off $CL^{-2}$ (for any fixed $C$) is well-approximated by the linear spin wave model, in the limit $L \to \infty$.
%In Section \ref{sec:Combine} we combine these two facts to give the actual proof of Theorem \ref{thm:main}.
%But this still leaves the task of proving the key steps (1) and (2).
%We prove (1) in Section \ref{sec:Inductive} and we prove (2) in Sections \ref{sec:FillIn} and \ref{sec:TraceBd}.
%Along the way, there are a number of lemmas that we prove in appendices, because they are of a more elementary nature
%which some readers may find pedantic.
%%modulo certain technical results, which are proved in the remainder of the paper.
%%The proofs of those technical results comprise the majority of the paper.
%%The results in Section \ref{sec:Outline} also comprise some important amplifications of our main result.
%%So that section should be consulted for subsidiary results.

\begin{center}
\begin{figure}
\begin{tikzpicture}[xscale=0.9,yscale=0.9]
\begin{scope}[yscale=0.5]
\draw (-1.75,1.5) -- (-1.75,6.75);
\fill (-1.75,1.5) circle (0.75mm and 1.25mm);
\fill (-1.75,2.25) circle (0.75mm and 1.25mm);
\fill (-1.75,3) circle (0.75mm and 1.25mm);
\fill (-1.75,3.75) circle (0.75mm and 1.25mm);
\fill (-1.75,4.5) circle (0.75mm and 1.25mm);
\fill (-1.75,5.25) circle (0.75mm and 1.25mm);
\fill (-1.75,6) circle (0.75mm and 1.25mm);
\fill (-1.75,6.75) circle (0.75mm and 1.25mm);
\draw (-1.75,7.5) node[] {$\mathbb{B}^1(8)$};
\draw[thin] (-0.25,0) -- (4.35,0) node[right] {\small $s$};
\fill (4.15,-0.2) -- (4.35,0) -- (4.15,0.2);
\draw[thin] (4,0.2) -- (4,-0.2) node[below] {\small $4$};
\draw[thin] (3,0.2) -- (3,-0.2) node[below] {\small $3$};
\draw[thin] (2,0.2) -- (2,-0.2) node[below] {\small $2$};
\draw[thin] (1,0.2) -- (1,-0.2) node[below] {\small $1$};
%\draw[thin] (0,0.2) -- (0,-0.2) node[below] {\small $4$};
\draw[thin] (0,-0.25) node[below] {\small $0$} -- (0,11) node[left] {\small $E$};
\draw[thin] (0.1,2) -- (-0.1,2) node[left] {\small $1$};
\draw[thin] (0.1,4) -- (-0.1,4) node[left] {\small $2$};
\draw[thin] (0.1,6) -- (-0.1,6) node[left] {\small $3$};
\draw[thin] (0.1,8) -- (-0.1,8) node[left] {\small $4$};
\draw[thin] (0.1,10) -- (-0.1,10) node[left] {\small $5$};
\fill (-0.1,10.75) -- (0,11) -- (0.1,10.75);
\fill (4,0) circle (0.5mm and 1mm);
\foreach \y in {
0.1522, 0.5858, 1.2346, 2.0000, 2.7654, 3.4142}
{
\fill (3,\y) circle (0.5mm and 1mm);
}
\foreach \y in {0.3213, 0.7133, 1.1274, 1.4213, 1.5474, 1.8710, 2.2064, 2.7894, 2.9915, 3.5960, 3.6571, 4.1019, 4.2834, 4.4021, 4.7435, 5.1048, 5.5762, 5.9513, 6.4313, 7.1632}
{
\fill (2,\y) circle (0.5mm and 1mm);
}
\foreach \y in {0.5126, 0.8531, 1.1591, 1.5982, 1.9458, 2.3010, 2.4293, 2.8584, 3.1639, 3.2446, 3.7405, 3.9341, 4.1335, 4.3944, 4.4720, 4.7333, 4.9549, 4.9584, 5.4655, 5.4934, 5.9565, 6.3488, 6.8494, 7.1977, 7.6135, 7.7163, 8.5075, 9.4645}
{
\fill (1,\y) circle (0.5mm and 1mm);
}
\foreach \y in {0.7350, 1.7923, 2.6656, 2.8043, 3.5311, 4.2580, 4.7426, 4.8969, 5.7420, 6.2900, 6.7904, 7.3344, 8.1676, 10.2499}
{
\fill (0,\y) circle (0.5mm and 1mm);
}
\draw[dotted] (4,0) -- (3,0.1522) -- (2,0.3213) -- (1,0.5126) -- (0,0.7350);
\draw[dashed] (4,0) -- (3, 3.4142) -- (2,7.1632) -- (1,9.4645) -- (0,10.2499);
\end{scope}
\begin{scope}[xshift=6.25cm,yscale=0.35]
\draw[thin] (-0.25,0) -- (4.85,0) node[right] {\small $s$};
\fill (4.65,-0.2) -- (4.85,0) -- (4.65,0.2);
\draw[thin] (4.5,0.2) -- (4.5,-0.2) node[below] {\small $4.5$};
\draw[thin] (3.5,0.2) -- (3.5,-0.2) node[below] {\small $3.5$};
\draw[thin] (2.5,0.2) -- (2.5,-0.2) node[below] {\small $2.5$};
\draw[thin] (1.5,0.2) -- (1.5,-0.2) node[below] {\small $1.5$};
\draw[thin] (0.5,0.2) -- (0.5,-0.2) node[below] {\small $0.5$};
%\draw[thin] (0,0.2) -- (0,-0.2) node[below] {\small $4$};
\draw[thin] (0,-0.25) -- (0,16) node[left] {\small $E$};
\draw[thin] (0.1,4) -- (-0.1,4) node[left] {\small $2$};
\draw[thin] (0.1,8) -- (-0.1,8) node[left] {\small $4$};
\draw[thin] (0.1,12) -- (-0.1,12) node[left] {\small $6$};
\fill (-0.1,15.75) -- (0,16) -- (0.1,15.75);
\fill (4.5,0) circle (0.5mm and 1mm);
\foreach \y in {    1.0000,    1.0000,    2.0000,   3.0000,   3.0000,    4.0000,    4.0000,    6.0000}
{
\fill (3.5,\y) circle (0.5mm and 1mm);
}
\foreach \y in {     1.6473,    2.0000,    2.3526,    2.5060,    2.5060,    3.0000,
    4.0000,    4.0000,    4.0000,    4.0000,    4.4755,    4.5405,    5.0000,    5.0000,
    5.7205,    5.8901,    5.8901,    6.0000,    6.8912,    7.0000,    7.0000,    7.0000,
    8.0000,    8.1567,    8.6039,    8.6039,   10.2157}
{
\fill (2.5,\y) circle (0.5mm and 1mm);
}
\foreach \y in {    2.3697,    2.3697,    2.5307,    3.5174,    3.6244,    3.6244,
    3.7086,    4.4910,    4.4910,    4.7472,    4.7487,    4.7487,    4.7500,    5.0000,
    5.0000,    5.0000,    5.5171,    6.3116,    6.4157,    6.4157,    6.5117,    6.5117,
    6.6213,    7.0000,    7.0000,    7.0000,    7.2867,    7.2867,    7.4408,    7.9222,
    8.0000,    8.0000,    8.0000,    8.0000,    8.1979,    8.1979,    8.8425,    8.8425,
    9.4152,    9.7291,    9.9586,   11.0492,   11.0492,   11.1421,   11.1710,   11.4626,
   11.4626,   13.5173}
{
\fill (1.5,\y) circle (0.5mm and 1mm);
}
\foreach \y in {    2.8606,    3.5468,    3.5468,    3.7096,    4.4988,    4.5242,
    5.0384,    5.0384,    5.3462,    5.4920,    5.6760,    5.6760,    6.2151,    6.2176,
    6.7055,    6.8322,    6.8322,    7.5405,    7.5751,    7.5751,    7.6743,    8.0000,
    8.0000,    8.2650,    8.2650,    8.7824,    9.0000,    9.3117,    9.3117,    9.5785,
    9.5785,    9.8867,    9.9231,   10.4758,   10.9169,   10.9169,   11.2764,   11.9243,
   12.4482,   13.2594,   13.2594,   15.4987}
{
\fill (0.5,\y) circle (0.5mm and 1mm);
}
\draw[dotted] (4.5,0) -- (3.5,1) -- (2.5,1.6473) -- (1.5,2.3697) -- (0.5,2.8606);
\draw[dashed] (4.5,0) -- (3.5,6.0000) -- (2.5,10.2157) -- (1.5,13.5173) -- (0.5,15.4987);
\draw[thin] (5.75,5) rectangle (7.25,8.5) (6.5,5) -- (6.5,8.5) (5.75,6.75) -- (7.25,6.75);
\fill (5.75,6.75) circle (0.75mm and 1.25mm);
\fill (6.5,6.75) circle (0.75mm and 1.25mm);
\fill (7.25,6.75) circle (0.75mm and 1.25mm);
\fill (5.75,8.5) circle (0.75mm and 1.25mm);
\fill (6.5,8.5) circle (0.75mm and 1.25mm);
\fill (7.25,8.5) circle (0.75mm and 1.25mm);
\fill (5.75,5) circle (0.75mm and 1.25mm);
\fill (6.5,5) circle (0.75mm and 1.25mm);
\fill (7.25,5) circle (0.75mm and 1.25mm);
\draw (6.5,9.75) node[] {$\mathbb{B}^2(3)$};
\end{scope}
\end{tikzpicture}
\begin{caption}{We have plotted the eigenvalues for $H_{\B^d(L)}$ for two graphs.
On the left is $\mathbb{B}^1(8)$. On the right is $\mathbb{B}^2(3)$.
(The graphs are indicated on the sides.)
On the horizontal axes, we have plotted the spin parameter $s=\frac{1}{2} L^d-n$.
In each vertical column above this, we have plotted all the energy eigenvalues \
$E \in \operatorname{spec}(H_{\B^d(L)} \restriction \Hil_{\B^d(L)}^{\mathrm{spin}}(n))$. The lowest energy eigenvalue
in each column is $\E_n(\B^d(L))$. We have plotted a dotted line to aid the eye in comparing these values for different values of $s=\frac{1}{2}L^d-n$.
In both of these graphs $\E_n(\B^d(L))$ is an increasing function of $n$ (decreasing function of $s$). That property is FOEL.
A related property is Lieb and Mattis's antiferromagnetic {\em ordering of energy levels}.
The Lieb-Mattis line is dashed.
\label{fig:OneExample}}
\end{caption}
\end{figure}
\end{center}

\subsection{Outline for the rest of the paper}

In Section \ref{sec:Outline}, we state the two key steps for the proof of Theorem \ref{thm:main}:
an inductive argument for proving FOEL, and an approximate analysis of low energy eigenvectors of the Hamiltonian.
In Section \ref{sec:Proof}, we then give the conditional proof of the main theorem, conditional on the two key steps.
The rest of the paper is dedicated to proving the two key steps.

In Section \ref{sec:Induct}, we prove the first key step, which is an inductive argument for the proof of FOEL.
It relies on some important information for the low energy eigenvectors of the Hamiltonian: not just
one one graph, but 
on an increasing sequence of graphs.
This is what the second key step is devoted to proving.
In Section \ref{sec:KeyII} we prove the second key step.
This is a partial verification of the linear spin wave approximation, at very low energies.
In turn the proof in Section \ref{sec:KeyII} relies on two discrete analogues of Sobolev inequalities,
which are proved in the following two sections.
The first discrete Sobolev inequality is a trace theorem type bound, proved in Section \ref{sec:Trace}.
The second discrete Sobolev inequality is an extension theorem, which is proved in \ref{sec:Extension}.
%In Section \ref{sec:LSW}, we give a discussion of our result in the broader context of the linear spin wave approximation.
%In Section \ref{sec:outlook}, we give a summary and outlook.
A number of elementary results are relegated to appendices, which appear at the end of the paper.

{\small
\tableofcontents
}

%%%%%%%%%%%%%%%%%%%%%%%%%
%%%%
%%%%  THIS IS MATERIAL WE COMMENTED OUT
%%%%
%%%%%%%%%%%%%%%%%%%%%%%%%

%\begin{definition}
%\label{def:BoxTemplate}
%For a given $d$,
%we say that a sequence of subsets $\Lambda_1 \subset \Lambda_2 \subset \dots \subset \Z^d$ is 
%a ``box template'' of dimension $d$ if $|\Lambda_N|=N$ for each $N$, and $\Lambda_{L^d} = \B^d_L$ for each $L$.
%\end{definition}
%We frequently omit the qualifier ``of dimension $d$,'' when the dimension is obvious from context.
%The dimension will be fixed. All our results hold for all $d \in \{1,2,\dots\}$.

%
%
%\begin{theorem}[Main Result]
%\label{thm:main}
%For each $d \in \{1,2,\dots\}$, a$\Lambda_1 \subset \Lambda_2 \subset \dots \subset \Z^d$
%and a function $\mathcal{N} : \{1,2,\dots\} \to \{1,2,\dots\}$ with
%$$
%N \geq \mathcal{N}(n) \quad \Rightarrow \quad \text{$\Lambda_N$ satisfies FOEL-$n$.}
%$$
%%Moreover,
%%\begin{equation}
%%\label{eq:nGap}
%%\lim_{N \to \infty} N^{-2/d} \E_n(\Lambda_N)\, =\,  \frac{n \pi^2}{2}\, .
%%\end{equation}
%\end{theorem}

%%%%%%%%%%%%%%%%%%%%%%%%%
%%%%  END
%%%%%%%%%%%%%%%%%%%%%%%%%

\section{The two key steps of the proof}
\label{sec:Outline}

%We wish to state the main ideas for the proof now, leaving the proofs of the technical details
%to later sections. 
%The technical details in those sections comprise the majority of the paper.

The proof of Theorem \ref{thm:main} relies on two main ideas.
The first key idea is an inductive argument for FOEL.
To obtain the induction step for that argument, we need some asymptotic approximation
to the low energy spectrum. 
The second main idea is a method to obtain that asymptotic approximation.
It is a mathematical verification of part of the physicists' {\em linear spin wave approximation}.
This follows by using some discrete versions of Sobolev type inequalities
to show that states with low energy are {\em spread out},
which seems to be a key physical assumption.

\subsection{Step I: An Inductive Argument for Establishing FOEL}
\label{subsec:Induct}

The main idea for an inductive proof of FOEL is the following.
\begin{proposition}
\label{prop:inductive1}
Suppose that $\mathscr{G} = (\mathscr{V},\mathscr{E})$ is a finite graph, and that
$\mathscr{G}' = (\mathscr{V}',\mathscr{E}')$ is a graph such that $\mathscr{V}' = \mathscr{V} \cup \{x'\}$ for a 
single vertex $x'$ not in $\mathscr{V}$, and such that $\mathscr{E} \subseteq \mathscr{E}'$. 
Suppose 
that $\mathscr{G}$ satisfies FOEL-$n$,
for some $n \in \{0,1,\dots,\lfloor \frac{1}{2}|\mathscr{V}| \rfloor\}$.
Suppose, further that the following condition is satisfied
$$
\E_{n}(\mathscr{G}')\, \leq\, \E_n(\mathscr{G})\, .
$$
Then $\mathscr{G}'$ also satisfies FOEL-$n$.
\end{proposition}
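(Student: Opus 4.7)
The plan is to exploit the operator inequality $H_{\mathscr{G}'}\ge H_{\mathscr{G}}\otimes\mathbbm{1}_{x'}$ together with standard $\mathrm{SU}(2)$ representation theory to reduce FOEL-$n$ for $\mathscr{G}'$ to the FOEL-$n$ already assumed for $\mathscr{G}$. Identifying $\Hil_{\mathscr{V}'}=\Hil_{\mathscr{V}}\otimes\Hil_{x'}$, one has
$$H_{\mathscr{G}'}=H_{\mathscr{G}}\otimes\mathbbm{1}_{x'}+\sum_{\{x,y\}\in\mathscr{E}'\setminus\mathscr{E}}h_{xy},$$
and since each $h_{xy}$ is positive semi-definite (Corollary \ref{eq:psd} applied to the single-edge graph), $H_{\mathscr{G}'}\ge H_{\mathscr{G}}\otimes\mathbbm{1}_{x'}$ as operators on $\Hil_{\mathscr{V}'}$.

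Fix $n'>n$ and let $\psi'\in\Hil^{\mathrm{spin}}_{\mathscr{V}'}(n')$ be a unit eigenvector attaining the minimum energy $\E_{n'}(\mathscr{G}')$. Because $H_{\mathscr{G}'}$ commutes with $S^\pm_{\mathscr{V}'}$, the eigenspace at energy $\E_{n'}(\mathscr{G}')$ within $\Hil^{\mathrm{spin}}_{\mathscr{V}'}(n')$ is $\mathrm{SU}(2)$-invariant; applying $S^+_{\mathscr{V}'}$ enough times (and renormalizing) I may assume $\psi'\in\Hil_{\mathscr{V}'}(n',n')=\Hil^{\mathrm{mag}}_{\mathscr{V}'}(n')\cap\ker(S^+_{\mathscr{V}'})$ is a highest weight vector. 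Writing $\psi'=\psi_\uparrow\otimes\ket{\uparrow}+\psi_\downarrow\otimes\ket{\downarrow}$ and decomposing $S^{(3)}_{\mathscr{V}'}$ and $S^+_{\mathscr{V}'}$ over the tensor product, the magnetization constraint forces $\psi_\uparrow\in\Hil^{\mathrm{mag}}_{\mathscr{V}}(n')$ and $\psi_\downarrow\in\Hil^{\mathrm{mag}}_{\mathscr{V}}(n'-1)$, while $S^+_{\mathscr{V}'}\psi'=0$ yields the two identities $\psi_\downarrow=-S^+_{\mathscr{V}}\psi_\uparrow$ and $(S^+_{\mathscr{V}})^2\psi_\uparrow=0$.

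The key representation-theoretic observation is that $(S^+_{\mathscr{V}})^2\psi_\uparrow=0$, combined with $\psi_\uparrow\in\Hil^{\mathrm{mag}}_{\mathscr{V}}(n')$, confines $\psi_\uparrow$ to $\Hil_{\mathscr{V}}(n',n'-1)\oplus\Hil_{\mathscr{V}}(n',n')$: in an irreducible spin-$s$ component with $s=\frac{1}{2}|\mathscr{V}|-n''$, a magnetization-$M=s-(n'-n'')$ vector is annihilated by $(S^+)^2$ precisely when $n'-n''\le 1$. The $n''=n'$ piece is already highest weight, hence killed by $S^+_{\mathscr{V}}$, so $\psi_\downarrow=-S^+_{\mathscr{V}}\psi_\uparrow\in\Hil_{\mathscr{V}}(n'-1,n'-1)$. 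Since $n'-1\ge n$, every spin-sector component of $\psi_\uparrow$ and $\psi_\downarrow$ has spin deviate at least $n$; FOEL-$n$ for $\mathscr{G}$ together with the fact that $H_{\mathscr{G}}$ preserves each spin sector then gives
$$\langle\psi_\uparrow,H_{\mathscr{G}}\psi_\uparrow\rangle\ge\E_n(\mathscr{G})\|\psi_\uparrow\|^2\quad\text{and}\quad\langle\psi_\downarrow,H_{\mathscr{G}}\psi_\downarrow\rangle\ge\E_n(\mathscr{G})\|\psi_\downarrow\|^2.$$
Summing these, using $\|\psi_\uparrow\|^2+\|\psi_\downarrow\|^2=1$, and applying the operator inequality of the first paragraph yields $\E_{n'}(\mathscr{G}')\ge\E_n(\mathscr{G})$; the hypothesis $\E_n(\mathscr{G}')\le\E_n(\mathscr{G})$ then gives $\E_{n'}(\mathscr{G}')\ge\E_n(\mathscr{G}')$, which is FOEL-$n$ for $\mathscr{G}'$.

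The principal delicate point is the spin-content analysis in the third paragraph; once the action of the raising operators is carefully tracked via Lemma \ref{lem:preliminary}, the rest of the argument is a short bookkeeping computation.
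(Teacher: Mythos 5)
Your proposal is correct and follows essentially the same route as the paper: the operator inequality $H_{\mathscr{G}'}\ge H_{\mathscr{G}}\otimes\mathbbm{1}_{\C^2}$, the observation that a spin-deviate-$n'$ vector of $\mathscr{V}'$ decomposes into pieces of $\mathscr{V}$-spin-deviate $n'$ and $n'-1$ (which the paper cites as addition of angular momenta, Lemma \ref{lem:InductiveRefined}, and you verify by hand via the highest-weight condition $(S^+_{\mathscr{V}})^2\psi_\uparrow=0$), and FOEL-$n$ for $\mathscr{G}$ applied to both pieces. The bookkeeping at the end is identical to the paper's.
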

This proposition was originally proved in \cite{KomaNachtergaele} for the special case of $n=1$.
In \cite{NachtergaeleSpitzerStarr}, the present authors gave the argument generalizing the proof
to higher $n$.
In Section \ref{sec:Induct} we will  state and prove a generalization of this fact.
The argument from \cite{KomaNachtergaele} was also, independently, rediscovered in \cite{HandjaniJungreis},
slightly later.
%We will give some more historical discussion in section \ref{sec:Induct}.
\begin{corollary}
\label{cor:inductive1}
For any $n \in \N$ and for any $N \in \{n+1,n+2,\dots\}$, suppose that $\mathscr{G}_{2n},\mathscr{G}_{2n+1},\dots,\mathscr{G}_{N}$ forms a family of graphs, 
satisfying the following conditions.
\begin{itemize}
\item[(i)] For each $k \in \{2n,\dots,N\}$, we have that $\mathscr{G}_k = (\mathscr{V}_k,\mathscr{E}_k)$ satisfies $|\mathscr{V}_k|=k$.
\item[(ii)] For each $k \in \{2n,\dots,N-1\}$, we have that 
\begin{itemize}
\item[(a)] $\mathscr{V}_{k} \subset \mathscr{V}_{k+1}$, 
\item[(b)] $\mathscr{E}_k \subset \mathscr{E}_{k+1}$,
\item[(c)] $|\mathscr{V}_{k+1} \setminus \mathscr{V}_k| = 1$.
\end{itemize}
\item[(iii)] For each $k \in \{2n,\dots,N-1\}$, we have that $\E_n(\mathscr{G}_{k+1}) \leq \E_n(\mathscr{G}_{k})$. 
\end{itemize}
Then $\mathscr{G}_k$ satisfies FOEL-$n$ for each $k \in \{2n,\dots,N\}$.
\end{corollary}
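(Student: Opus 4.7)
The plan is to prove this corollary by a straightforward induction on $k \in \{2n, 2n+1, \dots, N\}$, with Proposition \ref{prop:inductive1} supplying the induction step and the base case being essentially vacuous.

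For the base case $k = 2n$, I would observe that $|\mathscr{V}_{2n}| = 2n$, so $\lfloor \tfrac{1}{2}|\mathscr{V}_{2n}| \rfloor = n$. Thus, in Definition \ref{def:FOEL}, the set of indices $n' \geq n$ with $n' \leq \lfloor \tfrac{1}{2}|\mathscr{V}_{2n}|\rfloor$ reduces to the singleton $\{n\}$, and FOEL-$n$ for $\mathscr{G}_{2n}$ is vacuously satisfied since $\E_n(\mathscr{G}_{2n}) \geq \E_n(\mathscr{G}_{2n})$ is trivial. (One should also note that $\E_n(\mathscr{G}_{2n})$ is well-defined because $\Hil^{\mathrm{spin}}_{\mathscr{V}_{2n}}(n)$ is nonzero for $n \leq \tfrac{1}{2}|\mathscr{V}_{2n}|$, by Lemma \ref{lem:preliminary}(5).)

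For the inductive step, I would assume $\mathscr{G}_k$ satisfies FOEL-$n$ for some $k$ with $2n \leq k \leq N-1$, and aim to deduce the same for $\mathscr{G}_{k+1}$. Since $|\mathscr{V}_{k+1}| = k+1 \geq 2n+1 > 2n$, the value $n$ is still within the range $\{0,\dots,\lfloor \tfrac{1}{2}|\mathscr{V}_{k+1}|\rfloor\}$ where FOEL-$n$ is defined. Conditions (ii)(a)--(c) say that $\mathscr{V}_{k+1}$ is obtained from $\mathscr{V}_k$ by adjoining a single new vertex $x' \in \mathscr{V}_{k+1} \setminus \mathscr{V}_k$, and that $\mathscr{E}_k \subseteq \mathscr{E}_{k+1}$; these are precisely the structural hypotheses of Proposition \ref{prop:inductive1}. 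Condition (iii), $\E_n(\mathscr{G}_{k+1}) \leq \E_n(\mathscr{G}_k)$, supplies the remaining energy comparison hypothesis. Thus Proposition \ref{prop:inductive1} applies directly and yields FOEL-$n$ for $\mathscr{G}_{k+1}$.

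The only real obstacle, and it is a minor bookkeeping point rather than a mathematical obstruction, is to verify that the assumption $n \leq \lfloor \tfrac{1}{2}|\mathscr{V}_k|\rfloor$ of Proposition \ref{prop:inductive1} is met at every stage; this is ensured from the outset by starting the induction at $k = 2n$ and only ever enlarging the vertex set. All the substantive content of the corollary is concentrated in Proposition \ref{prop:inductive1}; the corollary itself is simply its iterated form along the specified chain of graphs.
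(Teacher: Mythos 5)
Your proof is correct and follows exactly the same route as the paper: the base case $k=2n$ is trivial because $\lfloor\tfrac{1}{2}|\mathscr{V}_{2n}|\rfloor=n$ makes FOEL-$n$ vacuous, and the inductive step is a direct application of Proposition \ref{prop:inductive1} using hypothesis (iii). Nothing is missing.
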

\begin{proof}
According to Definition \ref{def:FOEL}, the graph $\mathscr{G}_{2n}$ trivially satisfies FOEL-$n$, since $\frac{1}{2}|\mathscr{V}_n|  = n$ so that the only choice for $n'$ greater than or equal to $n$ (and less than or equal to $\frac{1}{2}|\mathscr{V}_n|$) is 
$n$, itself.
But then, using Proposition \ref{prop:inductive1}, we may establish that $\mathscr{G}_k$ satisfies FOEL-$n$ for all $k \in \{2n,\dots,N\}$, with $k=2n$ just established as the initial step.
\end{proof}
This corollary was used in \cite{NachtergaeleSpitzerStarr,NachtergaeleStarr} and \cite{NachtergaeleNgStarr} to prove FOEL-$n$ for all $n$
in  one-dimensional models.
Let us say that graphs $\mathscr{G}_{2n},\mathscr{G}_{2n+1},\dots,\mathscr{G}_{N}$ satisfying conditions (i) and (ii) from the hypotheses of Corollary \ref{cor:inductive1}
are a ``growing family of graphs.''
\begin{proposition}
\label{prop:inductiveNewLow}
Suppose $n \in \N$ and $N \in \{n+1,n+2,\dots\}$ are fixed. Suppose  $\mathscr{G}_{2n},\mathscr{G}_{2n+1},\dots,\mathscr{G}_{N}$ are a growing family
of graphs.
In place of condition (iii), suppose that the following holds:
\begin{itemize}
\item[(iii')] $\min\{\E_n(\mathscr{G}_k)\, :\, k \in \{2n,\dots,N\}\} = \E_n(\mathscr{G}_N)$.
\end{itemize}
Then
\begin{equation}
\label{eq:NewLowConclusion}
\E_n(\mathscr{G}_N)\, \leq\, \min\{\E_r(\mathscr{G}_k)\, :\, k \in \{2n,\dots,N\}\, ,\ r \in \{n,\dots,\lfloor \textstyle{\frac{1}{2}}|\mathscr{V}_k| \rfloor\}\, .
\end{equation}
In particular, $\mathscr{G}_N$ satisfies FOEL-$n$.
\end{proposition}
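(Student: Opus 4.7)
The plan is to prove, by induction on $k$ running from $2n$ up to $N$, the intermediate inequality
\begin{equation*}
\E_r(\mathscr{G}_k)\, \ge\, \min\{\E_n(\mathscr{G}_j)\, :\, 2n \le j \le k\}
\end{equation*}
for every admissible $r \in \{n,\dots,\lfloor|\mathscr{V}_k|/2\rfloor\}$. The desired conclusion \eqref{eq:NewLowConclusion} is then immediate from (iii'), since the right-hand side of the intermediate inequality is bounded below by $\min_{2n \le j \le N}\E_n(\mathscr{G}_j) = \E_n(\mathscr{G}_N)$; and the ``in particular, $\mathscr{G}_N$ satisfies FOEL-$n$'' assertion is the specialization $k=N$.

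The base case $k = 2n$ is a tautology, since $r=n$ is the only admissible choice and $\E_n(\mathscr{G}_{2n}) \ge \E_n(\mathscr{G}_{2n})$. For the inductive step, I plan to invoke the following key lemma, a quantitative sharpening of Proposition \ref{prop:inductive1}: for any one-vertex extension $\mathscr{G}'$ of $\mathscr{G}$ and any $r \in \{n,\dots,\lfloor|\mathscr{V}'|/2\rfloor\}$,
\begin{equation*}
\E_r(\mathscr{G}')\, \ge\, \min\bigl(\E_r(\mathscr{G}),\, \E_n(\mathscr{G}')\bigr),
\end{equation*}
with the convention $\E_r(\mathscr{G}) = +\infty$ if $r > \lfloor|\mathscr{V}|/2\rfloor$. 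This recovers Proposition \ref{prop:inductive1}: when $\mathscr{G}$ satisfies FOEL-$n$ and $\E_n(\mathscr{G}') \le \E_n(\mathscr{G})$, the right-hand side equals $\E_n(\mathscr{G}')$, yielding FOEL-$n$ for $\mathscr{G}'$. Granting the key lemma, applying it to $(\mathscr{G}_{k-1},\mathscr{G}_k)$ and combining with the induction hypothesis gives
\begin{equation*}
\E_r(\mathscr{G}_k)\, \ge\, \min\bigl(\E_r(\mathscr{G}_{k-1}),\, \E_n(\mathscr{G}_k)\bigr)\, \ge\, \min\bigl(\min_{2n \le j \le k-1}\E_n(\mathscr{G}_j),\, \E_n(\mathscr{G}_k)\bigr)\, =\, \min_{2n \le j \le k}\E_n(\mathscr{G}_j),
\end{equation*}
closing the induction.

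The main obstacle is proving the key lemma itself. I expect it to emerge from essentially the same $\mathrm{SU}(2)$-representation-theoretic analysis that established Proposition \ref{prop:inductive1} in \cite{KomaNachtergaele,NachtergaeleSpitzerStarr}, reworked to extract a two-sided lower bound rather than the conditional FOEL-$n$ conclusion. The strategy is to take a ground state $\phi$ of $H_{\mathscr{G}'}$ in $\Hil^{\mathrm{spin}}_{\mathscr{V}'}(r)$, chosen to be a highest-weight vector, and split it via the Clebsch--Gordan decomposition induced by $\Hil_{\mathscr{V}'} \cong \Hil_{\mathscr{V}} \otimes \Hil_{x'}$ into a component built from the spin-$r$ sector on $\mathscr{V}$ (coupled with the new spin up) and a component built from the spin-$(r-1)$ sector on $\mathscr{V}$ (coupled with the new spin down). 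A variational comparison---using $h_{xy} \ge 0$ from \eqref{eq:psdHam} to control the cross terms in $\langle\phi,H_{\mathscr{G}'}\phi\rangle$ between the two components, and the $\mathrm{SU}(2)$ lowering operator $S^-_{\mathscr{V}'}$ applied to the spin-$(r-1)$ component to manufacture a vector in $\Hil^{\mathrm{spin}}_{\mathscr{V}'}(n)$ of no greater energy---should then produce the claimed minimum as a lower bound on $\E_r(\mathscr{G}')$.
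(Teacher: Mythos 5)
Your induction scheme is fine as bookkeeping, but it rests entirely on the proposed key lemma $\E_r(\mathscr{G}') \geq \min\bigl(\E_r(\mathscr{G}),\, \E_n(\mathscr{G}')\bigr)$, and that lemma is false. Take $\mathscr{G}'$ to be the ring $C_L$ ($L$ even, $L>4$) and $\mathscr{G}$ the path on $L-1$ vertices obtained by deleting one vertex, and set $n=\frac{L}{2}-1$, $r=\frac{L}{2}$. Then $r>\lfloor\frac{1}{2}(L-1)\rfloor$, so by your own convention $\E_r(\mathscr{G})=+\infty$ and the lemma asserts $\E_{L/2}(C_L)\geq\E_{L/2-1}(C_L)$, i.e., FOEL-$(\frac{L}{2}-1)$ for the ring --- precisely the statement reported to fail (see \cite{SpitzerStarrTran} and the discussion in the introduction). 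The flaw in your sketch of the lemma's proof is the step where you apply $S^-_{\mathscr{V}'}$ to the spin-$(r-1)$ component ``to manufacture a vector in $\Hil^{\mathrm{spin}}_{\mathscr{V}'}(n)$'': the lowering operator commutes with the Casimir $\mathcal{C}_{\mathscr{V}'}$, so it changes only the magnetization $M$ and never the total spin $s$; it cannot move a vector from the $(r-1)$-deviate sector into the $n$-deviate sector. What the Clebsch--Gordan decomposition genuinely yields is Lemma \ref{lem:InductiveRefined}, namely $\E_r(\mathscr{G}')\geq\min\{\E_r(\mathscr{G}),\E_{r-1}(\mathscr{G})\}$, and with that your induction does not close: at $r=n$ the term $\E_{n-1}(\mathscr{G}_{k-1})$ appears, for which there is no useful lower bound (it is $0$ when $n=1$).

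This is exactly the difficulty the paper's proof is built to circumvent, by a genuinely different device: one introduces coupling constants and ``dilutes'' the graphs (Lemma \ref{lem:Zaftig}), constructing $J_{2n}\leq\dots\leq J_N=\mathbf{1}_{\mathscr{E}_N}$ via a continuity/intermediate-value argument so that $\E_n(\mathscr{G}_k,J_k)=\min\{\E_n(\mathscr{G}_i)\,:\,i\leq k\}$ is monotone non-increasing; hypothesis (iii') is what forces $J_N=\mathbf{1}_{\mathscr{E}_N}$ at the final step. Corollary \ref{cor:inductive} (the coupled version of the simple induction) then gives FOEL-$n$ for every $(\mathscr{G}_k,J_k)$, and operator monotonicity $H(\mathscr{G}_k,J_k)\leq H(\mathscr{G}_k)$ transfers this to $\E_n(\mathscr{G}_N)\leq\E_r(\mathscr{G}_k,J_k)\leq\E_r(\mathscr{G}_k)$. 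Some version of this dilution is unavoidable here; a direct induction with unit couplings cannot succeed, as the ring example shows.
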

We will prove this result in Section \ref{sec:Induct}.
For the special case of $n=1$, the analogous result was proved in \cite{ConomosStarr}.
It was also, independently, and slightly later, proved in \cite{Morris} for the special case of $n=1$.
The present extension to $n>1$, follows immediately by the same argument.
A key part of the idea
is to introduce coupling constants (known as ``rates'' in the probability and stochastic processes community)
for the purpose of diluting the graphs.
The idea of including rates in such arguments can be traced back at least to Handjani and Jungreis \cite{HandjaniJungreis}.

%We comment on Corollary \ref{cor:inductiveNewLow}.
The hypothesis of Proposition \ref{prop:inductiveNewLow} is weaker than that of Corollary \ref{cor:inductive1} because the condition (iii')
is weaker than the condition (iii).
But the conclusion is also weaker since we only conclude FOEL-$n$ for the final graph $\mathscr{G}_N$,
as opposed to all graphs $\mathscr{G}_k$, for $k \in \{2n,\dots,N\}$.
%(Note that if we assumed a version of condition (iii') for not just $N$, but for $N$ replaced by $k$
%for all $k \in \{2n,\dots,N\}$, then we would just recover condition (iii) in that case, so that would
%not give a different result than Corollary \ref{cor:inductive1}.)
%
For a certain family of graphs, verifying condition (iii') is easier than
trying to verify condition (iii).
Let us define the family of graphs, now.
\begin{definition}
\label{def:LambdadN}
Given $d \in \N$ and for each $N \in \N$, define $L(d,N) = \lfloor N^{1/d} \rfloor$ and $L^+(d,N) = \lceil N^{1/d} \rceil$.
Let $\Lambda(d,N)\subset\Z^d$ be defined as follows. If $L^+(d,N) = L(d,N)$,
then $\Lambda(d,N) \stackrel{\mathrm{def}}{:=} \B^d_{L(d,N)}$.
Otherwise, 
let $\prec_d$ denote the lexicographic ordering on $\Z^d$, and
let $\Lambda(d,N) \stackrel{\mathrm{def}}{:=} \B^d_{L(d,N)} \cup S(d,N)$ where $S(d,N)$ is the set of cardinality $N-(L(d,N))^d$, consisting of points in $\B^d_{L^+(d,N)}\setminus\B^d_{L(d,N)}$
which are smallest with respect to $\prec_d$.
\end{definition}

\subsection{Step II: Analysis of low energy wave-functions}
\label{subsec:Magnon}

We continue to consider a general finite graph $\mathscr{G} = (\mathscr{V},\mathscr{E})$.
For any $n \in \{0,\dots,|\mathscr{V}|\}$, we may consider the action of $H_{\mathscr{G}}$,
restricted to $\Hil_{\mathscr{V}}^{\mathrm{mag}}(n)$,
in some coordinates.

%We proved that $\mathfrak{H}^{(n)}_{\mathscr{G}}$ is self-adjoint on $\ell^2(\mathscr{V}^n)$.
From  definition (\ref{eq:HamDef}) we know $H_{\mathscr{G}}$ is a self-adjoint operator on $\Hil_{\mathscr{V}}$.
For any $E_1,E_2 \in \R$ with $E_1\leq E_2$, let $\mathfrak{L}_{\mathscr{G}}(E_1,E_2)$
denote the subspace
\begin{equation}
\label{eq:espace}
\mathfrak{L}_{\mathscr{G}}(E_1,E_2)\,
=\, \bigoplus_{\lambda \in \operatorname{spec}(H_{\mathscr{G}}) \cap [E_1,E_2]}
\ker(H_{\mathscr{G}} - \lambda \mathbbm{1})\, .
\end{equation}
Note that by Proposition \ref{prop:LiebMattis},  if $E_1\leq 0\leq E_2$,
then $\mathfrak{L}_{\mathscr{G}}(E_1,E_2) 
=\mathfrak{L}_{\mathscr{G}}(0,E_2)$.

Now we state the approximate eigenvectors.
\begin{definition}
\label{eq:defTn0}
Recalling the definition in
(\ref{eq:OmegaDEFn}),
we define $T_{\mathscr{V}}^{(n)} : \ell^2(\mathscr{V}^n) \to \Hil_{\mathscr{V}}$
by
$$
T_{\mathscr{V}}^{(n)} F\, =\, \frac{1}{\sqrt{n!}}\, \sum_{(x_1,\dots,x_n) \in \mathscr{V}^n} F(x_1,\dots,x_n) \Omega_{\mathscr{V}}^{(n)}(x_1,\dots,x_n)\, .
$$
\end{definition}

%x

%Then we can state a rigorous version of the linear spin wave approximation for $H_{\mathscr{G}}$
%in the context of the present paper.
%We may prove convergence of $[L(d,N)]^2 H_{\Lambda(d,N)} \restriction \Hil^{\mathrm{mag}}_{\Lambda(d,N)}(n)$,
%as $N \to \infty$ to a non-interacting ideal Bose
%gas Hamiltonian on a compact $d$-dimensional box (with Neumann boundary conditions), restricted to the $n$-particle sector.

\begin{definition}
\label{def:littleF}
For each $\xi \in \R$ define the function $f(\xi,\cdot) : \Z \to \R$ as follows: for $\xi \in \R \setminus \{0\}$,
define
$f(\xi,r) = 2^{1/2} \cos(\pi\xi [r-\frac{1}{2}])$ for all $r \in \Z$;
and $f(0,r) = 1$ for all $r \in \Z$.
\end{definition}
%%We note that the reason for subtracting $1/2$ from $\alpha$ is to insure that $f(\xi,1)=f(\xi,0)$ for all $\xi \in \R$,
%%which roughly amounts to a ``Neumann condition'' at $\alpha=1$ as an element of $\B_L^1$.
%%The normalizing prefactor may be thought of as insuring that $\lim_{L \to \infty} L^{-1} \sum_{\alpha=1}^{L} |f(\xi,\alpha)|^2 = 1$.
%\begin{lemma}
%\label{lem:chain}
%Fix $d=n=1$. For each $L \in \N$, and $\kappa \in \{0,\dots,L-1\}$,  define $\psi_{L}(\kappa) \in \Hil^{\mathrm{mag}}_{\mathbb{B}^1(L)}(1)$ as
%\begin{equation}
%\label{eq:Psi1Def}
%\psi_{L}(\kappa)\, =\, L^{-1/2} \sum_{r \in \mathbb{B}^1_L} f(L^{-1}\kappa,r) \Omega^{\downarrow}_{\mathbb{B}^1(L)}(\{r\})\, .
%\end{equation}
%Then $(\psi_{L}(\kappa)\, :\, \kappa \in \{0,1,\dots,L-1\})$ is an orthonormal basis for $\Hil^{\mathrm{mag}}_{\mathbb{B}^1(L)}(1)$.
%Moreover,  
%\begin{equation}
%\label{eq:HPsi1Eigenvalue}
%H_{\mathbb{B}^1(L)} \psi_{L}(\kappa)\, =\, 2 \sin^2\left(\frac{\pi \kappa}{2L}\right) \psi_{L}(\kappa)\, ,
%\end{equation}
%for each $\kappa \in \{0,1,\dots,L-1\}$.
%\end{lemma}
%This lemma is elementary.
%For completeness, we will prove it in Appendix \ref{app:chain}.
%Let us define
%\begin{equation}
%\gamma\, \stackrel{\mathrm{def}}{:=}\, \pi^2/2\, ,
%\end{equation}
%which has the significance that
%$\min\operatorname{spec}(H_{\B^1(L)} \restriction 
%\Hil^{\mathrm{mag}}_{\B^1(L)}(1) \cap \operatorname{span}(\psi_L(0))^{\perp})
%\sim \gamma L^{-2}$
% as $L \to \infty$.
We denote an element of $\{0,1,\dots\}^d$ as
$\k = (\kappa_1,\dots,\kappa_d)$.
Sometimes we need several vectors, such as $\k_1,\dots,\k_n \in \{0,1,\dots\}^d$.
In this case, the coordinates will be expressed as $\k_k = (\k_{k,1},\dots,\k_{k,d})$ for each $k \in \{1,\dots,n\}$
(and similarly for similar cases).
\begin{definition}
\label{def:PsiTilde}
Fix $d,n \in \N$.
For $\k_1,\dots,\k_n \in \{0,1,\dots\}^d$,
and each $N \in \N$,
we define a function, $F_{d,N}^{(n)}(\k_1,\dots,\k_n;\cdot) \in \ell^2(\Lambda(d,N)^n)$, by the formula
\begin{equation}
\label{eq:FdNDef}
F_{d,N}^{(n)}(\k_1,\dots,\k_n;\r_1,\dots,\r_n)
=\, [L(d,N)]^{-nd/2} \sum_{\pi \in S_n} \prod_{k=1}^{n} \prod_{j=1}^{d} f\big([L^+(d,N)]^{-1} \kappa_{k,j},r_{\pi_k,j}\big)\, ,
\end{equation}
which is in the range of $\mathfrak{S}_{\Lambda(d,N)}^{(n)}$ because of the averaging over the action of $S_n$.
Then we define a  vector, 
$\widetilde{\Psi}^{(n)}_{d,N}(\k_1,\dots,\k_n) \in \Hil_{\Lambda(d,N)}^{\mathrm{mag}}(n)$,
by the formula
$\widetilde{\Psi}^{(n)}_{d,N}(\k_1,\dots,\k_n)
= T_{\Lambda(d,N)}^{(n)} F_{d,N}^{(n)}(\k_1,\dots,\k_n;\cdot)$.
\end{definition}
We put the tilde over the vector because these are not actual eigenvectors.
But we will see that they are ``approximate eigenvectors.''
%Since we sum over $\pi \in S_n$, the function $F_{d,N}^{(n)}(\k_1,\dots,\k_n;\{\r_{\rho_1},\dots,\r_{\rho_n}\})$ is independent of $\rho \in S_n$.
%The following lemma is straightforward, but important.
%The calculations may be thought of as variational calculations.
%
%
%
One immediate corollary of the definition is this:
\begin{lemma}
Fix $d,n \in \N$. For any choice of $\k_1,\dots,\k_n \in \{0,1,\dots\}^d$,
\begin{equation}
\label{eq:SpinLoweringApproximate}
S_{\Lambda(d,N)}^- \widetilde{\Psi}^{(n)}_{d,N}(\k_1,\dots,\k_n)\,
=\, 
[L(d,N)]^{d/2}\, \widetilde{\Psi}^{(n+1)}_{d,N}(\k_1,\dots,\k_n,\boldsymbol{0})\, .
\end{equation}
\end{lemma}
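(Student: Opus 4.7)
The identity is purely algebraic/combinatorial: both sides live in $\Hil_{\Lambda(d,N)}^{\mathrm{mag}}(n+1)$ and the claim is that they agree. The plan is to expand each side through the definitions of $T^{(n)}$ and $F^{(n)}$, apply the raising/lowering commutation relations (item 3 of the bullet list following (\ref{eq:OmegaDEFn})) to the left-hand side to move $S^-_{\Lambda}$ inside the sum, and exploit the relation $f(0,\cdot)\equiv 1$ to re-express the right-hand side in terms of $F^{(n)}$. A reindexing/symmetrization argument, using the fact that $\Omega^{(n+1)}$ is invariant under permutations of its $n+1$ arguments (as noted immediately after (\ref{eq:OmegaDEFn})), then matches the two sides up to the stated combinatorial factor $[L(d,N)]^{d/2}$.

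\textbf{Step-by-step plan.} Write $L = L(d,N)$, $L^+ = L^+(d,N)$ for brevity. First, on the left, using $S_{\Lambda}^- = \sum_{y\in\Lambda} S_y^-$ together with the commutativity of the $S_y^-$, one has
\begin{equation*}
S_{\Lambda}^- \widetilde{\Psi}^{(n)}_{d,N}(\k_1,\dots,\k_n)
= \frac{1}{\sqrt{n!}} \sum_{x_1,\dots,x_n\in\Lambda}\sum_{y\in\Lambda} F^{(n)}_{d,N}(\k_1,\dots,\k_n;x_1,\dots,x_n)\,\Omega^{(n+1)}(x_1,\dots,x_n,y).
\end{equation*}
Second, on the right, insert the definition of $F^{(n+1)}_{d,N}(\k_1,\dots,\k_n,\boldsymbol{0};\r_1,\dots,\r_{n+1})$ from (\ref{eq:FdNDef}) and observe that for the $k=n+1$ factor $\prod_{j=1}^d f(0,r_{\pi_{n+1},j}) = 1$, so the sum over $\pi\in S_{n+1}$ collapses onto a sum over the choice of the slot $i = \pi_{n+1}$ and over a permutation $\sigma\in S_n$ of the remaining indices. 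Grouping the latter, one obtains
\begin{equation*}
F^{(n+1)}_{d,N}(\k_1,\dots,\k_n,\boldsymbol{0};z_1,\dots,z_{n+1}) = L^{-d/2} \sum_{i=1}^{n+1} F^{(n)}_{d,N}\bigl(\k_1,\dots,\k_n;z_1,\dots,\widehat{z_i},\dots,z_{n+1}\bigr),
\end{equation*}
where $\widehat{z_i}$ denotes omission. Substituting this into the definition of $\widetilde{\Psi}^{(n+1)}$ and using the symmetry of $\Omega^{(n+1)}(z_1,\dots,z_{n+1})$ in all $n+1$ arguments, each of the $n+1$ terms in the $i$-sum gives the same contribution after relabeling the $n$ retained variables as $x_1,\dots,x_n$ and the omitted one as $y$. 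Collecting the constants $1/\sqrt{(n+1)!}$, $(n+1)$, and $L^{d/2}\cdot 1/\sqrt{n!}$ then yields the claimed identity (\ref{eq:SpinLoweringApproximate}) after matching powers.

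\textbf{Main obstacle.} There is no genuinely hard step — the proof is bookkeeping. The one place where care is required is the reduction of the $(n+1)!$-fold symmetrization in $F^{(n+1)}(\k_1,\dots,\k_n,\boldsymbol{0};\cdot)$ to an $n+1$-fold sum of $n!$-fold symmetrizations. Decomposing $S_{n+1}$ according to $\pi_{n+1}$ is the cleanest way to do this, and then one must verify that after using the symmetry of $\Omega^{(n+1)}$ the $n+1$ copies collapse correctly, so that the factors $(n+1)/\sqrt{(n+1)!} = \sqrt{n+1}/\sqrt{n!}$ combine with $L^{-(n+1)d/2}\cdot L^{nd/2} = L^{-d/2}$ and the normalization $1/\sqrt{n!}$ of the left-hand side to produce precisely $[L(d,N)]^{d/2}$ as the overall prefactor. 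Once this arithmetic is carried out explicitly, the identity is immediate.
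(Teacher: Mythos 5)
Your strategy is the right one and is essentially the paper's own: the paper proves the general operator identity $T_{\mathscr{V}}^{(n+1)}\widehat{S}_{\mathscr{V}}^{(n,n+1)}=S^-_{\mathscr{V}}T_{\mathscr{V}}^{(n)}$, where $\widehat{S}_{\mathscr{V}}^{(n,n+1)}$ inserts the extra variable in all $n+1$ slots, and then observes that by (\ref{eq:FdNDef}) one has $\widehat{S}^{(n,n+1)}F^{(n)}_{d,N}(\k_1,\dots,\k_n;\cdot)=[L(d,N)]^{d/2}F^{(n+1)}_{d,N}(\k_1,\dots,\k_n,\boldsymbol{0};\cdot)$ because $f(0,\cdot)\equiv 1$; your decomposition of $S_{n+1}$ according to $\pi_{n+1}$ is exactly this, written out in explicit sums.

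One caution on the step you defer to "once this arithmetic is carried out explicitly": if you actually collect the constants as you describe, the $n+1$ identical terms give $(n+1)/\sqrt{(n+1)!}=\sqrt{n+1}/\sqrt{n!}$, so the right-hand side of (\ref{eq:SpinLoweringApproximate}) acquires an extra factor $\sqrt{n+1}$ relative to $S^-_{\Lambda(d,N)}\widetilde{\Psi}^{(n)}_{d,N}$; equivalently, the correct prefactor is $[L(d,N)]^{d/2}(n+1)^{-1/2}$, as one checks already for $n=1$. This is not a defect of your argument but a normalization slip in the statement (and in the paper's own operator identity, which is really $T^{(n+1)}_{\mathscr{V}}\widehat{S}^{(n,n+1)}_{\mathscr{V}}=\sqrt{n+1}\,S^-_{\mathscr{V}}T^{(n)}_{\mathscr{V}}$); the constant plays no role in the later asymptotic arguments. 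You should, however, actually carry out the bookkeeping rather than assert that it "produces precisely $[L(d,N)]^{d/2}$," since that assertion is what turns out to be off.
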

\begin{proof}
For a general graph $\mathscr{G} = (\mathscr{V},\mathscr{E})$, we have
$T_{\mathscr{V}}^{(n+1)} \widehat{S}_{\mathscr{V}}^{(n,n+1)}=S_{\mathscr{V}}^- T_{\mathscr{V}}^{(n)}$, 
where $\widehat{S}_{\mathscr{V}}^{(n,n+1)} : \ell^2(\mathscr{V}^n) \to \ell^2(\mathscr{V}^{n+1})$ is the linear transformation such that
$$
\widehat{S}_{\mathscr{V}}^{(n,n+1)} F(x_1,\dots,x_{n+1})\, =\,  F(x_1,\dots,x_n) + F(x_2,\dots,x_{n+1}) + \sum_{k=2}^{n} F(x_1,\dots,x_{k-1},x_{k+1},\dots,x_{n+1})\, .
$$
But then, from (\ref{eq:FdNDef}), we see that this is equivalent to (\ref{eq:SpinLoweringApproximate}).
\end{proof}
%\begin{lemma}
%\label{lem:Variational1}
%Fix $d,n \in \N$. For any choice of $\k_1,\dots,\k_n \in \{0,1,\dots\}^d$,
%\begin{equation}
%\label{eq:AsymptoticEnergy}
%\lim_{N \to \infty} \bigg\| \bigg(\gamma^{-1}[L(d,N)]^2 H_{\Lambda(d,N)} - \sum_{k=1}^{n} \sum_{j=1}^{d} \kappa_{k,j}^2 \bigg)
%\widetilde{\Psi}^{(n)}_{d,N}(\k_1,\dots,\k_n)\bigg\|\,
%=\, 0\, ,
%\end{equation}
%and 
%\begin{equation}
%\label{eq:SpinLoweringApproximate}
%S_{\Lambda(d,N)}^- \widetilde{\Psi}^{(n)}_{d,N}(\k_1,\dots,\k_n)\,
%=\, 
%[L(d,N)]^{d/2}\, \widetilde{\Psi}^{(n+1)}_{d,N}(\k_1,\dots,\k_n,\vec{0})\, .
%\end{equation}
%Moreover, for any additional choice of $\k_1',\dots,\k_n' \in \{0,1,\dots\}^d$,
%\begin{equation}
%\label{eq:AsymptoticIP}
%\lim_{N \to \infty} \big\langle \widetilde{\Psi}_{d,N}^{(n)}(\k_1,\dots,\k_n),
%\widetilde{\Psi}_{d,N}^{(n)}(\k_1',\dots,\k_n') \big\rangle\,
%=\, 
%n! \sum_{\pi \in S_n} \prod_{k=1}^{n} \mathbf{1}\{\k_{\pi_k}=\k'_{k}\}\, .
%\end{equation}
%\end{lemma}
%We will prove this lemma in Section \ref{sec:???}.
%It is essentially a variational calculation.
%This provides upper bounds on energy levels using the Rayleigh-Ritz min-max principle.
%But to get lower bounds is harder.

For any $\pi \in S_n$, we have 
\begin{equation}
\label{eq:PsiSymm}
\widetilde{\Psi}_{d,N}^{(n)}(\k_{\pi_1},\dots,\k_{\pi_n})\,
=\, \widetilde{\Psi}_{d,N}^{(n)}(\k_1,\dots,\k_n)\, ,
\end{equation}
which we see directly by inspection of Definition \ref{eq:defTn0} and Definition \ref{def:PsiTilde}.
%Let $\nu_n(\vec{\kappa}_1,\dots,\vec{\kappa}_n)$ equal the number
%$$
%\nu_n(\vec{\kappa}_1,\dots,\vec{\kappa}_n)\, 
%=\, 
%\sum_{\pi \in S_n} \prod_{k=1}^{n} \mathbf{1}\{\vec{\kappa}_{\pi_k}=\vec{\kappa}_{k}\}\, .
%$$
For $d,n \in \N$, let 
$$
\mathcal{K}(d,n)\, \stackrel{\mathrm{def}}{:=}\, 
(\{0,1,\dots\}^d)^n\, 
=\, \{(\k_1,\dots,\k_n)\, :\, \k_1,\dots,\k_n \in \{0,1,\dots\}^d\}\, .
$$
Given any $(\k_1,\dots,\k_n) \in \mathcal{K}(d,n)$, define
$$
[(\k_1,\dots,\k_n)]_{S_n} = \{(\k_{\pi_1},\dots,\k_{\pi_n})\, :\, \pi \in S_n\} \subset \mathcal{K}(d,n)\, .
$$
Let us define 
\begin{equation}
\label{eq:widetildeK}
\widetilde{\mathcal{K}}(d,n)\, =\, \{[(\k_1,\dots,\k_n)]_{S_n}\,
:\, (\k_1,\dots,\k_n) \in \mathcal{K}(d,n)\}\, ,
\end{equation} 
which is the quotient space $\mathcal{K}(d,n)/\sim$,
where the equivalence relation is: $(\k_1,\dots,\k_n) \sim (\k_1',\dots,\k_n')$
if and only if $[(\k_1,\dots,\k_n)]_{S_n} = [(\k_1',\dots,\k_n')]_{S_n}$.

An easy variational calculation follows.
Let us define a quantity (related to the spectral gap)
\begin{equation}
\label{eq:gammaDef}
\gamma\, \stackrel{\mathrm{def}}{:=}\, \pi^2/2\, .
\end{equation}
Then we have the following:
\begin{proposition}
\label{prop:Variational1}
Fix $d,n \in \N$. For any choice of $\k_1,\dots,\k_n \in \{0,1,\dots\}^d$,
\begin{equation}
\label{eq:AsymptoticEnergy}
\lim_{N \to \infty} \bigg\| \bigg(\gamma^{-1}[L(d,N)]^2 H_{\Lambda(d,N)} - \sum_{k=1}^{n} \sum_{j=1}^{d} \kappa_{k,j}^2 \bigg)
\widetilde{\Psi}^{(n)}_{d,N}(\k_1,\dots,\k_n)\bigg\|\,
=\, 0\, .
\end{equation}
Moreover, for any additional choice of $\k_1',\dots,\k_n' \in \{0,1,\dots\}^d$,
\begin{equation}
\label{eq:AsymptoticIP}
\lim_{N \to \infty} \big\langle \widetilde{\Psi}_{d,N}^{(n)}(\k_1,\dots,\k_n),
\widetilde{\Psi}_{d,N}^{(n)}(\k_1',\dots,\k_n') \big\rangle\,
=\, 
n! \sum_{\pi \in S_n} \prod_{k=1}^{n} \mathbf{1}\{\k_{\pi_k}=\k'_{k}\}\, .
\end{equation}
\end{proposition}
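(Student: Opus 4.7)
Plan. The approach is to exploit the bosonic representation of the Heisenberg Hamiltonian in the $n$-magnon subspace, recognizing $\widetilde{\Psi}^{(n)}_{d,N}(\k_1,\dots,\k_n)$ as a symmetrized product of single-particle cosines that happen to be exact Neumann eigenfunctions of the discrete Laplacian on the enclosing box $\mathbb{B}^d(L^+(d,N))$. The plan is to reduce the energy equation to a comparison between the hard-core $n$-particle dynamics on $\Lambda(d,N)$ and the free $n$-particle dynamics on $\mathbb{B}^d(L^+(d,N))$, and then to control the three resulting error terms: the Taylor remainder for $1-\cos$, a hard-core correction from adjacent-particle configurations, and a boundary correction from the missing sites $\mathbb{B}^d(L^+(d,N))\setminus\Lambda(d,N)$.

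The first step is to use $h_{xy}=\tfrac12(\mathbbm{1}-P^{\mathrm{swap}}_{xy})$, with $P^{\mathrm{swap}}$ the swap on $\Hil_x\otimes\Hil_y$, together with the action of the swap on the orthonormal magnon basis $\{\Omega_\Lambda(X):X\in\mathcal{P}_n(\Lambda)\}$, to obtain, for symmetric $F\in\ell^2(\Lambda(d,N)^n)$, the identity
\begin{equation*}
H_{\Lambda(d,N)} T_{\Lambda(d,N)}^{(n)} F \;=\; T_{\Lambda(d,N)}^{(n)}\Bigl[\sum_{k=1}^n (-\Delta_{\Lambda(d,N)})_k F \;-\; \mathrm{HCC}(F)\Bigr],
\end{equation*}
where $\mathrm{HCC}(F)$ is supported on configurations whose coordinates contain an adjacent pair. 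Next, since each factor $f([L^+(d,N)]^{-1}\kappa_{k,j},\cdot)$ is an exact Neumann eigenfunction of the 1D discrete Laplacian on $\{1,\dots,L^+(d,N)\}$ with eigenvalue $1-\cos(\pi\kappa_{k,j}/L^+(d,N))$, extending $F^{(n)}_{d,N}$ to $\mathbb{B}^d(L^+(d,N))^n$ by the same cosine formula produces an exact eigenfunction of the free $n$-particle Laplacian there, with total eigenvalue $\lambda_N=\sum_{k,j}(1-\cos(\pi\kappa_{k,j}/L^+(d,N)))$; Taylor expansion gives $\gamma^{-1}[L(d,N)]^2\lambda_N\to\sum_{k,j}\kappa_{k,j}^2$, with remainder $O(\kappa_{\max}^4[L(d,N)]^{-2})$.

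The hard-core correction admits a crucial gradient-squared simplification: at an adjacent pair $\r_1\sim\r_2$ (the other coordinates being distinct and not adjacent to $\r_1,\r_2$), combining the $(a,b)$ and $(b,a)$ pair-orderings of the symmetrization produces the expression $-[L(d,N)]^{-nd/2}\sum_{a<b}[g_a(\r_1)-g_a(\r_2)][g_b(\r_1)-g_b(\r_2)]\,h_{\{a,b\}}(\r_3,\dots,\r_n)$, which is quadratic (not linear) in the single-particle finite differences; each factor is $O(\kappa/L)$, so summing over the $O(L^{d(n-1)})$ supporting configurations gives an $\ell^2$-bound of $O(\kappa^2 L^{-d/2-2})$, which vanishes after multiplication by $L^2$. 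The main technical obstacle will be the boundary correction $\bigl(-\Delta^{(n)}_{\mathbb{B}^d(L^+)}-\sum_k(-\Delta_{\Lambda})_k\bigr)F^{(n)}_{d,N}$, supported on configurations having some particle adjacent to the $O([L]^{d-1})$-element missing set $\mathbb{B}^d(L^+(d,N))\setminus\Lambda(d,N)$ (empty for $d=1$). Here the key improvement is that these missing-neighbor sites cluster near the upper face of $\mathbb{B}^d(L^+)$, where the cosines take values close to their extrema; concretely $|f(\xi,L^+-1)-f(\xi,L^+)|=2\sqrt{2}\,|\sin(\pi\xi/2)\sin(\pi\xi(L^+-1))|=O(\kappa^2/L^2)$ instead of the generic $O(\kappa/L)$, yielding an $\ell^2$-bound of $O(\kappa^2 L^{-5/2})$ for these terms. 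Extra care is needed for the few "interior" missing edges arising at the lex-ordering end of $S(d,N)$, where the stationarity argument does not directly apply; to control them I plan to combine a summation-by-parts against the cosine structure with the discrete Sobolev extension inequality developed in Section~\ref{sec:Extension}.

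Finally, (\ref{eq:AsymptoticIP}) follows by expanding $\langle\widetilde{\Psi}^{(n)}_{d,N}(\k_1,\dots,\k_n),\widetilde{\Psi}^{(n)}_{d,N}(\k_1',\dots,\k_n')\rangle$ as a sum over distinct $n$-tuples in $\Lambda(d,N)^n$ (the diagonal/coincidence contribution being a relative $O([L(d,N)]^{-d})$ correction), using the orthogonality $\sum_{r=1}^{L^+(d,N)}f(\kappa/L^+(d,N),r)f(\kappa'/L^+(d,N),r)=L^+(d,N)\delta_{\kappa,\kappa'}$ on the enclosing box together with $|\Lambda(d,N)|/[L^+(d,N)]^d\to 1$, to recover $n!\sum_{\pi\in S_n}\prod_k\mathbf{1}\{\k_{\pi_k}=\k_k'\}$.
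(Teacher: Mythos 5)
Your overall strategy coincides with the paper's Appendix \ref{app:AsymptoticEnergy}: pass to the hard-core boson representation, use that the product cosines are exact Neumann eigenfunctions of the free $n$-particle Laplacian on the enclosing box $\B^d(L^+(d,N))$ with eigenvalue $\lambda=\sum_{k,j}2\sin^2(\pi\kappa_{k,j}/2L^+(d,N))$, and estimate the difference operator, which is supported on configurations adjacent to the diagonal or to the missing sites. Where you differ is in the sharpness of the estimates, and your refinements are not cosmetic. The paper bounds every bad-edge term by the generic first-order gradient $O(L^{-nd/2}\kappa/L)$ and counts $O(L^{nd-1})$ bad edges, yielding $\|(H_{\Lambda(d,N)}-\lambda)\widetilde{\Psi}^{(n)}_{d,N}\|^2=O(\kappa^2L^{-3})$; since rescaling the operator by $\gamma^{-1}L^2$ multiplies this squared norm by $L^4$, the crude bound only gives $O(\kappa^2L)$, which does not vanish (the paper's final display gains a factor $L^2$ rather than $L^4$). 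Your two cancellations --- the pairing of the two diagonal-touching edges at an adjacent pair into a product of two single-particle finite differences, and the second-order stationarity of the Neumann cosines across the face $r_j=L^+(d,N)$ --- are exactly what is needed to push the hard-core contribution to $O(\kappa^2L^{-d/2-2})$ and the perpendicular boundary contribution to $O(\kappa^2L^{-5/2})$, both of which survive the rescaling. I verified the pairing algebra for general $n$ (group $\pi$ with $\pi\circ(ab)$ in the symmetrization) and it is correct. Your argument for (\ref{eq:AsymptoticIP}) matches the paper's sketch.

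The genuine gap is the one you flag but do not close: the ``interior'' missing edges, i.e.\ nearest-neighbor pairs with one endpoint in $S(d,N)$ and the other in $\big(\B^d(L^+(d,N))\setminus\B^d(L(d,N))\big)\setminus S(d,N)$, where the moving coordinate is not the one equal to $L^+(d,N)$. Across such an edge the single-particle increment is genuinely first order, $\Theta(\kappa/L)$, and at the corresponding good configurations the omitted neighbor is unique, so no pairing analogous to the hard-core case is available. Already for $d=2$ a single such edge (the lexicographic end of $S(2,N)$ when it stops mid-row) lifts to $\Theta(L^{(n-1)d})$ configurations and contributes $\Theta(\kappa L^{-d/2-1})$ to $\|(H_{\Lambda(d,N)}-\lambda)\widetilde{\Psi}^{(n)}_{d,N}\|$, hence $\Theta(\kappa)$ after rescaling by $L^2$. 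Neither summation by parts nor the extension inequality of Section \ref{sec:extension0} can help here: those tools control quadratic forms, whereas (\ref{eq:AsymptoticEnergy}) asserts smallness of a strong $\ell^2$ residual, and a pointwise first-order residual on a set of relative measure $L^{-d}$ is exactly at the critical scale. To close this you need a new ingredient --- e.g.\ a local modification of the trial function near the lexicographic boundary of $S(d,N)$, or a restriction to the subsequence of perfect boxes with a separate treatment of the intermediate $N$ --- and you should be aware that the paper's own proof does not supply one either.
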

We will prove this in Appendix \ref{app:AsymptoticEnergy}.
It is easy, using the explicit formulas
for the eigenvectors.
%(We could also prove equation  (\ref{eq:AsymptoticIP}) using the explicit formulas. But we opt for the softer proof of that easier result in Section \ref{sec:trace}.)
%We will comment on the idea, a bit, at the end of this section.

From (\ref{eq:PsiSymm}) and (\ref{eq:AsymptoticIP}), we see that the appropriate
labeling of approximate eigenstates is given by elements of $\widetilde{\mathcal{K}}(d,n)$
from (\ref{eq:widetildeK}).
But from (\ref{eq:AsymptoticEnergy}), we also see that the appropriate energy
for $\widetilde{\Psi}^{(n)}_{d,N}(\k_1,\dots,\k_n)$, modulo the scale $\gamma\cdot [L(d,N)]^{-2}$,
is $\sum_{k=1}^{n} \sum_{j=1}^{d} \kappa_{k,j}^2$ for $(\k_1,\dots,\k_n) \in \mathcal{K}(d,n)$.
For each $m \in \{0,1,\dots\}$, let us define
\begin{equation}
\mathcal{K}(d,n,m)\, \stackrel{\mathrm{def}}{:=}\, \bigg\{ (\k_1,\dots,\k_n) \in \mathcal{K}(d,n)\, :\,
\sum_{k=1}^{n} \sum_{j=1}^{d} \kappa_{k,j}^2=m\bigg\}\, .
\end{equation}
Let us also define, for each $m \in \{0,1,\dots\}$, 
\begin{equation}
\label{eq:tildecalKdnmDef}
\widetilde{\mathcal{K}}(d,n,m)\,
=\,
\{[(\k_1,\dots,\k_n)]_{S_n}\,
:\, (\k_1,\dots,\k_n) \in \mathcal{K}(d,n,m)\}\, .
\end{equation}
Finally, let us define
the number $R(d,n,m) \in \{0,1,\dots\}$,
\begin{equation}
R(d,n,m)\, =\, 
|\widetilde{\mathcal{K}}(d,n,m)|\, ,
\end{equation}
which is supposed to enumerate the linearly independent (indeed orthogonal) eigenvectors whose 
eigenvalues, modulo the scale $\gamma\cdot [L(d,N)]^{-2}$,
are all near to $m$.
Now we state the main theorem for the linear spin wave approximation that we will use.
\begin{theorem}
\label{thm:SpecGather}
Fix $d,n \in \N$. For each $N \in \N$, we may define vectors $\Psi^{(n)}_{d,N}(m,r)$,
for each $m \in \{0,1,\dots\}$ and each $r\in \{1,\dots,R(d,n,m)\}$,
which are all orthogonal, and such that,
for each $m \in \{0,1,\dots\}$ and each $\epsilon \in (0,1/2)$, there is an integer $N_1(d,n,m,\epsilon) \in \N$ such that
for all $N\geq N_1(d,n,m,\epsilon)$ the following properties hold.
\begin{itemize}
\item[(i)] For each $m_1 \in \{0,\dots,m\}$ and each $r\in\{1,\dots,R(d,n,m_1)\}$, the vector $\Psi^{(n)}_{d,N}(m_1,r)$
is an eigenvector of $\gamma^{-1} [L(d,N)]^2 H_{\Lambda(d,N)}$ with associated eigenvalue $E^{(n)}_{d,N}(m_1,r) \in (m_1-\epsilon,m_1+\epsilon)$.
\item[(ii)]
The vectors $\{\Psi^{(n)}_{d,N}(m_1,r)\, :\, m_1 \in \{0,\dots,m\}\, ,\ r\in\{1,\dots,R(d,n,m_1)\}\}$ 
are an orthonormal basis for $\mathfrak{L}_{\Lambda(d,N)}(0,\gamma^{-1} [L(d,N)]^2 (m+\epsilon)) \cap \Hil^{\mathrm{mag}}_{\Lambda(d,N)}(n)$, 
using the definition from (\ref{eq:espace}).
%the range of the spectral projection for $\gamma^{-1} [L(d,N)]^2 H_{\Lambda(d,N)}$ associated to the eigenvalue interval $[0,m+\epsilon]$.
%(which is the same as for $(-\infty,m+\epsilon]$, since there are no negative eigenvalues).
%\item[(iii)] For each  $m_1 \in \{0,\dots,m\}$,
%and each $(\k_1,\dots,\k_n) \in \mathcal{K}(d,n,m_1)$, 
%$$
%\min_{\Phi \in 
%\operatorname{span}\left(\left\{\Psi^{(n)}_{d,N}(m_1,r)\, :\, 
% r\in\{1,\dots,R(d,n,m_1)\}\right\}\right)}
%\big\|\widetilde{\Psi}^{(n)}_{d,N}(\k_{1},\dots,\k_{n}) - {\Phi}\big\|^2\,
%\leq\, \epsilon\, .
%$$
\end{itemize}
\end{theorem}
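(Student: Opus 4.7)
The plan is to prove Theorem~\ref{thm:SpecGather} by matching a \emph{lower bound} on the count of eigenvalues of $\gamma^{-1}[L(d,N)]^2 H_{\Lambda(d,N)}$ in each window $(m_1-\epsilon,m_1+\epsilon)$ with $m_1 \in \{0,\dots,m\}$, obtained from the explicit quasi-modes of Definition~\ref{def:PsiTilde}, against an \emph{upper bound} on the total eigenvalue count in $[0,m+\epsilon]$, obtained from the linear spin wave approximation.

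For the lower bound, fix $m$ and $\epsilon \in (0,1/2)$. For each $m_1 \in \{0,\dots,m\}$ pick one representative $(\k_1,\dots,\k_n)$ from each class in $\widetilde{\mathcal{K}}(d,n,m_1)$, yielding $R(d,n,m_1)$ quasi-modes $\widetilde\Psi^{(n)}_{d,N}$. By (\ref{eq:AsymptoticIP}) these are asymptotically orthogonal after a common normalizing factor, so a Gram--Schmidt step converts them into a genuinely orthonormal family $\{\widehat\Psi^{(n)}_{d,N}(m_1,r)\}$ for $N$ large, and by (\ref{eq:AsymptoticEnergy}) each $\widehat\Psi$ satisfies $\|(\gamma^{-1}[L(d,N)]^2 H_{\Lambda(d,N)} - m_1)\widehat\Psi^{(n)}_{d,N}(m_1,r)\| < \epsilon/4$. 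A standard Chebyshev/quasi-mode inequality then places each $\widehat\Psi^{(n)}_{d,N}(m_1,r)$ essentially inside the spectral subspace of $\gamma^{-1}[L(d,N)]^2 H_{\Lambda(d,N)}$ associated to $(m_1-\epsilon,m_1+\epsilon)$, and orthogonality of the $\widehat\Psi$'s upgrades this to the dimension inequality $\dim \mathfrak{L}_{\Lambda(d,N)}(0,\gamma^{-1}[L(d,N)]^2(m+\epsilon)) \cap \Hil^{\mathrm{mag}}_{\Lambda(d,N)}(n) \geq \sum_{m_1=0}^m R(d,n,m_1)$.

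For the matching upper bound I would invoke the linear spin wave picture developed in Section~\ref{sec:KeyII}. Under the map $T^{(n)}_{\Lambda(d,N)}$, the operator $H_{\Lambda(d,N)} \restriction \Hil^{\mathrm{mag}}_{\Lambda(d,N)}(n)$ is identified with the generator of the $n$-particle symmetric simple exclusion process on $\Lambda(d,N)$, i.e.\ $n$ copies of $-\Delta_{\Lambda(d,N)}$ on symmetric wave-functions with a hard-core multiple-occupancy constraint. The discrete Sobolev-type bounds of Sections~\ref{sec:Trace}--\ref{sec:Extension} will force any eigenvector of energy at most $\gamma[L(d,N)]^{-2}(m+\epsilon)$ to be ``spread out,'' so the $\ell^2$-mass it places on the coincidence set is $o(1)$ as $N \to \infty$ and the exclusion constraint becomes a vanishing perturbation. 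A min-max comparison then bounds the number of eigenvalues of $\gamma^{-1}[L(d,N)]^2 H_{\Lambda(d,N)} \restriction \Hil^{\mathrm{mag}}_{\Lambda(d,N)}(n)$ in $[0,m+\epsilon]$ above, for large $N$, by the corresponding count for the free $n$-magnon Laplacian on symmetric tensors, which is precisely $\sum_{m_1=0}^m R(d,n,m_1)$ with every eigenvalue falling in one of the disjoint windows $(m_1-\epsilon,m_1+\epsilon)$.

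Combining the two inequalities forces the count in each window $(m_1-\epsilon,m_1+\epsilon)$ to equal $R(d,n,m_1)$, and we define $\Psi^{(n)}_{d,N}(m_1,r)$ to be any orthonormal basis of true eigenvectors of $H_{\Lambda(d,N)}$ spanning the corresponding $R(d,n,m_1)$-dimensional spectral subspace; properties (i) and (ii) are then immediate. The main obstacle is the upper bound: turning the slogan ``low-energy states are spread out, so exclusion is a small perturbation'' into a quantitative spectral comparison uniform in $N$ is exactly what the discrete Sobolev trace and extension inequalities of Sections~\ref{sec:Trace}--\ref{sec:Extension} are designed for. The lower bound, by contrast, is essentially a variational computation using the explicit cosine form of $F^{(n)}_{d,N}$ together with Proposition~\ref{prop:Variational1}.
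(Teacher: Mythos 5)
Your lower-bound half (at least $R(d,n,m_1)$ eigenvalues in each window $(m_1-\epsilon,m_1+\epsilon)$, hence at least $\sum_{m_1\le m}R(d,n,m_1)$ dimensions below $m+\epsilon$) is essentially the paper's own argument: it is Lemma \ref{lem:UpperBdVar} combined with the quasi-mode estimates (\ref{eq:AsymptoticEnergy})--(\ref{eq:AsymptoticIP}) of Proposition \ref{prop:Variational1}, and your Chebyshev variant of the Rayleigh--Ritz step is fine.

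The gap is in the upper bound. The extension map $\Xi^{(n)}_{d,N}$ of Proposition \ref{prop:extension0} raises the energy by a factor $C_3(d,n)$ that is large and fixed, not $1+o(1)$: the slogan ``the exclusion constraint is a vanishing perturbation'' is justified by the paper only at the level of norms (Proposition \ref{prop:trace0} gives $\|\widetilde T^{(n)}_{d,N}F\|^2\ge(1-o(1))\|F\|^2$ for low-energy $F$), not at the level of energies. Consequently the min-max comparison you describe only shows that the number of eigenvalues of $\gamma^{-1}[L(d,N)]^2H_{\Lambda(d,N)}$ in $[0,m+\epsilon]$ is at most the number of free-magnon levels in $[0,C_3(d,n)(m+\epsilon)]$, which is strictly larger than the required $\sum_{m_1\le m}R(d,n,m_1)$; your two counts therefore do not match. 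The paper closes exactly this gap with a step your sketch omits: lift an exact eigenvector $\Phi_r$ (rescaled eigenvalue $\lambda_r\le m+\epsilon$) to $G_r=\Xi^{(n)}_{d,N}\Phi_r$, expand it in the free basis of Lemma \ref{lem:specSummary}, truncate by Chebyshev to modes $\nu$ with $\|\nu\|^2\le M$ using the $C_3$-bounded energy, and then use the resolvent-type estimate
$$
|\langle\widetilde\Psi^{(n)}_{d,N}(\k_1(\nu),\dots,\k_n(\nu)),\Phi_r\rangle|^2\,\le\,\frac{\|(\gamma^{-1}L^2H_{\Lambda(d,N)}-\|\nu\|^2)\widetilde\Psi^{(n)}_{d,N}(\k_1(\nu),\dots,\k_n(\nu))\|^2}{|\lambda_r-\|\nu\|^2|^2}\, ,
$$
whose numerator is $o(1)$ by Proposition \ref{prop:Variational1} and whose denominator is bounded below when $\|\nu\|^2>m>\lambda_r-\epsilon$. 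This non-resonance argument kills the coefficients with $m<\|\nu\|^2\le M$ and forces each $\Phi_r$ into a space of dimension $\sum_{m_1\le m}R(d,n,m_1)$, which is what actually yields the matching upper bound. Without something of this kind the constant $C_3(d,n)$ cannot be removed and the theorem does not follow from your outline.
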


\section{Conditional Proof of the Main Theorem}
\label{sec:Proof}

The following easy result is the first step to a calculation technique.
\begin{lemma}
\label{lem:specMon}
For any finite graph $\mathscr{G} = (\mathscr{V},\mathscr{E})$ and any $n \in \{0,\dots,\lfloor \frac{1}{2}|\mathscr{V}|\rfloor\}$:\\
(1) for each choice of $E_1,E_2 \in \R$ with $0\leq E_1\leq E_2$, we have
$$
\dim(\mathfrak{L}_{\mathscr{G}}(E_1,E_2) \cap \Hil^{\mathrm{mag}}_{\mathscr{V}}(n))\, \geq\,
\dim(\mathfrak{L}_{\mathscr{G}}(E_1,E_2) \cap \Hil^{\mathrm{mag}}_{\mathscr{V}}(n-1))\, ;
$$
(2) moreover, we may calculate $\mathfrak{E}_n(\mathscr{G})$ from (\ref{eq:EDef}) as
$$
\mathfrak{E}_n(\mathscr{G})\, =\, \min(\{E \in [0,\infty)\, :\, \dim(\mathfrak{L}_{\mathscr{G}}(0,E) \cap \Hil^{\mathrm{mag}}_{\mathscr{V}}(n))>
\dim(\mathfrak{L}_{\mathscr{G}}(0,E) \cap \Hil^{\mathrm{mag}}_{\mathscr{V}}(n-1))\})\, .
$$
\end{lemma}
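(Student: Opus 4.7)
The plan is to use the $SU(2)$-invariance of $H_{\mathscr{G}}$ together with the weight-space decomposition from Lemma \ref{lem:preliminary}. The key observation for both parts is that $S^-_{\mathscr{V}}$ commutes with $H_{\mathscr{G}}$, hence preserves each eigenspace of $H_{\mathscr{G}}$, and therefore preserves every $\mathfrak{L}_{\mathscr{G}}(E_1,E_2)$, while simultaneously shifting the magnon number up by one. For part (1), I would consider
\begin{equation*}
S^-_{\mathscr{V}} : \mathfrak{L}_{\mathscr{G}}(E_1,E_2) \cap \Hil^{\mathrm{mag}}_{\mathscr{V}}(n-1) \, \to \, \mathfrak{L}_{\mathscr{G}}(E_1,E_2) \cap \Hil^{\mathrm{mag}}_{\mathscr{V}}(n)
\end{equation*}
and show it is injective; the dimension inequality is then immediate. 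Injectivity is a direct consequence of the commutator $[S^+_{\mathscr{V}},S^-_{\mathscr{V}}] = 2S^{(3)}_{\mathscr{V}}$: for any $v \in \Hil^{\mathrm{mag}}_{\mathscr{V}}(n-1)$ one has
\begin{equation*}
\|S^-_{\mathscr{V}} v\|^2 \, = \, \langle v,\, S^+_{\mathscr{V}} S^-_{\mathscr{V}} v\rangle \, = \, \|S^+_{\mathscr{V}} v\|^2 + 2M\|v\|^2,
\end{equation*}
where $M = \tfrac{1}{2}|\mathscr{V}| - (n-1) > 0$ by the hypothesis $n \leq \lfloor \tfrac{1}{2}|\mathscr{V}|\rfloor$; hence $S^-_{\mathscr{V}} v = 0$ forces $v = 0$.

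For part (2), I would apply Lemma \ref{lem:preliminary}(3)--(6) to write the orthogonal decomposition
\begin{equation*}
\Hil^{\mathrm{mag}}_{\mathscr{V}}(n) \, = \, \bigoplus_{k=0}^{\min(n,\, |\mathscr{V}|-n)} \Hil_{\mathscr{V}}(n,k), \qquad \Hil_{\mathscr{V}}(n,k) \, = \, (S^-_{\mathscr{V}})^{n-k}\big(\Hil^{\mathrm{mag}}_{\mathscr{V}}(k) \cap \ker S^+_{\mathscr{V}}\big).
\end{equation*}
Iterating the injectivity established in part (1) shows that $(S^-_{\mathscr{V}})^{n-k}$ is a bijection from $\Hil_{\mathscr{V}}(k,k)$ onto $\Hil_{\mathscr{V}}(n,k)$, and since $S^-_{\mathscr{V}}$ intertwines $H_{\mathscr{G}}$ with itself, the $H_{\mathscr{G}}$-spectra on $\Hil_{\mathscr{V}}(k,k)$ and $\Hil_{\mathscr{V}}(n,k)$ agree as multisets. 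Setting $D_k(E) := \dim(\mathfrak{L}_{\mathscr{G}}(0,E) \cap \Hil_{\mathscr{V}}(k,k))$, the decomposition gives
\begin{equation*}
\dim\big(\mathfrak{L}_{\mathscr{G}}(0,E) \cap \Hil^{\mathrm{mag}}_{\mathscr{V}}(n)\big) - \dim\big(\mathfrak{L}_{\mathscr{G}}(0,E) \cap \Hil^{\mathrm{mag}}_{\mathscr{V}}(n-1)\big) \, = \, D_n(E),
\end{equation*}
because for $n \leq \lfloor \tfrac{1}{2}|\mathscr{V}|\rfloor$ the upper limits of the two sums are $n$ and $n-1$, while all lower-index summands cancel. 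Finally, since $H_{\mathscr{G}}$ acts as a scalar on each $SU(2)$-irreducible component of $\Hil^{\mathrm{spin}}_{\mathscr{V}}(n)$ (Schur's lemma, using $[H_{\mathscr{G}}, S^{(a)}_{\mathscr{V}}]=0$), one has $\min\operatorname{spec}(H_{\mathscr{G}} \restriction \Hil^{\mathrm{spin}}_{\mathscr{V}}(n)) = \min\operatorname{spec}(H_{\mathscr{G}} \restriction \Hil_{\mathscr{V}}(n,n))$, which is exactly the smallest $E \geq 0$ at which $D_n(E)$ first becomes positive. This yields the claimed formula for $\mathfrak{E}_n(\mathscr{G})$.

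The content is essentially representation-theoretic bookkeeping, so there is no single hard step; the main point requiring care is confirming that the summation ranges in the magnon decompositions of $\Hil^{\mathrm{mag}}_{\mathscr{V}}(n)$ and $\Hil^{\mathrm{mag}}_{\mathscr{V}}(n-1)$ line up so that the telescoping produces exactly $D_n(E)$, and that $(S^-_{\mathscr{V}})^{n-k}$ remains a bijection on the relevant highest-weight subspaces for every $k$ in the range. Both requirements are guaranteed by the hypothesis $n \leq \lfloor \tfrac{1}{2}|\mathscr{V}|\rfloor$, which keeps the $S^{(3)}_{\mathscr{V}}$-eigenvalue strictly above the lowest weight in every subspace encountered.
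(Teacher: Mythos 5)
Your proposal is correct and follows essentially the same route as the paper: both arguments rest on $S^-_{\mathscr{V}}$ commuting with $H_{\mathscr{G}}$ and the $\mathrm{SU}(2)$ weight-space decomposition of Lemma \ref{lem:preliminary}, reducing $\mathfrak{E}_n$ to the spectrum on the highest-weight space $\Hil_{\mathscr{V}}(n,n)$. The only cosmetic differences are that you verify injectivity of $S^-_{\mathscr{V}}$ by the explicit commutator computation rather than citing the isomorphism statement in Lemma \ref{lem:preliminary}, and you organize the dimension count as a telescoping sum over spin-deviate sectors instead of via the orthogonal complement $\Hil_{\mathscr{V}}(n,n)=\Hil^{\mathrm{mag}}_{\mathscr{V}}(n)\cap\big(S^-_{\mathscr{V}}\Hil^{\mathrm{mag}}_{\mathscr{V}}(n-1)\big)^{\perp}$.
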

\begin{proof}
By Lemma \ref{lem:preliminary}, we know that $S_{\mathscr{V}}^-$ is an isomorphism of $\Hil_{\mathscr{V}}(m,n)$ onto $\Hil_{\mathscr{V}}(m+1,n)$
for each $m \in \{n,\dots,|\mathscr{V}|-n-1\}$. Since this operator commutes with $H_{\mathscr{G}}$, we may simplify the definition of $\mathfrak{E}_n(\mathscr{G})$
from (\ref{eq:EDef}):
$$
\E_n(\mathscr{G})\, =\, \min\operatorname{spec}\big(H_{\mathscr{G}}
\restriction \Hil_{\mathscr{V}}(n,n)\big)\, .
$$
But, then we see that, again by Lemma \ref{lem:preliminary}, 
$$
\Hil_{\mathscr{V}}(n,n)\, =\, \Hil_{\mathscr{V}}^{\mathrm{mag}}(n) \cap \Big( S_{\mathscr{V}}^-\big(\Hil_{\mathscr{V}}^{\mathrm{mag}}(n-1)\big) \Big)^{\perp}\, .
$$
Also, by Lemma \ref{lem:preliminary}, we know that $S_{\mathscr{V}}^-$ is an isomorphism of $\Hil_{\mathscr{V}}^{\mathrm{mag}}(n-1)$ onto 
$S_{\mathscr{V}}^-\big(\Hil_{\mathscr{V}}^{\mathrm{mag}}(n-1)\big)$.
So, again, since $S_{\mathscr{V}}^-$ commutes with $H_{\mathscr{G}}$, we may conclude both facts stated in the lemma.
\end{proof}
%
%From part (1) of this lemma, applied to $\Lambda(d,N)$ for $N$ sufficiently large, and Theorem \ref{thm:SpecGather}, we may deduce that for each fixed $n,d,m \in \{1,2,\dots\}$
%that $R(d,n,m) \geq R(d,n-1,m)$.
%But, in fact, this is obvious since for any element of $\widetilde{\mathcal{K}}(d,n,m)$ from (\ref{eq:tildecalKdnmDef}),
%we may choose some $(\k_1,\dots,\k_{n}) \in \mathcal{K}(d,n,m)$ such that the element is $[(\k_1,\dots,\k_{n})]_{S_n}$.
%But, then, notice that $(\k_1,\dots,\k_n,\boldsymbol{0})$ is an element of $\mathcal{K}(d,n+1,m)$ (because the sum of the $2$-norms of the wavenumbers does not increase
%when appending $\boldsymbol{0}$ to the list).
%So $[(\k_1,\dots,\k_n,\boldsymbol{0})]_{S_{n+1}}$ is an element of $\widetilde{\mathcal{K}}(d,n+1,m)$.
%Considering this argument with $n$ replaced by $n-1$, we see that  $R(d,n,m) \geq R(d,n-1,m)$, as claimed.
\begin{corollary}
\label{cor:EnAsympFrakm}
Fix $d,n \in \{1,2,\dots\}$.
Define $\mathfrak{m}(d,n) \in \{1,2,\dots\}$ as 
$$
\mathfrak{m}(d,n)\, \stackrel{\mathrm{def}}{:=}\, \min(\{m \in \{1,2,\dots\}\, :\, R(d,n,m)>R(d,n-1,m)\})\, .
$$
For any $\epsilon>0$, and $N\geq N_1(d,n,\mathfrak{m}(d,n),\epsilon)$ (as in Theorem \ref{thm:SpecGather}), we have
$$
\left|\gamma^{-1} [L(d,N)]^2 \mathfrak{E}_n(\Lambda(d,N)) - \mathfrak{m}(d,n)\right|\, \leq\, \epsilon\, .
$$
\end{corollary}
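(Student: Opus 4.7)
The strategy is to express $\E_n(\Lambda(d,N))$ via Lemma~\ref{lem:specMon}(2), which identifies it with the smallest $E\geq 0$ at which
$$
\dim\bigl(\mathfrak{L}_{\Lambda(d,N)}(0,E)\cap \Hil^{\mathrm{mag}}_{\Lambda(d,N)}(n)\bigr)\ >\ \dim\bigl(\mathfrak{L}_{\Lambda(d,N)}(0,E)\cap \Hil^{\mathrm{mag}}_{\Lambda(d,N)}(n-1)\bigr),
$$
and then to count both dimensions by invoking Theorem~\ref{thm:SpecGather} in the $n$- and $(n-1)$-magnon sectors. Replacing $\epsilon$ by $\min(\epsilon,1/3)$ if necessary so that Theorem~\ref{thm:SpecGather} is applicable, and interpreting $N_1(d,n,\mathfrak{m}(d,n),\epsilon)$ as large enough for the theorem to apply in both sectors simultaneously, we write $E_0:=\gamma\,[L(d,N)]^{-2}$ and obtain orthonormal eigenbases $\{\Psi^{(n)}_{d,N}(m_1,r)\}$ and $\{\Psi^{(n-1)}_{d,N}(m_1,r)\}$ for the spaces $\mathfrak{L}_{\Lambda(d,N)}(0,(\mathfrak{m}(d,n)+\epsilon)E_0)\cap \Hil^{\mathrm{mag}}_{\Lambda(d,N)}(n)$ and its $(n-1)$-analog, indexed by $m_1\leq \mathfrak{m}(d,n)$ and $r$ in the appropriate range, with eigenvalues confined to the pairwise disjoint intervals $(m_1-\epsilon,m_1+\epsilon)E_0$.

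For the lower bound, take $e=\mathfrak{m}(d,n)-\epsilon$. Since $\epsilon<1/2$, the basis vectors with eigenvalue $\leq eE_0$ are exactly those with $m_1\leq \mathfrak{m}(d,n)-1$, so
$$
\dim\bigl(\mathfrak{L}_{\Lambda(d,N)}(0,eE_0)\cap \Hil^{\mathrm{mag}}_{\Lambda(d,N)}(n)\bigr)\ =\ \sum_{m_1=0}^{\mathfrak{m}(d,n)-1} R(d,n,m_1),
$$
and analogously for $n-1$. The definition of $\mathfrak{m}(d,n)$ yields $R(d,n,m_1)\leq R(d,n-1,m_1)$ for $m_1<\mathfrak{m}(d,n)$; conversely, Lemma~\ref{lem:specMon}(1) applied to the narrow window $[(m_1-\epsilon)E_0,(m_1+\epsilon)E_0]$, whose intersections with the $n$- and $(n-1)$-magnon sectors have dimensions $R(d,n,m_1)$ and $R(d,n-1,m_1)$ by Theorem~\ref{thm:SpecGather}, supplies the reverse inequality. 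Hence equality holds for $m_1<\mathfrak{m}(d,n)$, the two cumulative dimensions coincide at $E=eE_0$, and by monotonicity in $E$ together with Lemma~\ref{lem:specMon}(1) they coincide for every $E'\leq eE_0$. Lemma~\ref{lem:specMon}(2) then gives $\gamma^{-1}[L(d,N)]^2\,\E_n(\Lambda(d,N))>\mathfrak{m}(d,n)-\epsilon$.

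For the upper bound, take $e=\mathfrak{m}(d,n)+\epsilon$. Each $\Psi^{(n)}_{d,N}(m_1,r)$ with $m_1\leq \mathfrak{m}(d,n)$ has eigenvalue $<(m_1+\epsilon)E_0\leq eE_0$, and by construction these exhaust the basis of $\mathfrak{L}_{\Lambda(d,N)}(0,eE_0)\cap \Hil^{\mathrm{mag}}_{\Lambda(d,N)}(n)$, so the dimension equals $\sum_{m_1=0}^{\mathfrak{m}(d,n)} R(d,n,m_1)$, and likewise for $n-1$. Combining the equalities $R(d,n,m_1)=R(d,n-1,m_1)$ for $m_1<\mathfrak{m}(d,n)$ obtained above with the strict inequality $R(d,n,\mathfrak{m}(d,n))>R(d,n-1,\mathfrak{m}(d,n))$ built into the definition of $\mathfrak{m}(d,n)$ produces a strict excess of the $n$-dimension at $E=eE_0$; Lemma~\ref{lem:specMon}(2) then yields $\gamma^{-1}[L(d,N)]^2\,\E_n(\Lambda(d,N))\leq \mathfrak{m}(d,n)+\epsilon$. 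Combining the two bounds establishes the corollary.

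The only mildly delicate step is the identification $R(d,n,m_1)=R(d,n-1,m_1)$ for $m_1<\mathfrak{m}(d,n)$: the definition of $\mathfrak{m}(d,n)$ supplies only the inequality $\leq$, and the reverse inequality must be extracted from Lemma~\ref{lem:specMon}(1) applied to a narrow spectral window of width $2\epsilon E_0$ centred at $m_1E_0$, rather than to a cumulative spectral projection. Beyond this observation the argument is routine bookkeeping with the approximate eigenbasis furnished by Theorem~\ref{thm:SpecGather}.
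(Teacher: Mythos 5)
Your proof is correct and follows essentially the same route as the paper's: count the cumulative spectral dimensions in the $n$- and $(n-1)$-magnon sectors via Theorem \ref{thm:SpecGather}, compare them just below and just above $\mathfrak{m}(d,n)$, and invoke Lemma \ref{lem:specMon}(2). The only point of divergence is your justification of $R(d,n,m_1)=R(d,n-1,m_1)$ for $m_1<\mathfrak{m}(d,n)$: you obtain the reverse inequality $R(d,n,m_1)\geq R(d,n-1,m_1)$ from Lemma \ref{lem:specMon}(1) applied to a narrow spectral window, whereas the paper takes this equality from the combinatorial bijection (appending a zero mode) appearing in the proof of Lemma \ref{lem:mathfrakm}; your version is self-contained and equally valid.
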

Before proving this simple corollary of Theorem \ref{thm:SpecGather} and Lemma \ref{lem:specMon}, let us note this formula:
\begin{lemma}
\label{lem:mathfrakm}
For each $d,n \in \{1,2,\dots\}$, we have
\begin{equation}
\label{eq:mfrakForm}
\mathfrak{m}(d,n)\, =\, n\, .
\end{equation}
\end{lemma}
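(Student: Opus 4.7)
The plan is to prove this by a direct combinatorial bijection argument, interpreting $R(d,n,m)$ as counting size-$n$ multisets of nonnegative integer vectors with a given squared-norm sum. Recall that $\widetilde{\mathcal{K}}(d,n,m)$ is the set of $S_n$-orbits of $n$-tuples $(\boldsymbol{\kappa}_1,\dots,\boldsymbol{\kappa}_n) \in (\{0,1,\dots\}^d)^n$ with $\sum_{k=1}^n \sum_{j=1}^d \kappa_{k,j}^2 = m$, i.e., size-$n$ multisets of $\{0,1,\dots\}^d$ whose total squared-norm is $m$.

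First I would establish the monotonicity $R(d,n,m) \geq R(d,n-1,m)$ and compute the gap. The key observation is the ``pad-with-zero'' map
$$
\iota : \widetilde{\mathcal{K}}(d,n-1,m) \longrightarrow \widetilde{\mathcal{K}}(d,n,m), \qquad \{\!\!\{\boldsymbol{\kappa}_1,\dots,\boldsymbol{\kappa}_{n-1}\}\!\!\} \longmapsto \{\!\!\{\boldsymbol{\kappa}_1,\dots,\boldsymbol{\kappa}_{n-1},\boldsymbol{0}\}\!\!\},
$$
which is well-defined (the squared-norm sum is preserved, since $\boldsymbol{0}$ has norm $0$) and injective (from a size-$n$ multiset containing $\boldsymbol{0}$ we recover the size-$(n-1)$ multiset uniquely by removing one copy of $\boldsymbol{0}$). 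The image of $\iota$ consists precisely of those size-$n$ multisets that contain at least one copy of $\boldsymbol{0}$. Consequently
$$
R(d,n,m) - R(d,n-1,m)\, =\, \#\big\{\text{size-}n\text{ multisets of vectors in }\{0,1,\dots\}^d\setminus\{\boldsymbol{0}\}\text{ summing (in squared norm) to }m\big\}.
$$

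Second, I would use this to identify $\mathfrak{m}(d,n)$. Each nonzero vector $\boldsymbol{\kappa} \in \{0,1,\dots\}^d \setminus \{\boldsymbol{0}\}$ has integer coordinates and satisfies $\sum_{j=1}^d \kappa_j^2 \geq 1$. Hence any size-$n$ multiset of such vectors has squared-norm sum at least $n$, with equality if and only if every vector in the multiset is a standard basis vector $\boldsymbol{e}_j$ (for some $j \in \{1,\dots,d\}$). For $m < n$ there are no such multisets, so $R(d,n,m) = R(d,n-1,m)$ and hence $\mathfrak{m}(d,n) \geq n$. For $m = n$, the multiset consisting of $n$ copies of $\boldsymbol{e}_1$ shows the count is at least one, so $R(d,n,n) > R(d,n-1,n)$ and thus $\mathfrak{m}(d,n) \leq n$. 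Combining, $\mathfrak{m}(d,n) = n$.

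The argument is purely combinatorial and elementary; there is no genuine obstacle. The only point requiring a moment of care is the passage from tuples to multisets in verifying that $\iota$ is a well-defined injection of $S_n$-orbits, which is immediate from the explicit description of $\widetilde{\mathcal{K}}(d,n,m)$ given by equation~(\ref{eq:tildecalKdnmDef}).
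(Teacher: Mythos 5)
Your proof is correct and follows essentially the same route as the paper's: the ``pad-with-zero'' injection identifying $\widetilde{\mathcal{K}}(d,n-1,m)$ with the orbits in $\widetilde{\mathcal{K}}(d,n,m)$ containing a $\boldsymbol{0}$ part, the observation that every nonzero $\boldsymbol{\kappa}$ has $\|\boldsymbol{\kappa}\|^2\geq 1$ to force equality of the counts for $m<n$, and the multiset of $n$ copies of a standard basis vector to witness $R(d,n,n)>R(d,n-1,n)$. No substantive differences.
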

\begin{proof}
%{\bf Proof of Lemma \ref{lem:mathfrakm}:}
For any $m<n$, choose an element of $\widetilde{\mathcal{K}}(d,n,m)$.
We may find some $(\k_1,\dots,\k_{n}) \in \mathcal{K}(d,n,m)$ such that the chosen element is $[(\k_1,\dots,\k_{n})]_{S_n}$.
But then $\|\k_1\|^2+\dots+\|\k_n\|^2=m$. The minimum nonzero value for $\|\k_j\|^2$ is $1$
for each $j$.
Thus, $m<n$.
So $\k_j=\boldsymbol{0}$ for some $j \in \{1,\dots,n\}$.
Without loss of generality (due to the permutation symmetry), we may assume that the $(\k_1,\dots,\k_{n})$
that we found had $\k_n=\boldsymbol{0}$.
This means that $[(\k_1,\dots,\k_{n-1})]_{S_{n-1}}$
is an element of $\widetilde{\mathcal{K}}(d,n-1,m)$.
Since we have merely removed one zero element, and we mod out by the action of the symmetric
group in the quotient space, this mapping is a bijection. 
So $R(d,n-1,m)=R(d,n,m)$ for each $m<n$.
More generally, even for $m\geq n$, there is always a bijection between the set of elements 
of $\widetilde{\mathcal{K}}(d,n,m)$ having at least one $\boldsymbol{0}$ part
and the set of elements of $\widetilde{\mathcal{K}}(d,n-1,m)$
But, on the other hand, for 
$\boldsymbol{\delta}_{d,1} \stackrel{\mathrm{def}}{:=} (1,0,\dots,0) \in \{0,1,\dots\}^d$, 
we may see that
$[(\boldsymbol{\delta}_{d,1},\dots,\boldsymbol{\delta}_{d,1})]_{S_n}$ is an element
of $\widetilde{\mathcal{K}}(d,n,n)$  not equal to $[(\k_1,\dots,\k_{n-1},\boldsymbol{0})]_{S_n}$
for any choice of  $[(\k_1,\dots,\k_{n-1})]_{S_{n-1}}$
in $\widetilde{\mathcal{K}}(d,n-1,m)$.
So $\widetilde{\mathcal{K}}(d,n,n)$ is not bijective to $\widetilde{\mathcal{K}}(d,n-1,m)$.
Rather, $R(d,n,n) > R(d,n-1,n)$.
\end{proof}
\begin{proofof}{\bf Proof of Corollary \ref{cor:EnAsympFrakm}:}
Let $\widehat{\mathfrak{L}}_{\Lambda(d,N)}(E_1,E_2) = \mathfrak{L}_{\Lambda(d,N)}(\gamma\cdot [L(d,N)]^{-2}E_1,\gamma\cdot [L(d,N)]^{-2}E_2)$.
Let $m = \mathfrak{m}(d,n)$ (which is $n$).
We know, by Theorem \ref{thm:SpecGather}, that for any $m_1 \in \{1,2,\dots\}$ with $m_1<m$, we have
$$
\dim\left(\widehat{\mathfrak{L}}_{\Lambda(d,N)}(0,m_1+\epsilon) \cap \Hil^{\mathrm{mag}}_{\Lambda(d,N)}(n)\right)\, =\, \sum_{m_2=0}^{m_1} R(d,n,m_2)\, .
$$
But since $m_1<m$, we have that this is also equal to $\sum_{m_2=0}^{m_1} R(d,n-1,m_2)$.
Therefore, we have
$$
\dim\left(\widehat{\mathfrak{L}}_{\Lambda(d,N)}(0,m_1+\epsilon) \cap \Hil^{\mathrm{mag}}_{\Lambda(d,N)}(n)\right)\, =\, 
\dim\left(\widehat{\mathfrak{L}}_{\Lambda(d,N)}(0,m_1+\epsilon) \cap \Hil^{\mathrm{mag}}_{\Lambda(d,N)}(n-1)\right)\, .
$$
So $\gamma^{-1}\cdot[L(d,N)]^2\mathfrak{E}_n(\Lambda(d,N))>m_1+\epsilon$, by Lemma \ref{lem:specMon}.
By Theorem \ref{thm:SpecGather}, we also know that $\dim(\widehat{\mathfrak{L}}_{\Lambda(d,N)}(m-1+\epsilon,m-\epsilon) \cap \Hil^{\mathrm{mag}}_{\Lambda(d,N)}(n))$ equals $0$.
Hence, $\gamma^{-1}\cdot[L(d,N)]^2\mathfrak{E}_n(\Lambda(d,N))\geq m-\epsilon$.
But we also know, by Theorem \ref{thm:SpecGather},
\begin{align*}
\dim\left(\widehat{\mathfrak{L}}_{\Lambda(d,N)}(0,m+\epsilon) \cap \Hil^{\mathrm{mag}}_{\Lambda(d,N)}(n)\right)\, 
&=\, 
\sum_{m_1=0}^{m-1} R(d,n,m_1) + R(d,n,m)\\
&=\, \sum_{m_1=0}^{m-1} R(d,n-1,m_1) + R(d,n,m)\\
&>\, \sum_{m_1=0}^{m-1} R(d,n,m_1) + R(d,n-1,m)\\
&=\, \dim\left(\widehat{\mathfrak{L}}_{\Lambda(d,N)}(0,m+\epsilon) \cap \Hil^{\mathrm{mag}}_{\Lambda(d,N)}(n-1)\right)\, .
\end{align*}
So $\gamma^{-1} [L(d,N)]^2\mathfrak{E}_n(\Lambda(d,N))\leq m+\epsilon$.
\end{proofof}

Combining Corollary \ref{cor:EnAsympFrakm} and  Lemma \ref{lem:mathfrakm}, we conclude the following.
\begin{corollary}
\label{cor:EnAsymptoticsn}
Fix $d,n \in \{1,2,\dots\}$. Then, 
$$
\mathfrak{E}_n(\Lambda(d,N))\, \sim\, \gamma \cdot [L(d,N)]^{-2} \cdot n\, ,\quad \text{ as $N \to \infty$.}
$$
\end{corollary}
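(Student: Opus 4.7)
The claim is essentially a rewriting of the two results just established, so the plan is to invoke them directly rather than do new work. The plan is to substitute $\mathfrak{m}(d,n)=n$ from Lemma \ref{lem:mathfrakm} into the quantitative statement of Corollary \ref{cor:EnAsympFrakm}, and then translate the resulting two-sided inequality on $\gamma^{-1}[L(d,N)]^2\mathfrak{E}_n(\Lambda(d,N))$ into the asymptotic equivalence claimed.

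More concretely, first I would fix $d,n \in \{1,2,\dots\}$ and an arbitrary $\epsilon \in (0,1/2)$. Corollary \ref{cor:EnAsympFrakm} applied with this $\epsilon$ and with $\mathfrak{m}(d,n)$ in place of $m$ yields an integer $N_1(d,n,\mathfrak{m}(d,n),\epsilon)$ such that for every $N \geq N_1(d,n,\mathfrak{m}(d,n),\epsilon)$,
\[
\bigl|\gamma^{-1}\,[L(d,N)]^2\,\mathfrak{E}_n(\Lambda(d,N)) - \mathfrak{m}(d,n)\bigr|\, \leq\, \epsilon\,.
\]
By Lemma \ref{lem:mathfrakm}, $\mathfrak{m}(d,n)=n$, so the displayed inequality becomes a bound on $|\gamma^{-1}[L(d,N)]^2\mathfrak{E}_n(\Lambda(d,N))-n|$. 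Since $\epsilon$ was arbitrary and $N \to \infty$ forces $N\geq N_1(d,n,n,\epsilon)$ eventually, this yields
\[
\lim_{N\to\infty}\, \gamma^{-1}\,[L(d,N)]^2\,\mathfrak{E}_n(\Lambda(d,N))\, =\, n\,.
\]
Since $n\neq 0$, this is exactly the asymptotic equivalence $\mathfrak{E}_n(\Lambda(d,N)) \sim \gamma\cdot [L(d,N)]^{-2}\cdot n$ as $N\to\infty$, which is the conclusion.

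There is essentially no obstacle: all the work has been done upstream. Lemma \ref{lem:mathfrakm} is a purely combinatorial identification of the smallest $m$ at which $R(d,n,m)$ exceeds $R(d,n-1,m)$, and Corollary \ref{cor:EnAsympFrakm} is where the spin-wave approximation (Theorem \ref{thm:SpecGather}) combined with the dimension-counting criterion of Lemma \ref{lem:specMon} does the real work of pinning $\gamma^{-1}[L(d,N)]^2\mathfrak{E}_n(\Lambda(d,N))$ to $\mathfrak{m}(d,n)$. The only thing to check carefully is that the sequence $N_1(d,n,n,\epsilon)$ exists as a finite integer for each $\epsilon>0$, which is guaranteed by Theorem \ref{thm:SpecGather}. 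So the whole proof reduces to a one-line substitution followed by taking $\epsilon\to 0$.
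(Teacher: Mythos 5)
Your proof is correct and follows exactly the paper's own argument: substitute $\mathfrak{m}(d,n)=n$ from Lemma \ref{lem:mathfrakm} into Corollary \ref{cor:EnAsympFrakm} and let $\epsilon\to 0$. Nothing to add.
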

\begin{proof}
Corollary \ref{cor:EnAsympFrakm} and equation (\ref{eq:mfrakForm}) from  Lemma \ref{lem:mathfrakm} imply that, for each $\epsilon>0$, we have $\mathfrak{E}_n(\Lambda(d,N))/(\gamma \cdot [L(d,N)]^{-2} \cdot n)$
is in the interval $[1-\epsilon,1+\epsilon]$ for sufficiently large $N$:
specifically,
$N>N_1(d,n,n,\epsilon)$, from Theorem \ref{thm:SpecGather}. That is the definition of asymptotic equivalence.
\end{proof}

Recall from Definition \ref{def:LambdadN} that $L(d,N) = \lfloor N^{1/d} \rfloor$.
In particular, this means $L(d,N) \sim N^{1/d}$, as $N \to \infty$.
So, Corollary \ref{cor:EnAsymptoticsn} may be rewritten as
\begin{equation}
\label{eq:EnAsymptoticAlgebraic}
\E_n(\Lambda(d,N))\, \sim\, n \gamma N^{-2/d}\, , \quad \text{ as $N \to \infty$.}
\end{equation}
We will use (\ref{eq:EnAsymptoticAlgebraic}).
We also need the following:
\begin{lemma}
\label{lem:realAn}
Suppose that $(t_{a},t_{a+1},\dots)$ is a sequence of strictly positive numbers such that $t_N \sim C N^{-p}$ for some $C \in (0,\infty)$ and some $p>0$.
Let us define a function $\nu : \{a,a+1,\dots\} \to \{a,a+1,\dots\} \cup \{\infty\}$ by the formula
$$
\nu(N)\,
=\, \inf\{N' \in \{N,N+1,\dots\}\, :\, t_{N'} = \min\{t_a,\dots,t_{N'}\}\}\, .
$$
Then $t_{\nu(N)} \sim C N^{-p}$.
\end{lemma}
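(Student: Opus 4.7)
The plan is to reduce the lemma to showing that $\nu(N)/N\to 1$ as $N\to\infty$. Once this is in hand, the conclusion is immediate: since $\nu(N)\geq N\to\infty$, the hypothesis gives $t_{\nu(N)}\sim C\nu(N)^{-p}$, and $(\nu(N)/N)^{-p}\to 1$ then upgrades this to $t_{\nu(N)}\sim CN^{-p}$. So the entire content is the asymptotic $\nu(N)\sim N$.

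First I would check that $\nu(N)$ is finite and satisfies $\nu(N)\geq N$. Finiteness is forced by $t_N\to 0$ with $t_N>0$: any index $N''\geq N$ for which $t_{N''}<\min\{t_a,\dots,t_{N-1}\}$ contains, in the interval $[N,N'']$, an argmin of $t$ on $[a,N'']$, and this argmin qualifies as $\nu(N)$. The bound $\nu(N)\geq N$ is built into the definition and gives $t_{\nu(N)}\leq t_N$, hence the easy half $\limsup_N N^p t_{\nu(N)}\leq C$.

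The meat of the proof is the matching bound $\nu(N)\leq\lceil KN\rceil$, for every $K>1$ and all $N$ large. I would fix $\epsilon\in(0,1)$ small enough that $K>((1+\epsilon)/(1-\epsilon))^{1/p}$, then pick $M_0$ large enough that $(1-\epsilon)CM^{-p}\leq t_M\leq(1+\epsilon)CM^{-p}$ for $M\geq M_0$. For $N$ large, one has $t_{\lceil KN\rceil}\leq(1+\epsilon)C(KN)^{-p}(1+o(1))$, whereas for any $M\in[a,N-1]$ the value $t_M$ is bounded below either by $(1-\epsilon)C(N-1)^{-p}$ (when $M\geq M_0$) or by the positive constant $\min_{a\leq M<M_0}t_M$ (which is eventually dominated by the previous bound). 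The choice of $\epsilon$ makes the upper bound on $t_{\lceil KN\rceil}$ strictly smaller than the lower bound on $t_M$ for $M<N$, so the minimum of $t$ on $[a,\lceil KN\rceil]$ is attained at some index $N^{\ast}\in[N,\lceil KN\rceil]$; by definition $\nu(N)\leq N^{\ast}\leq\lceil KN\rceil$. Sending $\epsilon\downarrow 0$ sends $K\downarrow 1$, which gives $\nu(N)/N\to 1$ and finishes the proof.

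I do not expect a genuine obstacle here; the only subtlety is the bookkeeping for the finitely many initial terms $t_a,\dots,t_{M_0-1}$ that are not controlled by the asymptotic, which however are eventually dominated by the envelope $(1-\epsilon)CM^{-p}$ and so do not affect the argument.
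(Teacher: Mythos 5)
Your proposal is correct and follows essentially the same route as the paper's proof: both bracket $t_M$ between $(1\mp\epsilon)CM^{-p}$ for $M$ large, handle the finitely many uncontrolled initial terms separately, and show that a new running minimum must occur in the window $[N,\lceil KN\rceil]$ with $K$ slightly larger than $((1+\epsilon)/(1-\epsilon))^{1/p}$ (the paper's $A(\epsilon;n)$ is exactly your $\lceil KN\rceil$), which bounds $\nu(N)$ from above. The only cosmetic difference is that you first deduce $\nu(N)\sim N$ and then transfer the asymptotic, while the paper sandwiches $t_{\nu(N)}$ directly between $(1-\epsilon)C[A(\epsilon;N)]^{-p}$ and $t_N$; these are equivalent.
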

This is an exercise in advanced calculus. For completeness we include its proof in Appendix \ref{app:Silly}.
%
%\subsection{Proof of Theorem \ref{thm:main}}
Now we may give the conditional proof of our main theorem.

\begin{proofof}{\bf Proof of Theorem \ref{thm:main}:}
This follows from Corollary \ref{cor:EnAsymptoticsn}, as rewritten in equation (\ref{eq:EnAsymptoticAlgebraic}), as well as Lemma \ref{lem:realAn},
and Proposition \ref{prop:inductiveNewLow}.
For each $n$, let $\nu_n$ be defined as 
\begin{equation}
\label{eq:def-nu-n}
\nu_n(N)\, =\, \inf\{N' \in \{N,N+1,\dots\}\, :\, \E_n(\Lambda(d,N')) = \min\{\E_n(\Lambda(d,2n)),\dots,\E_n(\Lambda(d,N'))\}\}\, .
\end{equation}
Then we know that $\E_n(\Lambda(d,\nu_n(N))) \sim C_n N^{-p}$ by (\ref{eq:EnAsymptoticAlgebraic}), where $C_n = n\gamma$ and $p=2/d$.
The hypotheses of Proposition \ref{prop:inductiveNewLow}
are satisfied for $n$ and the finite sequence of graphs 
$$
(\Lambda(d,2n),\dots,\Lambda(d,\nu_n(N))
$$ 
by (\ref{eq:def-nu-n}).
So we know that
$$
\min_{N' \in \{2n,\dots,\nu_n(N)\}}\, \min_{n' \in \{n,\dots,\lfloor N'/2 \rfloor\}} \E_{n'}(\Lambda(d,N'))\, =\, \E_n(\Lambda(d,\nu_n(N)))\, .
$$
In particular, since $\nu_n(N)\geq N$, we have
$$
\min_{n' \in \{n,\dots,\lfloor N/2 \rfloor\}} \E_{n'}(\Lambda(d,N))\, \geq\, \E_n(\Lambda(d,\nu_n(N)))\, .
$$
But as we already established, $\E_n(\Lambda(d,\nu_n(N))) \sim C_n N^{-p}$. Therefore, defining a new quantity $\E^{\min}_{n\uparrow}(\Lambda(d,N))$ as
\begin{equation}
\label{eq:Def-E-min-n-plus}
\E^{\min}_{n\uparrow}(\Lambda(d,N))\, =\, \min_{n' \in \{n,\dots,\lfloor N/2 \rfloor\}} \E_{n'}(\Lambda(d,N))\, ,
\end{equation}
we actually have $\liminf_{N \to \infty} \E^{\min}_{n\uparrow}(\Lambda(d,N)) / (C_n N^{-p}) \geq 1$.
But also, $\E^{\min}_{n\uparrow}(\Lambda(d,N)) \leq \E_n(\Lambda(d,N))$ because $\E_n(\Lambda(d,N))$
is one of the terms in the minimum on the right hand side (\ref{eq:Def-E-min-n-plus}). So since $\E_n(\Lambda(d,N)) \sim C_n N^{-p}$, as $N \to \infty$, we also have
$$
\limsup_{N \to \infty} \frac{\E^{\min}_{n\uparrow}(\Lambda(d,N))}{C_n N^{-p}}\, \leq\, 1\, .
$$
In other words, putting these two inequalities together,
\begin{equation*}
%\label{eq:AsymptoticN}
\E^{\min}_{n\uparrow}(\Lambda(d,N))\, \sim\, C_n N^{-p}\, ,\, \text{ as $N \to \infty$.}
\end{equation*}
But now this allows us to get the result we want, because this is true for all $n$.
In particular, fixing $n$, it is also true for $n+1$. So
\begin{equation*}
%\label{eq:AsymptoticN}
\E^{\min}_{n+1\uparrow}(\Lambda(d,N))\, \sim\, C_{n+1} N^{-p}\, .
\end{equation*}
But $C_{n+1} = (n+1)\gamma$ is strictly greater than $C_n=n\gamma$.
From the definition  (\ref{eq:Def-E-min-n-plus}) again,
$$
\E^{\min}_{n\uparrow}(\Lambda(d,N))\, =\, \min\{\E_n(\Lambda(d,N)),\E^{\min}_{n+1\uparrow}(\Lambda(d,N))\}\, .
$$
But $\E_n(\Lambda(d,N)) \sim C_n N^{-p}$ and $\E^{\min}_{n+1\uparrow}(\Lambda(d,N)) \sim C_{n+1} N^{-p}$.
The strict inequality $C_n < C_{n+1}$ implies that the
former is strictly smaller than the latter for sufficiently large $N$.
So
$$
\E^{\min}_{n\uparrow}(\Lambda(d,N))\, =\, \E_n(\Lambda(d,N))\, ,\text{ and }\
\E^{\min}_{n\uparrow}(\Lambda(d,N))\, <\, \E^{\min}_{n+1\uparrow}(\Lambda(d,N))\, ,
$$
for sufficiently large $N$.
Decoding the definition  (\ref{eq:Def-E-min-n-plus}) again, this gives the desired result.
\end{proofof}

\section{Proof of Key Step I: Inductive argument for FOEL}
\label{sec:Induct}

%\section{Inductive bounds}

%In \cite{NSS1} we proved the analogue of this result without considering generalized coupling constants.
%But as long as the coupling constants are monotone in the correct manner, then the result extends easily.
%Our motivation was generalizing an argument that had been proved for the spectral gap in \cite{KN}.
%It may be noted that after the appearance of the original preprint for \cite{KN}, the method was rediscovered
%independently in \cite{HJ}.
%These two results appeared in different contexts: quantum spin systems and interacting particle processes.
%Coupling constants played an important role in Handjani and Jungreis's paper \cite{HJ}.
%However, they never used the extra freedom offered by varying couplings to prove new results.
%That is the key tool for us.

In this section we will prove Proposition \ref{prop:inductive1} and
Proposition \ref{prop:inductiveNewLow}.
For Proposition \ref{prop:inductive1} we will actually prove a generalization, where
the generalization is that we do not consider just the Heisenberg model on general graphs with all coupling constants
equal to 1.
But we allow the coupling constants, themselves, to be variable.
%
%The idea in this subsection is an outgrowth of an argument
%originally due to Handjani and Jungreis \cite{HandjaniJungreis} and independently Koma and one of the authors \cite{KomaNachtergaele}. It is an inductive argument for FOEL.
%In order to state this result in generality let us make some definitions.
\begin{definition}
Given a graph $\mathscr{G} = (\mathscr{V},\mathscr{E})$, let us say that $J$
is a ``valid choice of coupling coefficients for $\mathscr{G}$'' if $J$ denotes a function $J : \mathscr{E} \to \R$, such that $J(\{x,y\})$ is nonnegative for each $\{x,y\} \in \mathscr{E}$.
We sometimes say ``$J$ is valid for $\mathscr{G}$.''
\end{definition}
Then we define the Heisenberg Hamiltonian $H(\mathscr{G},J)$ on $\Hil_{\mathscr{V}}$ as
\begin{equation}
\label{eq:HamDefGeneralization}
H(\mathscr{G},J)\, =\, \sum\nolimits_{\{x,y\} \in \mathscr{E}} J(\{x,y\}) h_{xy}\, ,
\end{equation}
where $h_{xy}$ is just as in the original definition (\ref{eq:HamDef}).
We define $\E_n(\mathscr{G},J)$ as
\begin{equation}
\label{eq:EnGJdef}
\E_n(\mathscr{G},J)\, 
=\,
\min\operatorname{spec}\left(H(\mathscr{G},J) \restriction \Hil^{\mathrm{spin}}_{\mathscr{V}}(n)\right)\, ,
\end{equation}
analogously to the definition in (\ref{eq:EDef}).
\begin{definition}
\label{def:FOELJ}
We say that $(\mathscr{G},J)$ satisfies FOEL-$n$ if $\E_n(\mathscr{G},J) \leq \E_{r}(\mathscr{G},J)$ for all $r \geq n$, i.e., for all $r \in \{n,\dots,\lfloor \frac{1}{2}|\mathscr{V}| \rfloor\}$.
\end{definition}

We note that Lemma \ref{lem:preliminary} remains true if we replace $H(\mathscr{G})$ by $H(\mathscr{G},J)$ for a valid choice of couplings.
The analogue of Proposition \ref{prop:LiebMattis} is as follows.
Given $(\mathscr{G},J)$ with $J$ valid for $\mathscr{G}$, define a new graph
$\mathscr{G}_J = (\mathscr{V},\mathscr{E}_J)$ with 
$\mathscr{E}_J = \{\{x,y\} \in \mathscr{E}\, :\, J(\{x,y\})>0\}$.
Then $(\mathscr{G},J)$ satisfies strict FOEL-$0$ if $\mathscr{G}_J$ is connected.
Lieb and Mattis prove all of their results in \cite{LiebMattis} including this one using
general couplings. We merely stated Proposition \ref{prop:LiebMattis} for the special case that 
all couplings are equal to 1
for ease of exposition  in the introduction.
In particular, the analogue of equation (\ref{eq:psdHam}) is true:
if $J$ is a valid choice of coupling coefficients for $\mathscr{G}$, then
\begin{equation}
\label{eq:psdHam1}
H(\mathscr{G},J)\, \geq\, 0\, .
\end{equation}

%\begin{proposition}
%\label{prop:inductive}
%Suppose that $\mathscr{G} = (\mathscr{V},\mathscr{E})$ is a finite graph, and that
%$\mathscr{G}' = (\mathscr{V}',\mathscr{E}')$ is a graph such that $\mathscr{V}' = \mathscr{V} \cup \{x'\}$ for a 
%single vertex $x'$ not in $\mathscr{V}$, and such that $\mathscr{E} \subseteq \mathscr{E}'$. Let ${J}$ be valid for $\mathscr{G}$,
%and let ${J}'$ be valid for $\mathscr{G}'$.
%
%Suppose for some $n \in \{0,1,\dots,\lfloor s_{\mathscr{V}} \rfloor\}$,
%that $(\mathscr{G},{J})$ satisfies FOEL-$n$.
%Suppose, further that the following two conditions are satisfied:
%\begin{itemize}
%\item[(1)] ${J}'(\{x,y\}) \geq {J}(\{x,y\})$ for all edges $\{x,y\} \in \mathscr{E}$.
%\item[(2)] $\E_{n}(\mathscr{G}',{J}') \leq \E_n(\mathscr{G},{J})$.
%\end{itemize}
%Then $(\mathscr{G}',{J}')$ satisfies FOEL-$n$.
%\end{proposition}

%  Proposition \ref{prop:inductive} and then Theorem \ref{thm:NewLow}.
%In order to have a complete proof of Step I, we also need to prove Lemma \ref{lem:realAn}.
%But we will delay the proof of Lemma \ref{lem:realAn} to Appendix \ref{app:Silly} since it is a basic fact from advanced calculus.
%In this section, we want to focus on the main idea of the inductive argument, not that advanced calculus fact.
%
%For these results, it helps to have the following simple fact.
The following lemma is a key to the inductive proof of FOEL.
In order to simplify notation, let us begin to use the following natural convention:
\begin{definition}
If $n \in \{1,2,\dots\}$ satisfies $n>\lfloor \frac{1}{2}|\mathscr{V}|\rfloor$, then let us define $\mathfrak{E}_n(\mathscr{G},J)$
to be $+\infty$, interpreting the empty minimum in (\ref{eq:EnGJdef}) as $+\infty$. Similarly, for the case of constant coupling coefficients
$1$, we also define $\mathfrak{E}_n(\mathscr{G})$ to be $+\infty$, in this case that $n$ is larger than the maximum possible value
to get $\Hil^{\mathrm{spin}}_{\mathscr{V}}(n)$ different than $\{0\}$.
\end{definition}
\begin{lemma}
\label{lem:InductiveRefined}
Suppose that $\mathscr{G} = (\mathscr{V},\mathscr{E})$ is a finite graph, and that
$\mathscr{G}' = (\mathscr{V}',\mathscr{E}')$ is a graph such that $\mathscr{V}' = \mathscr{V} \cup \{x'\}$ for a 
single vertex $x'$ not in $\mathscr{V}$, and such that $\mathscr{E} \subseteq \mathscr{E}'$. Let ${J}$ be a valid choice of coupling coefficients for $\mathscr{G}$,
and let ${J}'$ be a valid choice of coupling coefficients for $\mathscr{G}'$.
Suppose, moreover, that
${J}'(\{x,y\}) \geq {J}(\{x,y\})$ for all edges $\{x,y\} \in \mathscr{E}$.
Then $\E_n(\mathscr{G}',J') \geq \min\{\E_n(\mathscr{G},J),\E_{n-1}(\mathscr{G},J)\}$ for each $n\in \{1,\dots,\lfloor \frac{1}{2}|\mathscr{V}'| \rfloor\}$.
\end{lemma}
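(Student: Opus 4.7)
The plan is a variational argument starting from a ground-state eigenvector of $H(\mathscr{G}',J')$ on $\Hil^{\mathrm{spin}}_{\mathscr{V}'}(n)$. First I would observe that, if we identify $H(\mathscr{G},J)$ with $H(\mathscr{G},J) \otimes \unity_{x'}$ acting on $\Hil_{\mathscr{V}'}$, then
\begin{equation*}
H(\mathscr{G}',J') - H(\mathscr{G},J) = \sum_{\{x,y\} \in \mathscr{E}} \bigl(J'(\{x,y\}) - J(\{x,y\})\bigr) h_{xy} + \sum_{\{x,y\} \in \mathscr{E}' \setminus \mathscr{E}} J'(\{x,y\}) h_{xy}
\end{equation*}
is itself a Heisenberg Hamiltonian with nonnegative (valid) couplings on $\mathscr{G}'$, and hence is positive semi-definite by (\ref{eq:psdHam1}). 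Consequently, for any normalized $\Psi \in \Hil^{\mathrm{spin}}_{\mathscr{V}'}(n)$ satisfying $H(\mathscr{G}',J')\Psi = \E_n(\mathscr{G}',J')\Psi$, one has $\langle \Psi, H(\mathscr{G},J)\Psi\rangle \leq \E_n(\mathscr{G}',J')$, so it suffices to show $\langle \Psi, H(\mathscr{G},J)\Psi\rangle \geq \min\{\E_n(\mathscr{G},J),\E_{n-1}(\mathscr{G},J)\}$.

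For the lower bound I would decompose $\Psi$ according to the $\mathscr{V}$-total spin. The extended Casimir $\mathcal{C}_{\mathscr{V}}\otimes\unity_{x'}$ commutes with $H(\mathscr{G},J)$, since the latter acts trivially on $\Hil_{x'}$ and is $\mathrm{SU}(2)$-invariant on $\Hil_{\mathscr{V}}$. Let $P_k$ denote the orthogonal projection of $\Hil_{\mathscr{V}'}$ onto $\Hil^{\mathrm{spin}}_{\mathscr{V}}(k)\otimes \Hil_{x'}$, and set $\Psi_k = P_k\Psi$. The crucial input is the $\mathrm{SU}(2)$ addition rule for coupling a spin $\tfrac{1}{2}$ to a spin $s_{\mathscr{V}} = \tfrac{1}{2}|\mathscr{V}|-k$: the resulting total $\mathscr{V}'$-spin is $s_{\mathscr{V}}\pm \tfrac{1}{2}$, which translated back via $s_{\mathscr{V}'} = \tfrac{1}{2}|\mathscr{V}'|-n$ forces $n=k$ or $n=k+1$. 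Thus $\Psi_k=0$ unless $k\in\{n-1,n\}$, and $\|\Psi_n\|^2 + \|\Psi_{n-1}\|^2 = 1$.

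Since $P_k$ commutes with $H(\mathscr{G},J)$,
\begin{equation*}
\langle \Psi, H(\mathscr{G},J)\Psi\rangle = \langle \Psi_n, H(\mathscr{G},J)\Psi_n\rangle + \langle \Psi_{n-1}, H(\mathscr{G},J)\Psi_{n-1}\rangle,
\end{equation*}
and because $H(\mathscr{G},J)\otimes\unity_{x'}$ restricted to $\Hil^{\mathrm{spin}}_{\mathscr{V}}(k)\otimes\Hil_{x'}$ has minimum eigenvalue $\E_k(\mathscr{G},J)$, each term is bounded below by $\E_k(\mathscr{G},J)\|\Psi_k\|^2$. The convex combination yields the desired bound $\min\{\E_n(\mathscr{G},J),\E_{n-1}(\mathscr{G},J)\}$. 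The main delicate point is verifying the Clebsch-Gordan restriction to $k\in\{n-1,n\}$, which I would deduce directly from the irreducible decomposition encoded in Lemma \ref{lem:preliminary}. The boundary case $n > \lfloor |\mathscr{V}|/2\rfloor$, in which $\Psi_n$ automatically vanishes, is absorbed by the convention $\E_n(\mathscr{G},J)=+\infty$, so that the two-term minimum still gives a meaningful bound.
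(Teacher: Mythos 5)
Your proposal is correct and follows essentially the same route as the paper: identify $\Hil_{\mathscr{V}'}\cong\Hil_{\mathscr{V}}\otimes\C^2$, note that $H(\mathscr{G}',J')-H(\mathscr{G},J)\otimes\mathbbm{1}$ is a Heisenberg Hamiltonian with nonnegative couplings and hence positive semi-definite, decompose a minimizing eigenvector via the Clebsch--Gordan rule into its $\Hil^{\mathrm{spin}}_{\mathscr{V}}(n)$ and $\Hil^{\mathrm{spin}}_{\mathscr{V}}(n-1)$ components, and bound each term of the resulting convex combination below by the corresponding $\E_k(\mathscr{G},J)$. The paper performs exactly this decomposition (writing $\psi=\psi_1+\psi_2$ rather than using explicit projections $P_k$), so no further comment is needed.
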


\begin{proof}
Note that there is a canonical isomorphism $\Hil_{\mathscr{V}'} \cong \Hil_{\mathscr{V}} \otimes \C^2$,
because the last spin site $x'$ also has spin Hilbert space equivalent to $\C^2$.
By addition of angular momenta, using the definition in Lemma \ref{lem:preliminary}, and the formula for addition of angular momenta in $\mathrm{SU}(2)$
\begin{equation}
\label{eq:Addition}
\Hil_{\mathscr{V}'}^{\mathrm{spin}}(n)\, \subseteq \big(\Hil_{\mathscr{V}}^{\mathrm{spin}}(n) \oplus
\Hil_{\mathscr{V}}^{\mathrm{spin}}(n-1)\big) \otimes \C^2\, ,
\end{equation}
for each $n \in \{0,\dots,\lfloor s_{\max}(\mathscr{V}') \rfloor\}$, and where we define
$\Hil_{\mathscr{V}}^{\mathrm{spin}}(-1) = \{0\}$ for consistency, and we define $\Hil_{\mathscr{V}}^{\mathrm{spin}}(n)=\{0\}$ if $n>\lfloor \frac{1}{2}|\mathscr{V}|\rfloor$.
See for example \cite{Edmonds} for a proof of ``addition of angular momenta'' for $\mathrm{SU}(2)$ representations.

Now we may choose a vector $\psi \in \Hil_{\mathscr{V}'}^{\mathrm{spin}}(n)$ with $\|\psi\|=1$ and 
such that $H(\mathscr{G}',J') \psi = E_n(\mathscr{G}',J') \psi$
because $E_n(\mathscr{G}',J')$ is the minimum of the spectrum of $H(\mathscr{G}',J')$ restricted to the subspace 
$\Hil_{\mathscr{V}'}^{\mathrm{spin}}(n)$.
Moreover, by (\ref{eq:Addition}), we may find two orthogonal vectors $\psi_1 \in \Hil_{\mathscr{V}}^{\mathrm{spin}}(n) \otimes \C^2$
and $\psi_2 \in \Hil_{\mathscr{V}}^{\mathrm{spin}}(n-1) \otimes \C^2$ such that $\psi = \psi_1 + \psi_2$.
Furthermore, since $J'(\{x,y\}) \geq J(\{x,y\})$ for all $\{x,y\} \in \mathscr{E}$ (and since $J'(\{x,y\}) \geq 0$ for all $\{x,y\} \in \mathscr{E}' \setminus \mathscr{E}$),
we have, in part using equation (\ref{eq:psdHam1}), 
$$
H(\mathscr{G}',J')\, \geq\, H(\mathscr{G},J) \otimes \mathbbm{1}_{\C^2}\, ,
%\qquad \text{(the difference is positive semi-definite)}
$$
again, viewing $\Hil_{\mathscr{V}'}$ as equivalent to  $\Hil_{\mathscr{V}} \otimes \C^2$.
Therefore, we have
\begin{equation}
\label{eq:InductiveArgumentStep1}
\E_n(\mathscr{G}',J')\, =\, \langle \psi, H(\mathscr{G}',J') \psi\rangle\,
\geq\,  \langle \psi, \big(H(\mathscr{G},J) \otimes \mathbbm{1}_{\C^2}\big) \psi\rangle\, .
\end{equation}
But both of the subspaces, $\Hil_{\mathscr{V}}^{\mathrm{spin}}(n) \otimes \C^2$ and 
$\Hil_{\mathscr{V}}^{\mathrm{spin}}(n-1) \otimes \C^2$, are invariant subspaces for 
$H(\mathscr{G},J) \otimes \mathbbm{1}_{\C^2}$.
Hence, we may actually write
\begin{equation}
\label{eq:InductiveArgumentStep2}
\langle \psi, \big(H(\mathscr{G},J) \otimes \mathbbm{1}_{\C^2}\big) \psi\rangle\, 
=\, \langle \psi_1, \big(H(\mathscr{G},J) \otimes \mathbbm{1}_{\C^2}\big) \psi_1\rangle
+ \langle \psi_2, \big(H(\mathscr{G},J) \otimes \mathbbm{1}_{\C^2}\big) \psi_2\rangle\, .
\end{equation}
But by the definition of $\E_n(\mathscr{G},J)$ as the minimum of the spectrum of $H(\mathscr{G},J)$ restricted to the subspace 
$\Hil_{\mathscr{V}}^{\mathrm{spin}}(n)$ and similarly for $\E_{n-1}(\mathscr{G},J)$, we have
$$
\langle \psi_1, \big(H(\mathscr{G},J) \otimes \mathbbm{1}_{\C^2}\big) \psi_1\rangle\,
\geq\, \E_n(\mathscr{G},J) \|\psi_1\|^2\, ,\qquad
\langle \psi_2, \big(H(\mathscr{G},J) \otimes \mathbbm{1}_{\C^2}\big) \psi_2\rangle\,
\geq\, \E_{n-1}(\mathscr{G},J) \|\psi_2\|^2\, .
$$
Therefore, combining this with \eqref{eq:InductiveArgumentStep2}, we have
\begin{equation}
\label{eq:InductiveArgumentStep3}
\langle \psi, \big(H(\mathscr{G},J) \otimes \mathbbm{1}_{\C^2}\big) \psi\rangle\, 
\geq\, \min\{\E_r(\mathscr{G},J),\E_{r-1}(\mathscr{G},J)\} (\|\psi_1\|^2 + \|\psi_2\|^2)\, .
\end{equation}
By orthogonality, $\|\psi_1\|^2  + \|\psi_2\|^2 = \|\psi\|^2 = 1$.
So combining  \eqref{eq:InductiveArgumentStep1} with \eqref{eq:InductiveArgumentStep3} gives the result.
\end{proof}

\begin{proposition}
\label{prop:inductive}
Suppose that $\mathscr{G} = (\mathscr{V},\mathscr{E})$ is a finite graph, and that
$\mathscr{G}' = (\mathscr{V}',\mathscr{E}')$ is a graph such that $\mathscr{V}' = \mathscr{V} \cup \{x'\}$ for a 
single vertex $x'$ not in $\mathscr{V}$, and such that $\mathscr{E} \subseteq \mathscr{E}'$. Let ${J}$ be valid for $\mathscr{G}$,
and let ${J}'$ be valid for $\mathscr{G}'$.

Suppose for some $n \in \{0,1,\dots,\lfloor \frac{1}{2}|\mathscr{V}| \rfloor\}$,
that $(\mathscr{G},{J})$ satisfies FOEL-$n$.
Suppose, further that the following two conditions are satisfied:
\begin{itemize}
\item[(1)] ${J}'(\{x,y\}) \geq {J}(\{x,y\})$ for all edges $\{x,y\} \in \mathscr{E}$.
\item[(2)] $\E_{n}(\mathscr{G}',{J}') \leq \E_n(\mathscr{G},{J})$.
\end{itemize}
Then $(\mathscr{G}',{J}')$ satisfies FOEL-$n$.
\end{proposition}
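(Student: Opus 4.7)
The plan is to derive Proposition \ref{prop:inductive} as an almost immediate corollary of Lemma \ref{lem:InductiveRefined}. Hypothesis (1) is precisely the coupling-monotonicity that the lemma requires, so for every $r$ in the relevant range I can simply plug into the lemma's conclusion and collapse the minimum on the right-hand side using FOEL-$n$ for $(\mathscr{G},J)$ together with hypothesis (2).

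First I would unpack the target: by Definition \ref{def:FOELJ}, verifying FOEL-$n$ for $(\mathscr{G}', J')$ amounts to showing $\E_n(\mathscr{G}', J') \leq \E_r(\mathscr{G}', J')$ for every $r \in \{n, \dots, \lfloor \frac{1}{2}|\mathscr{V}'|\rfloor\}$. The case $r = n$ is vacuous, so the content lies in $r \geq n+1$. Fix such an $r$ and apply Lemma \ref{lem:InductiveRefined} with $n$ replaced by $r$, which is legitimate because $r \geq 1$; this yields
$$
\E_r(\mathscr{G}', J') \,\geq\, \min\{\E_r(\mathscr{G}, J),\, \E_{r-1}(\mathscr{G}, J)\}.
$$
Since $r \geq n+1$, both $r$ and $r-1$ are at least $n$, and so FOEL-$n$ for $(\mathscr{G}, J)$ supplies $\E_r(\mathscr{G}, J) \geq \E_n(\mathscr{G}, J)$ and $\E_{r-1}(\mathscr{G}, J) \geq \E_n(\mathscr{G}, J)$. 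Chaining these with hypothesis (2) gives
$$
\E_r(\mathscr{G}', J') \,\geq\, \E_n(\mathscr{G}, J) \,\geq\, \E_n(\mathscr{G}', J'),
$$
which is the required FOEL-$n$ inequality for $(\mathscr{G}', J')$ at level $r$.

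The only bookkeeping point to check is that $r-1$ remains an admissible index for $(\mathscr{G}, J)$ whenever $r \leq \lfloor \frac{1}{2}|\mathscr{V}'|\rfloor$. Since $|\mathscr{V}'| = |\mathscr{V}| + 1$, a short case split on the parity of $|\mathscr{V}|$ shows $r - 1 \leq \lfloor \frac{1}{2}|\mathscr{V}|\rfloor$, so the spin-deviate subspace used to define $\E_{r-1}(\mathscr{G}, J)$ is nontrivial and the FOEL-$n$ bound applies verbatim; in the event of an edge case the $+\infty$ convention preceding Lemma \ref{lem:InductiveRefined} only makes the right-hand side of the lemma larger, so the argument is unaffected. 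I do not anticipate any substantive obstacle: all of the real work has already been absorbed into Lemma \ref{lem:InductiveRefined}, and Proposition \ref{prop:inductive} is simply the packaging needed to feed that estimate into the subsequent inductive arguments (Corollary \ref{cor:inductive1} and Proposition \ref{prop:inductiveNewLow}).
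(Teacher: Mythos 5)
Your proof is correct and is essentially identical to the paper's: both derive the result by applying Lemma \ref{lem:InductiveRefined} with $n$ replaced by $r\geq n+1$, collapsing the resulting minimum via FOEL-$n$ for $(\mathscr{G},J)$, and chaining with hypothesis (2). Your additional remark on the admissibility of the index $r-1$ and the $+\infty$ convention is a minor bookkeeping point the paper leaves implicit, but it does not change the argument.
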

\begin{proof}
Suppose $r\geq n+1$ is fixed.
Applying Lemma \ref{lem:InductiveRefined} (with $n$ in that lemma replaced by $r$),
$$
\E_r(\mathscr{G}',J')\, \geq\, \min\{\E_r(\mathscr{G},J),\E_{r-1}(\mathscr{G},J)\}\, .
$$
But by the assumption that $(\mathscr{G},J)$ satisfies FOEL-$n$, we know that
$$
\min_{r\geq n+1}\min\{\E_r(\mathscr{G},J),\E_{r-1}(\mathscr{G},J)\}\, \geq\, \E_n(\mathscr{G},J)\, .
$$
Therefore, we conclude
$$
\min_{r\geq n+1}\E_r(\mathscr{G}',J')\, \geq\, \min_{r\geq n+1}\min\{\E_r(\mathscr{G},J),\E_{r-1}(\mathscr{G},J)\}\, \geq\,  \E_n(\mathscr{G},J)\,
\geq\, \E_n(\mathscr{G}',J')\, .
$$
But this is a restatement of FOEL-$n$ from Definition \ref{def:FOELJ}.
\end{proof}
Obviously, the proof of Proposition \ref{prop:inductive1} follows.

\begin{proofof}{\bf Proof of Proposition \ref{prop:inductive1}:}
Use Proposition \ref{prop:inductive}, with $J(\{x,y\})=1$ for all $\{x,y\} \in \mathscr{E}$
and $J'(\{x,y\}) = 1$ for all $\{x,y\} \in \mathscr{E}'$.
It is easy to see that the hypotheses of Proposition \ref{prop:inductive} are satisfied for these choices, given
the hypotheses of Proposition \ref{prop:inductive1}.
\end{proofof}
We can also, immediately prove the following corollary
\begin{corollary}
\label{cor:inductive}
For any $n \in \N$ and for any $N \in \{n+1,n+2,\dots\}$, suppose that $\mathscr{G}_{2n},\mathscr{G}_{2n+1},\dots,\mathscr{G}_{N}$ forms a family of graphs, 
and for each $k \in \{2n,\dots,N\}$ that $J_k : \mathscr{E}_k \to \R$ is a valid choice of couplings on $\mathscr{G}_k$,
satisfying the following conditions.
\begin{itemize}
\item[(i)] For each $k \in \{2n,\dots,N\}$, we have that $\mathscr{G}_k = (\mathscr{V}_k,\mathscr{E}_k)$ satisfies $|\mathscr{V}_k|=k$.
\item[(ii)] For each $k \in \{2n,\dots,N-1\}$, we have that 
\begin{itemize}
\item[(a)] $\mathscr{V}_{k} \subset \mathscr{V}_{k+1}$, 
\item[(b)] $\mathscr{E}_k \subset \mathscr{E}_{k+1}$,
\item[(c)] $|\mathscr{V}_{k+1} \setminus \mathscr{V}_k| = 1$, and
\item[(d)] $J_{k+1}(\{x,y\}) \geq J_k(\{x,y\})$ for each $\{x,y\} \in \mathscr{E}_k$.
\end{itemize}
\item[(iii)] For each $k \in \{2n,\dots,N-1\}$, we have that $\E_n(\mathscr{G}_{k+1},J_{k+1}) \leq \E_n(\mathscr{G}_{k},J_k)$. 
\end{itemize}
Then $(\mathscr{G}_k,J_k)$ satisfies FOEL-$n$ according to Definition \ref{def:FOELJ} for each $k \in \{2n,\dots,N\}$.
\end{corollary}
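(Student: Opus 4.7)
The plan is to mimic the proof of Corollary \ref{cor:inductive1} essentially verbatim, but with Proposition \ref{prop:inductive} in place of Proposition \ref{prop:inductive1}, since Proposition \ref{prop:inductive} is precisely the version of the inductive step that allows a nondecreasing family of couplings. The base case and the induction step decompose cleanly, so I anticipate no real obstacle.

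For the base case, I would take $k = 2n$. Since $|\mathscr{V}_{2n}| = 2n$, we have $\lfloor \frac{1}{2}|\mathscr{V}_{2n}|\rfloor = n$, so the only value $r \in \{n,\dots,\lfloor \frac{1}{2}|\mathscr{V}_{2n}|\rfloor\}$ is $r = n$ itself. Hence $(\mathscr{G}_{2n},J_{2n})$ satisfies FOEL-$n$ trivially, exactly as in the proof of Corollary \ref{cor:inductive1}.

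For the induction step, I would fix $k \in \{2n,\dots,N-1\}$ and assume as the inductive hypothesis that $(\mathscr{G}_k,J_k)$ satisfies FOEL-$n$. I then want to verify the two hypotheses of Proposition \ref{prop:inductive} with the substitutions $(\mathscr{G},J) \mapsto (\mathscr{G}_k,J_k)$ and $(\mathscr{G}',J') \mapsto (\mathscr{G}_{k+1},J_{k+1})$. Condition (ii)(a)--(c) of Corollary \ref{cor:inductive} exactly supplies the structural hypothesis that $\mathscr{V}_{k+1} = \mathscr{V}_k \cup \{x'\}$ for a single new vertex $x' \notin \mathscr{V}_k$ and that $\mathscr{E}_k \subseteq \mathscr{E}_{k+1}$. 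Condition (ii)(d) supplies hypothesis (1) of Proposition \ref{prop:inductive}, namely $J_{k+1}(\{x,y\}) \geq J_k(\{x,y\})$ for each $\{x,y\} \in \mathscr{E}_k$. Condition (iii) supplies hypothesis (2), namely $\E_n(\mathscr{G}_{k+1},J_{k+1}) \leq \E_n(\mathscr{G}_k,J_k)$. Proposition \ref{prop:inductive} then yields that $(\mathscr{G}_{k+1},J_{k+1})$ satisfies FOEL-$n$, which closes the induction.

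The only mildly delicate point is that one must note, as part of the induction, that the value of $n$ under consideration remains allowable at each step, i.e.\ $n \leq \lfloor \frac{1}{2}|\mathscr{V}_k|\rfloor$ for all $k \in \{2n,\dots,N\}$; this is automatic since $|\mathscr{V}_k| = k \geq 2n$. Apart from that bookkeeping, the argument is a direct transcription of the earlier proof, so I would expect it to occupy only a few lines.
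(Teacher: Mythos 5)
Your proposal is correct and follows exactly the route the paper takes: the paper's proof of this corollary simply says to repeat the proof of Corollary \ref{cor:inductive1}, using Proposition \ref{prop:inductive} everywhere the earlier proof used Proposition \ref{prop:inductive1}. Your verification of the base case and of the hypotheses of Proposition \ref{prop:inductive} in the induction step matches the intended argument.
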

\begin{proof}
The proof is just like the proof of Corollary \ref{cor:inductive1}, except that now we use Proposition \ref{prop:inductive}
everywhere that the old proof used Proposition \ref{prop:inductive1}.
\end{proof}

Next we will prove  Proposition \ref{prop:inductiveNewLow}.
It follows from Corollary \ref{cor:inductive}.

\begin{definition}
\label{def:DilutedSystem}
Suppose that $\mathscr{G}_n,\dots,\mathscr{G}_N$ is an ordered chain of graphs, then we 
say that a valid choice of coupling constants $J_k : \mathscr{E}_k \to [0,\infty)$ for each $k=n,\dots,N$ is a ``diluted system''
if
\begin{itemize}
\item[(i)] ${J}_k \leq \mathbf{1}_{\mathscr{E}_k}$, pointwise, for each $k \in \{n,\dots,N-1\}$, and ${J}_{N} = \mathbf{1}_{\mathscr{E}_N}$;
\item[(ii)] ${J}_{n} \leq \dots\leq {J}_{N}$ in the sense of Proposition \ref{prop:inductive} condition (1).
\end{itemize}
\end{definition}
%The key to Corollary \ref{cor:NewLow} is the following.
\begin{lemma}
\label{lem:Zaftig}
If $\mathscr{G}_{2n} ,\dots, \mathscr{G}_N$ is an ordered chain of graphs, and if $N$ is a new low of 
the sequence $(\E_n(\Lambda_k))_{k=2n}^{N}$,
then there is a 
``diluted system'' $J_{2n} \leq \dots \leq J_N$ such that
$(\E_n(\mathscr{G}_k,J_k))_{k=2n}^{N}$ is monotone non-increasing,
and $J_N = \mathbf{1}_{\mathscr{E}_N}$. In other words, $H(\mathscr{G}_N,J_N) = H(\mathscr{G}_N)$.
\end{lemma}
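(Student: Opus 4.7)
The plan is a forward induction on $k$ that produces $J_{2n},\dots,J_{N-1}$ with $\E_n(\mathscr{G}_k,J_k)$ identically equal to $E^{\ast}:=\E_n(\mathscr{G}_N)$, after which I will set $J_N:=\mathbf{1}_{\mathscr{E}_N}$ by hand. The new-low hypothesis gives $E^{\ast}\le\E_n(\mathscr{G}_k)$ for every $k\in\{2n,\dots,N\}$, which is exactly the slack needed. For the base case, the scalar homogeneity $H(\mathscr{G},cJ)=cH(\mathscr{G},J)$ built into \eqref{eq:HamDefGeneralization} makes $J_{2n}:=(E^{\ast}/\E_n(\mathscr{G}_{2n}))\,\mathbf{1}_{\mathscr{E}_{2n}}$ a valid choice: it lies below $\mathbf{1}_{\mathscr{E}_{2n}}$ and satisfies $\E_n(\mathscr{G}_{2n},J_{2n})=E^{\ast}$. (If $\E_n(\mathscr{G}_{2n})=0$, the new-low hypothesis forces $E^{\ast}=0$, and then the trivial choice $J_k\equiv 0$ for $k<N$ finishes the argument.)

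For the inductive step, suppose $J_k$ has been chosen with $\E_n(\mathscr{G}_k,J_k)=E^{\ast}$. Let $J_k^{\mathrm{ext}}$ denote the extension of $J_k$ to $\mathscr{E}_{k+1}$ that equals $0$ on the newly appearing edges, and consider the linear path
\[
J_{k+1}^{t}\;=\;(1-t)\,J_k^{\mathrm{ext}}+t\,\mathbf{1}_{\mathscr{E}_{k+1}}\, ,\qquad t\in[0,1].
\]
For every $t$ this path stays in $[0,\mathbf{1}_{\mathscr{E}_{k+1}}]$, and, using $J_k\le\mathbf{1}_{\mathscr{E}_k}$, one has $J_{k+1}^{t}(\{x,y\})=J_k(\{x,y\})+t(1-J_k(\{x,y\}))\ge J_k(\{x,y\})$ on old edges, so the diluted-system conditions for $(J_k,J_{k+1}^{t})$ hold throughout the path. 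At $t=0$ the Hamiltonian factors as $H(\mathscr{G}_k,J_k)\otimes\mathbbm{1}_{\C^2}$, and the angular momentum decomposition \eqref{eq:Addition}, used just as in the proof of Lemma \ref{lem:InductiveRefined}, gives that the minimum of this factored operator on $\Hil^{\mathrm{spin}}_{\mathscr{V}_{k+1}}(n)$ equals $\min\{\E_n(\mathscr{G}_k,J_k),\E_{n-1}(\mathscr{G}_k,J_k)\}\le E^{\ast}$. At $t=1$, $\E_n(\mathscr{G}_{k+1},J_{k+1}^{1})=\E_n(\mathscr{G}_{k+1})\ge E^{\ast}$ by the new-low hypothesis. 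The map $t\mapsto\E_n(\mathscr{G}_{k+1},J_{k+1}^{t})$ is continuous, because eigenvalues of a continuously parametrized family of self-adjoint operators restricted to a fixed finite-dimensional subspace vary continuously in the parameter, so the intermediate value theorem supplies some $t^{\ast}\in[0,1]$ with $\E_n(\mathscr{G}_{k+1},J_{k+1}^{t^{\ast}})=E^{\ast}$, and I would set $J_{k+1}:=J_{k+1}^{t^{\ast}}$.

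Once the induction terminates, the compatibility $J_{N-1}\le\mathbf{1}_{\mathscr{E}_{N-1}}=J_N|_{\mathscr{E}_{N-1}}$ is automatic, the sequence $(\E_n(\mathscr{G}_k,J_k))_{k=2n}^{N}$ is the constant $E^{\ast}$ and therefore trivially monotone non-increasing, and $H(\mathscr{G}_N,J_N)=H(\mathscr{G}_N)$ as required. The main technical point is really the upper bound at $t=0$: one must exhibit a genuine witness vector in $\Hil^{\mathrm{spin}}_{\mathscr{V}_{k+1}}(n)$ attaining $\min\{\E_n(\mathscr{G}_k,J_k),\E_{n-1}(\mathscr{G}_k,J_k)\}$, obtained by projecting $\phi\otimes\ket{\uparrow}$ (for a minimizer $\phi$ in the relevant $\Hil^{\mathrm{spin}}_{\mathscr{V}_k}(\cdot)$ sector) onto the correct total-spin subspace; after that, the rest of the proof is a one-parameter continuity argument combined with the scalar homogeneity of the Heisenberg Hamiltonian.
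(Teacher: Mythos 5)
Your proof is correct and follows essentially the same route as the paper's: extend $J_k$ by zero to $\mathscr{E}_{k+1}$, interpolate linearly toward $\mathbf{1}_{\mathscr{E}_{k+1}}$, use the tensor-product witness $\phi\otimes\ket{\uparrow}$ for the upper bound at $t=0$, and apply the intermediate value theorem to hit the target value. The only (harmless) variation is the invariant you maintain --- the constant value $\E_n(\mathscr{G}_N)$, achieved by rescaling $J_{2n}$ and invoking the new-low hypothesis at every step --- whereas the paper keeps $\E_n(\mathscr{G}_k,J_k)$ equal to the running minimum $\min_{i\le k}\E_n(\mathscr{G}_i)$ starting from $J_{2n}=\mathbf{1}_{\mathscr{E}_{2n}}$ and uses the new-low hypothesis only to conclude that the final interpolation terminates at $t=1$.
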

\begin{proofof}{\bf Proof of Lemma \ref{lem:Zaftig}:}
We will actually construct a diluted system $J_{2n},\dots,J_N$ satisfying
\begin{equation}
\label{eq:MinHull}
\E_n(\mathscr{G}_k,J_k) = \min\{\E_n(\mathscr{G}_i)\, :\, i =2n,\dots,k\}\, ,
\end{equation}
for each $k = 2n,\dots,N$.

The proof is by induction. 
For the initial step, we let $J_{2n} = \mathbf{1}_{\mathscr{E}_{2n}}$.
Then \eqref{eq:MinHull} follows, trivially.

Suppose that some diluted system $J_{2n},\dots,J_{k}$ has been constructed up to $k$, for some $k \in \{2n,\dots,N-1\}$,
satisfying (\ref{eq:MinHull}).
For each $t \in [0,1]$, we define a choice of valid coupling constants $J_{k+1}^{(t)} : \mathscr{E}_{k+1} \to [0,\infty)$
as follows:
$$
J^{(t)}_{k+1}(\{x,y\})\,
=\, \begin{cases} (1-t) J_k(\{x,y\}) + t & \text{ for $\{x,y\} \in \mathscr{E}_k$,}\\
t & \text{ for $\{x,y\} \in \mathscr{E}_{k+1} \setminus \mathscr{E}_k$,}\\
0 & \text{ for $\{x,y\} \not\in \mathscr{E}_{k+1}$.}
\end{cases}
$$
Note that $J^{(0)}_{k+1}$ is $J_k$ viewed as a function on $\mathscr{E}_{k+1}$, as one can see from the formula.
The couplings are pointwise non-decreasing in $t$.
Also, $J^{(1)}_{k+1} = \mathbf{1}_{\mathscr{E}_{k+1}}$.

Because of the last fact, viewing $\Hil_{\mathscr{V}_{k+1}}$ as being canonically isomorphic to $\Hil_{\mathscr{V}_k} \otimes \C^2$,
we have
$$
H(\mathscr{G}_{k+1},J^{(0)}_{k+1})\, =\, H(\mathscr{G}_k,J_k) \otimes \mathbbm{1}_{\C^2}\, .
$$
We may choose a vector $\phi \in \Hil_{\mathscr{V}_k}(n,n)$ such that $\|\phi\|=1$ and $H(\mathscr{G}_k,J_k) \phi = \E_n(\mathscr{G}_k,J_k) \phi$.
Then we may take $\psi = \phi \otimes \ket{\uparrow}$ in  $\Hil_{\mathscr{V}_{k+1}}$.
This will be in $\Hil_{\mathscr{V}_{k+1}}(n,n)$, and $H(\mathscr{G}_{k+1},J^{(0)}_{k+1}) \psi = \E_n(\mathscr{G}_k,J_k) \phi$.
These facts are non-trivial. But we leave them to the reader as an exercise in the definitions of magnetization and spin, and an application
of ``addition of angular momentum.''
This proves that 
\begin{equation}
\label{ineq:UpperZeroPoint}
\E_n(\mathscr{G}_{k+1},J^{(0)}_{k+1})\, \leq\, \E_n(\mathscr{G}_k,J_k)\, .
\end{equation}
There are two cases to consider.

In Case 1, we have the condition $\E_n(\mathscr{G}_{k+1}) \leq \E_n(\mathscr{G}_k,J_k)$. Then we take $J_{k+1} = J^{(1)}_{k+1}$. In other words,
in that case, we choose $t=1$. This means $J_{k+1} = \mathbf{1}_{\mathscr{E}_{k+1}}$ in this case.
That means that $\E_n(\mathscr{G}_{k+1},J_{k+1}) = \E_n(\mathscr{G}_{k+1})$.

In Case 2 we have $\E_n(\mathscr{G}_{k+1}) > \E_n(\mathscr{G}_k,J_k)$.
Then we know the following.
The mapping $t \mapsto \E_n(\mathscr{G}_{k+1},J_{k+1}^{(t)})$ is non-decreasing and continuous in $t$.
At $t=0$ we get a number at most equal to $\E_n(\mathscr{G}_k,J_k)$ according to \eqref{ineq:UpperZeroPoint}.
At $t=1$ we get $\E_n(\mathscr{G}_{k+1})$, a number strictly greater than $\E_n(\mathscr{G}_k,J_k)$.
Therefore, there is at least one intermediate value $t \in [0,1)$ with $\E_n(\mathscr{G}_{k+1},J_{k+1}^{(t)}) =\E_n(\mathscr{G}_k,J_k)$.
We let $\mathscr{T}$ be the set of all such $t$'s, and we let $t_* = \sup(\mathscr{T})$. By continuity of the mapping, $t_* \in \mathscr{T}$.
We let $J_{k+1} = J_{k+1}^{(t^*)}$.
In this case, we have $\E_n(\mathscr{G}_{k+1},J_{k+1}) =\E_n(\mathscr{G}_k,J_k)$.

In Case 1, we assumed $\E_n(\mathscr{G}_{k+1}) \leq \E_n(\mathscr{G}_k,J_k)$ and obtained $J_{k+1}$ such that $\E_n(\mathscr{G}_{k+1},J_{k+1}) = \E_n(\mathscr{G}_{k+1})$.
In Case 2, we assumed $\E_n(\mathscr{G}_{k+1}) > \E_n(\mathscr{G}_k,J_k)$ and obtained $J_{k+1}$ such that 
$$
\E_n(\mathscr{G}_{k+1},J_{k+1})\, =\, \E_n(\mathscr{G}_k,J_k)\, .
$$
Therefore, in either case, we have
$$
\E_n(\mathscr{G}_{k+1},J_{k+1})\, =\, \min\{\E_n(\mathscr{G}_k,J_k),\E_n(\mathscr{G}_{k+1})\}\, .
$$
By the induction hypothesis and (\ref{eq:MinHull}) we see that in fact
$$
\E_n(\mathscr{G}_{k+1},J_{k+1})\, =\, \min\{\E_n(\mathscr{G}_{2n}),\dots,\E_n(\mathscr{G}_{k+1})\}\, .
$$
But this is precisely \eqref{eq:MinHull} with $k$ replaced by $k+1$. So the induction step is proved.
\end{proofof}

\begin{proofof}{\bf Proof of Proposition \ref{prop:inductiveNewLow}:}
By Lemma \ref{lem:Zaftig} we know that there exists a diluted system $J_{2n},\dots,J_{N}$ such that $H(\mathscr{G}_N,J_N) = H(\mathscr{G}_N)$.
Then, by Corollary \ref{cor:inductive}, we see that $(\mathscr{G}_k,J_k)$ satisfies FOEL-$n$, for each $k=2n,\dots,N$.
Therefore, we have, for any $k \in \{2n,\dots,N\}$, 
$$
\min_{r\geq n}\E_r(\mathscr{G}_k,J_k)\, \geq\, \E_n(\mathscr{G}_k,J_k)\, .
$$
But since $(\E_n(\mathscr{G}_k,J_k))_{k=2n}^{N}$ is non-increasing, this means
\begin{equation}
\label{ineq:slashE}
\min_{r\geq n}\E_r(\mathscr{G}_k,J_k)\, \geq\, \E_n(\mathscr{G}_k,J_k)\, \geq\, \E_n(\mathscr{G}_N,J_N)\, =\, \E_n(\mathscr{G}_N)\, ,
\end{equation}
where the last equality holds because $H(\mathscr{G}_N,J_N) = H(\mathscr{G}_N)$.
Moreover, since $J_k \leq \mathbf{1}_{\mathscr{E}_k}$, we know that as operators $H(\mathscr{G}_k,J_k) \leq H(\mathscr{G}_k)$,
meaning the difference is positive semi-definite.
This is the type of ordering which leads to inequalities for the Rayleigh quotients. 
Therefore, we also have $\E_r(\mathscr{G}_k,J_k) \leq \E_r(\mathscr{G}_k)$ for all $r$.
Putting this together with \eqref{ineq:slashE}, we obtain
$$
\E_n(\mathscr{G}_N)\, \leq\, \E_r(\mathscr{G}_k,J_k)\, \leq\, \E_r(\mathscr{G}_k)\, ,
$$
for each $k \in \{2n,\dots,N\}$ and each $r \in \{n,\dots,\lfloor k/2 \rfloor\}$.

That is what we wanted to prove.
\end{proofof}

\section{Proof of Key Step  II -- Part A: Linear Spin Wave Approximation}
\label{sec:LSW}

We now wish to prove Key Step II, meaning
Theorem \ref{thm:SpecGather}.
We also needed Proposition \ref{prop:Variational1} in the conditional proof of the main theorem.
But we will prove Proposition \ref{prop:Variational1} in Appendix \ref{app:Variational1}.
It will be easy, but it will follow a direct calculation using the explicit formula for the approximate eigenvectors.

Theorem \ref{thm:SpecGather} and Proposition \ref{prop:Variational1}
 give an approximation for the low-energy spectrum of the
ferromagnetic Heisenberg Hamiltonian $H_{\Lambda(d,N)}$ in the limit $N\to\infty$.
This approximation is part of the linear spin wave approximation:
it is the specialization of the linear spin wave approximation at lowest energies,
which are energies bounded by a finite multiple of the spectral gap.
We will complete this part in several steps.
\begin{itemize}
\item
In this section, Part A, we introduce the transformation needed to compare to an ideal Bose gas.
We also state the main results about this transformation and an approximate inverse, showing that these transformations do not distort low-energy
trial wave functions, too much.
This relies on two important types of bounds.
\item
In Part B, we give variational arguments, combined with Chebyshev's inequality, to apply this comparison to prove 
Theorem \ref{thm:SpecGather}.
To apply Chebyshev's inequality, we will also need  Proposition \ref{prop:Variational1}. But this proposition will be proved in Appendix \ref{app:Variational1}
\item
In Part C, we prove one of the two important types of bounds, which is a trace-theorem type bound.
It shows that, relative to the ideal Bose gas, a variational trial wave vector with low energy does not lose much norm
in the contraction transformation to the quantum spin system.
\item
In Part D, which is an extension theorem, we prove a complementary result: given a low energy
trial wave vector for the quantum spin system, we may extend it to a wave vector for the ideal
Bose gas such that we raise the energy only by a bounded factor.
\end{itemize}
The linear spin wave approximation goes back to Bloch \cite{Bloch},
who used the unproven hypothesis to derive a non-rigorous formula for the low temperature pressure.
There were further developments by Dyson in \cite{Dyson1,Dyson2} who showed there must be corrections to the naive approach.
The first rigorous analysis was initiated by Conlon and Solovej 
who obtained one-sided bounds for the pressure  \cite{ConlonSolovej1,ConlonSolovej2}.
The bounds were improved by Balint Toth \cite{Toth}. 
Improtantly, Toth also introduced a new representation
for the quantum Heisenberg ferromagnet in terms of the interchange process\footnote{
The interchange process is also known as the stirring process in \cite{Liggett},
where it is discussed especially as the graphical representation for the symmetric exclusion process.
Also, coincidentally, Larry Thomas had noted a relation between the Heisenberg model and the symmetric
exclusion process in \cite{Thomas}, although he did not capitalize on this observation.}.

More recently, the linear spin wave approximation was reconsidered by Correggi, Giuliani and Seiringer
who improved Conlon and Solovej's and Toth's bounds, to verify Bloch's original ansatz for the asymptotic formula for the free energy
at low temperatures \cite{CorreggiGiulianiSeiringer}.
Their article is a good reference for the linear spin wave approximation.
%We will mention some relations between our work and theirs in the outlook and discussion, Section \ref{sec:Outlook}.

For $n \in \{0,1,\dots\}$, 
let $\ell^2(\mathscr{V}^n)$ denote the vector space (of dimension $|\mathscr{V}|^n$)
of functions $F : \mathscr{V}^n \to \C$, with the usual $\ell^2$ norm
$
\|F\|^2 = \sum_{(x_1,\dots,x_n) \in \mathscr{V}^n} |F(x_1,\dots,x_n)|^2
$.
%\begin{definition}
%For each $k \in \{1,\dots,n\}$, define $R_{n,k} : \mathscr{V}^{n+1}\to\mathscr{V}^n$ as 
%$$
%R_{n,k}(x_1,\dots,x_n,y)\, 
%=\, (x_1,\dots,x_{k-1},y,x_{k+1},\dots,x_n)\, .
%$$
%\end{definition}
Then, 
for each $k \in \{1,\dots,n\}$, we denote
the $k$th particle graph Laplacian as
$\mathfrak{H}^{(n)}_{\mathscr{G},k}:\ell^2(\mathscr{V}^n)\to \ell^2(\mathscr{V}^n)$, where, for each $F \in \ell^2(\mathscr{V}^n)$
and each $(x_1,\dots,x_n) \in \mathscr{V}^n$,
\begin{equation}
\label{eq:Hfree}
\mathfrak{H}^{(n)}_{\mathscr{G},k} F(x_1,\dots,x_n)\,
=\, \frac{1}{2}\,
\sum_{y\in\mathcal{N}(\mathscr{E},x_k)} (F(x_1,\dots,x_n)
- F(x_1,\dots,x_{k-1},y,x_{k+1},\dots,x_n))\, .
\end{equation}
Given any $\pi \in S_n$, we may define the operator $U_{\mathscr{V}}(\pi) : \ell^2(\mathscr{V}^n) \to \ell^2(\mathscr{V}^n)$ such that
$$
U_{\mathscr{V}}(\pi) F(x_1,\dots,x_n)\, =\, F(x_{\pi^{-1}(1)},\dots,x_{\pi^{-1}(n)})\, ,
$$
for each $F \in \ell^2(\mathscr{V}^n)$
and each $(x_1,\dots,x_n) \in \mathscr{V}^n$.
\begin{lemma}
\label{lem:free}
(a) For each $\pi \in S_n$, the operator $U_{\mathscr{V}}(\pi)$ is unitary.\\
(b) For each $k \in \{1,\dots,n\}$ and each $\pi \in S_n$,
$$
U_{\mathscr{V}}(\pi)^* \mathfrak{H}^{(n)}_{\mathscr{G},k} U_{\mathscr{V}}(\pi)\, 
=\, \mathfrak{H}^{(n)}_{\mathscr{G},\pi^{-1}(k)}\, .
$$
\end{lemma}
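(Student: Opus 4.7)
\textbf{Proof proposal for Lemma \ref{lem:free}.}
The plan is a direct computation from the definitions for both parts; there are no conceptual difficulties, and the only thing that requires attention is bookkeeping of the permutation indices. It is convenient to introduce the bijection $T_\pi : \mathscr{V}^n \to \mathscr{V}^n$ given by $T_\pi(x_1,\dots,x_n) := (x_{\pi^{-1}(1)},\dots,x_{\pi^{-1}(n)})$, so that $U_\mathscr{V}(\pi)F = F\circ T_\pi$.

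For part (a), the key observation is that each $T_\pi$ is a bijection of $\mathscr{V}^n$ onto itself, with inverse $T_{\pi^{-1}}$ (this is immediate from the formula). Therefore the change of summation variable $y = T_\pi x$ in
\begin{equation*}
\langle U_\mathscr{V}(\pi)F, U_\mathscr{V}(\pi)G\rangle = \sum_{x \in \mathscr{V}^n} \overline{F(T_\pi x)}\,G(T_\pi x)
\end{equation*}
shows immediately that $\langle U_\mathscr{V}(\pi)F, U_\mathscr{V}(\pi)G\rangle = \langle F, G\rangle$, i.e., $U_\mathscr{V}(\pi)$ is an isometry. A direct substitution verifies $U_\mathscr{V}(\pi)U_\mathscr{V}(\pi^{-1}) = I = U_\mathscr{V}(\pi^{-1})U_\mathscr{V}(\pi)$, so $U_\mathscr{V}(\pi)$ is invertible and hence unitary, with $U_\mathscr{V}(\pi)^* = U_\mathscr{V}(\pi^{-1})$.

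For part (b), I would apply the left-hand side to an arbitrary $F \in \ell^2(\mathscr{V}^n)$ and evaluate at a generic point $(x_1,\dots,x_n)$. Setting $G := U_\mathscr{V}(\pi)F$ and using $U_\mathscr{V}(\pi)^* = U_\mathscr{V}(\pi^{-1})$, one gets
\begin{equation*}
\bigl(U_\mathscr{V}(\pi)^*\mathfrak{H}^{(n)}_{\mathscr{G},k}G\bigr)(x_1,\dots,x_n) = \bigl(\mathfrak{H}^{(n)}_{\mathscr{G},k}G\bigr)(x_{\pi(1)},\dots,x_{\pi(n)}),
\end{equation*}
which by the definition (\ref{eq:Hfree}) equals $\tfrac{1}{2}$ times a sum over $z \in \mathscr{N}(\mathscr{E}, x_{\pi(k)})$ of differences of $G$ evaluated at $(x_{\pi(1)},\dots,x_{\pi(n)})$ and at its modification with the $k$th slot replaced by $z$. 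Unwinding $G = F \circ T_\pi$, the unaltered term simplifies via $T_\pi$ to $F(x_1,\dots,x_n)$, while in the altered term the only slot of the argument of $F$ that changes is determined by tracking which index $j$ satisfies $\pi^{-1}(j) = k$. A comparison with the formula (\ref{eq:Hfree}) then identifies the result with the claimed $\mathfrak{H}^{(n)}_{\mathscr{G},\pi^{-1}(k)} F(x_1,\dots,x_n)$.

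The only potential obstacle is index bookkeeping: it is easy to swap $\pi$ and $\pi^{-1}$ at some step. Introducing the notation $T_\pi$ up front and using the cleanly recorded identity $U_\mathscr{V}(\pi)^* = U_\mathscr{V}(\pi^{-1})$ reduces every manipulation to an unambiguous substitution of dummy variables, so no genuine analytic difficulty arises.
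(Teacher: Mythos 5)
Your overall strategy is the right one, and there is in fact no written argument in the paper to compare against: Lemma \ref{lem:free} is relegated to Appendix \ref{app:Inter}, where it is explicitly left as an exercise, so a direct computation of exactly this kind is what is intended. Part (a) is correct and complete: $T_\pi$ is a bijection with $T_\pi^{-1}=T_{\pi^{-1}}$, the change of summation variable gives the isometry, and $U_{\mathscr{V}}(\pi)^{-1}=U_{\mathscr{V}}(\pi^{-1})$ gives unitarity together with the formula for the adjoint.

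In part (b), however, your final sentence does not follow from your own (correct) bookkeeping. You correctly reduce the left-hand side at $(x_1,\dots,x_n)$ to $\tfrac{1}{2}$ times a sum over $z\in\mathscr{N}(\mathscr{E},x_{\pi(k)})$, and you correctly observe that the altered slot in the argument of $F$ is the index $j$ with $\pi^{-1}(j)=k$, i.e.\ $j=\pi(k)$. Comparing with (\ref{eq:Hfree}), the resulting operator is therefore $\mathfrak{H}^{(n)}_{\mathscr{G},\pi(k)}$, not $\mathfrak{H}^{(n)}_{\mathscr{G},\pi^{-1}(k)}$. A concrete check with $n=3$ and the $3$-cycle $\pi$ sending $1\mapsto2\mapsto3\mapsto1$ gives $U_{\mathscr{V}}(\pi)F(x_1,x_2,x_3)=F(x_3,x_1,x_2)$ and then $U_{\mathscr{V}}(\pi)^*\mathfrak{H}^{(n)}_{\mathscr{G},1}U_{\mathscr{V}}(\pi)=\mathfrak{H}^{(n)}_{\mathscr{G},2}=\mathfrak{H}^{(n)}_{\mathscr{G},\pi(1)}$, whereas $\pi^{-1}(1)=3$. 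So with the paper's convention for $U_{\mathscr{V}}(\pi)$ the correct identity is $U_{\mathscr{V}}(\pi)^*\mathfrak{H}^{(n)}_{\mathscr{G},k}U_{\mathscr{V}}(\pi)=\mathfrak{H}^{(n)}_{\mathscr{G},\pi(k)}$, equivalently $U_{\mathscr{V}}(\pi)\mathfrak{H}^{(n)}_{\mathscr{G},k}U_{\mathscr{V}}(\pi)^*=\mathfrak{H}^{(n)}_{\mathscr{G},\pi^{-1}(k)}$; the statement as printed carries a $\pi\leftrightarrow\pi^{-1}$ slip that is invisible for transpositions (where $\pi=\pi^{-1}$), which is why small examples do not catch it. You should either prove the corrected identity or record the equivalent adjugated form; nothing downstream is affected, since the only consequence used is that $\mathfrak{H}^{(n)}_{\mathscr{G}}=\sum_{k=1}^{n}\mathfrak{H}^{(n)}_{\mathscr{G},k}$ commutes with every $U_{\mathscr{V}}(\pi)$, and $k\mapsto\pi(k)$ is a bijection of $\{1,\dots,n\}$ just as $k\mapsto\pi^{-1}(k)$ is.
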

We will prove this easy lemma in Appendix \ref{app:Inter}.

The total $n$-particle Hamiltonian for $n$ free particles on $\mathscr{G}$ is $\mathfrak{H}^{(n)}_{\mathscr{G}} : \ell^2(\mathscr{V}^n) \to \ell^2(\mathscr{V}^n)$,
given by
$$
\mathfrak{H}^{(n)}_{\mathscr{G}}\, =\, \sum_{k=1}^{n} \mathfrak{H}^{(n)}_{\mathscr{G},k}\, .
$$
Since this is the uniform sum, 
$\mathfrak{H}^{(n)}_{\mathscr{G}}$ commutes with $U_{\mathscr{V}}(\pi)$
for each $\pi \in S_n$.
We denote the symmetrization projection as $\mathfrak{S}_{\mathscr{V}}^{(n)} : \ell^2(\mathscr{V}^n) \to \ell^2(\mathscr{V}^n)$,
$$
\mathfrak{S}_{\mathscr{V}}^{(n)}\, =\, \frac{1}{n!}\, \sum_{\pi \in S_n} U_{\mathscr{V},\pi}\, .
$$
Hence, $\mathfrak{H}^{(n)}_{\mathscr{G}}$ commutes with $\mathfrak{S}_{\mathscr{V}}^{(n)}$.
\begin{remark}
\label{rem:FirstGraph}
Given the graph $\mathscr{G} = (\mathscr{V},\mathscr{E})$, we may define two different graph structures on $\mathscr{V}^n$.
Let $\Phi_n(\mathscr{E})$ denote the set of all two-element sets $\{
(x_1,\dots,x_n),(x_1,\dots,x_{k-1},y,x_{k+1},\dots,x_n)\}$ for all possible choices of
$(x_1,\dots,x_n) \in \mathscr{V}^n$, $k \in \{1,\dots,n\}$
and $y \in \mathscr{N}(\mathscr{E},x_k)$.
Then the graph Laplacian for $(\mathscr{V}^n,\Phi_n(\mathscr{E}))$ is $\mathfrak{H}^{(n)}_{\mathscr{G}}$,
using the definition as in (\ref{eq:GraphLap}).
\end{remark}
Considering the remark, there is another important edge set.
Let
\begin{equation}
\label{eq:InDefinition}
\mathscr{I}_n(\mathscr{V})\,
=\, \{(x_1,\dots,x_n) \in \mathscr{V}^n\, :\,
\exists i,j \in \{1,\dots,n\}\ \text{ s.t. }\ i\neq j\ \text{ and }\ x_i=x_j\}\, .
\end{equation}
Then we may define $\Theta_n(\mathscr{E})$ to denote the set of all  edges in $\Phi_n(\mathscr{E})$ satisfying the 
additional condition that
neither endpoint is in $\mathscr{I}_n(\mathscr{V})$.
Then the graph $(\mathscr{V}^n\setminus \mathscr{I}_n(\mathscr{V}),\Theta_n(\mathscr{V}))$ has a Laplacian
closely connected $H_{\mathscr{G}} \restriction \Hil^{\mathrm{mag}}_{\mathscr{V}}(n)$.

If we restrict $\mathfrak{H}^{(n)}_{\mathscr{G}}$ to $\mathfrak{S}_{\mathscr{V}}^{(n)}(\ell^2(\mathscr{V}^n))$,
then this is the Hamiltonian for $n$ particles in an ideal Bose gas on $\mathscr{G}$.
Part of the linear spin wave approximation  states that this is a ``good'' approximation for the ferromagnetic Heisenberg
Hamiltonian $H_{\mathscr{G}}$ restricted to $\Hil^{\mathrm{mag}}_{\mathscr{V}}(n)$,
in a certain sense.

Recall the mapping $T_{\mathscr{V}}^{(n)} : \ell^2(\mathscr{V}^n) \to \Hil_{\mathscr{V}}$, from Definition \ref{eq:defTn0}.
The following is easy to prove.
\begin{lemma}
\label{lem:monoton}
(a)
For each $n \in \{1,\dots,|\mathscr{V}|\}$,
and each $F \in \ell^2(\mathscr{V}^n)$,
$$
\|T_{\mathscr{V}}^{(n)} F\|\, \leq\, \|F\|\, .
$$
(b) For each $n \in \{1,\dots,|\mathscr{V}|\}$, the operator $\mathfrak{H}^{(n)}_{\mathscr{G}}$ on $\ell^2(\mathscr{V}^n)$ is positive semi-definite, and
$$
T_{\mathscr{V}}^{(n)}\mathfrak{H}^{(n)}_{\mathscr{G}}\big(T_{\mathscr{V}}^{(n)}\big)^*\,
\geq\, \big(H_{\mathscr{G}} \restriction \Hil^{\mathrm{mag}}_{\mathscr{V}}(n)\big)\, .
$$
\end{lemma}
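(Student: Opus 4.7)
The plan is to prove both parts by writing things out in the orthonormal basis $\{\Omega_{\mathscr{V}}(X):X\in\mathcal{P}_n(\mathscr{V})\}$ for $\Hil^{\mathrm{mag}}_{\mathscr{V}}(n)$ provided by Lemma \ref{lem:preliminary}(1), and then reducing each quadratic form to a sum of squared differences. For (a), since $(S_x^-)^2=0$, the vector $\Omega_{\mathscr{V}}^{(n)}(x_1,\dots,x_n)$ vanishes on $\mathscr{I}_n(\mathscr{V})$ and equals $\Omega_{\mathscr{V}}(\{x_1,\dots,x_n\})$ on $\mathscr{V}^n\setminus\mathscr{I}_n(\mathscr{V})$. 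Regrouping the sum in Definition \ref{eq:defTn0} by the set $X=\{x_1,\dots,x_n\}$ yields, for any chosen enumeration of $X$,
\[ T_{\mathscr{V}}^{(n)} F \,=\, \frac{1}{\sqrt{n!}}\sum_{X\in\mathcal{P}_n(\mathscr{V})}\bigg(\sum_{\pi\in S_n} F(x_{\pi(1)},\dots,x_{\pi(n)})\bigg)\Omega_{\mathscr{V}}(X). \]
Orthonormality plus a single Cauchy--Schwarz applied to each $n!$-term inner sum then gives $\|T_{\mathscr{V}}^{(n)} F\|^2 \le \sum_{\bx\in\mathscr{V}^n\setminus\mathscr{I}_n(\mathscr{V})}|F(\bx)|^2 \le \|F\|^2$.

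For (b), positive semi-definiteness of $\mathfrak{H}^{(n)}_{\mathscr{G}}$ is immediate from (\ref{eq:GraphLaplacianDirichletForm}) once one recognizes, as in Remark \ref{rem:FirstGraph}, that $\mathfrak{H}^{(n)}_{\mathscr{G}}$ is the graph Laplacian of $(\mathscr{V}^n,\Phi_n(\mathscr{E}))$. For the operator inequality it suffices to check the quadratic form inequality on $\Hil^{\mathrm{mag}}_{\mathscr{V}}(n)$, since that subspace contains the range of $T_{\mathscr{V}}^{(n)}$. A direct adjoint computation shows that for $\psi = \sum_X c_X \Omega_{\mathscr{V}}(X)\in\Hil^{\mathrm{mag}}_{\mathscr{V}}(n)$, the function $F:=(T_{\mathscr{V}}^{(n)})^*\psi$ is the symmetric function equal to $(n!)^{-1/2}c_{\{x_1,\dots,x_n\}}$ on $\mathscr{V}^n\setminus\mathscr{I}_n(\mathscr{V})$ and to zero on $\mathscr{I}_n(\mathscr{V})$.

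The next step is to express both quadratic forms as Dirichlet sums. Using that $h_{xy}$ is the two-site singlet projector, one computes $h_{xy}\Omega_{\mathscr{V}}(X)=\tfrac{1}{2}(\Omega_{\mathscr{V}}(X)-\Omega_{\mathscr{V}}(X\triangle\{x,y\}))$ when $|X\cap\{x,y\}|=1$ and zero otherwise, so that
\[ \langle\psi,H_{\mathscr{G}}\psi\rangle \,=\, \frac{1}{2}\sum_{\{X,Y\}}|c_X-c_Y|^2, \]
with the sum over unordered pairs of $n$-subsets $X,Y\in\mathcal{P}_n(\mathscr{V})$ satisfying $X\triangle Y=\{x,y\}$ for some $\{x,y\}\in\mathscr{E}$. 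On the other side, (\ref{eq:GraphLaplacianDirichletForm}) applied to $(\mathscr{V}^n,\Phi_n(\mathscr{E}))$ gives $\langle F,\mathfrak{H}^{(n)}_{\mathscr{G}} F\rangle$ as a sum over edges of $\Phi_n(\mathscr{E})$.

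To finish, I would split the edges of $\Phi_n(\mathscr{E})$ by how many of their endpoints lie in $\mathscr{I}_n(\mathscr{V})$. Edges with both endpoints in $\mathscr{I}_n(\mathscr{V})$ contribute zero, since $F$ vanishes there. A counting argument shows that each pair $\{X,Y\}$ as above is represented by exactly $n!$ unordered edges with both endpoints in $\mathscr{V}^n\setminus\mathscr{I}_n(\mathscr{V})$, each of weight $|c_X-c_Y|^2/n!$; these edges therefore contribute exactly $\langle\psi,H_{\mathscr{G}}\psi\rangle$. The remaining edges are those where $\bx$ enumerates some $X$ and $\by$ arises from replacing an entry $x_k$ of $\bx$ by a different entry $x_j$ of $\bx$ along an edge $\{x_k,x_j\}\in\mathscr{E}$; each such edge contributes the nonnegative ``hard-core'' term $|c_X|^2/n!$. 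Adding gives $\langle F,\mathfrak{H}^{(n)}_{\mathscr{G}} F\rangle\ge\langle\psi,H_{\mathscr{G}}\psi\rangle$, which is the asserted operator inequality. The main obstacle is purely bookkeeping: matching the unordered edges of $\Phi_n(\mathscr{E})$ with the unordered pairs $\{X,Y\}$ and verifying the combinatorial factor $n!$.
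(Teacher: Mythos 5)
Your proof is correct, and it is essentially the worked-out version of the argument the paper sketches (and relegates to Appendix \ref{app:Inter} as an exercise): identify $\langle\psi,H_{\mathscr{G}}\psi\rangle$ with the Dirichlet form of the collision-free graph $(\mathscr{V}^n\setminus\mathscr{I}_n(\mathscr{V}),\Theta_n(\mathscr{E}))$ on symmetric functions, and observe that the extra edges of $\Phi_n(\mathscr{E})$ meeting $\mathscr{I}_n(\mathscr{V})$ only add nonnegative terms, which is the paper's ``monotonicity of the graph Laplacian under adding edges.'' Your combinatorial factor $n!$ and the hard-core boundary terms check out, so no gap remains.
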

We will also prove this easy result in  Appendix \ref{app:Inter},
since it follows a similar type of argument as the proof of Lemma \ref{lem:free}.
Part (b) will follow from the fact already expressed in Remark \ref{rem:FirstGraph}, as well as the fact, which will be used
several more times, that the mapping from $\mathscr{G}$ to $-\Delta_{\mathscr{G}}$, as defined in equation (\ref{eq:GraphLap}),
is increasing,  relative to the cone of positive semi-definite operators, as we increase edges.
(This was also a key to the arguments in Section \ref{sec:Induct}.)
When we speak about this fact, now, we are usually thinking of the edge set as being $\Phi_n(\mathscr{E})$
or $\Theta_n(\mathscr{E})$, as opposed to the edge set of the original graph.

To prove a part of the linear spin wave approximation, we need some complementary
results, as well.
For these inequalities, we revert to considering the special family of graphs from before.
\begin{definition}
Fix, $d,n \in \{1,2,\dots\}$.
For each $N \in \{1,2,\dots\}$, define $\widetilde{T}^{(n)}_{d,N} : \ell^2(\B^d(L^+(d,N))) \to \Hil^{\mathrm{mag}}_{\Lambda(d,N)}(n)$ such that for any 
$F \in \ell^2(\B^d(L^+(d,N)))$, we first take the restriction $F \restriction (\Lambda(d,N))^n$, restricting to the domain of $T^{(n)}_{\Lambda(d,N)}$,
and then we take $\widetilde{T}^{(n)}_{d,N} F = T^{(n)}_{\Lambda(d,N)} \big(F \restriction (\Lambda(d,N))^n\big)$.
\end{definition}
A trivial corollary of Lemma \ref{lem:monoton} is this
\begin{lemma} 
\label{lem:contractionTildeT}
For any $d,L \in \{1,2,\dots\}$ and $n \in \{0,1,\dots\}$, we have that $\|\widetilde{T}_{d,N}^{(n)}\|\leq 1$ and 
\begin{equation}
\label{eq:contractionTildeT}
\widetilde{T}_{d,N}^{(n)}\mathfrak{H}^{(n)}_{\B^d(L^+(d,N))}\big(\widetilde{T}_{d,N}^{(n)}\big)^*\,
\geq\, \big(H_{\Lambda(d,N)} \restriction \Hil^{\mathrm{mag}}_{\Lambda(d,N)}(n)\big)\, .
\end{equation}
\end{lemma}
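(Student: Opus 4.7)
The plan is to factor $\widetilde{T}^{(n)}_{d,N}$ as a composition of two contractions, then invoke Lemma \ref{lem:monoton} together with monotonicity of the graph Laplacian in the edge set. Write $L^+ = L^+(d,N)$, $\Lambda = \Lambda(d,N)$, and let $R : \ell^2((\B^d(L^+))^n) \to \ell^2(\Lambda^n)$ be the restriction map. Then by definition $\widetilde{T}^{(n)}_{d,N} = T^{(n)}_{\Lambda} \circ R$. The adjoint $R^*$ is extension by zero from $\ell^2(\Lambda^n)$ to $\ell^2((\B^d(L^+))^n)$, and both $R$ and $R^*$ have operator norm equal to $1$. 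Combined with $\|T^{(n)}_{\Lambda}\| \leq 1$ from Lemma \ref{lem:monoton}(a), this gives the first assertion $\|\widetilde{T}^{(n)}_{d,N}\| \leq 1$ immediately.

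For the operator inequality, I would compute
\begin{equation*}
\widetilde{T}^{(n)}_{d,N}\, \mathfrak{H}^{(n)}_{\B^d(L^+)}\, \big(\widetilde{T}^{(n)}_{d,N}\big)^*\, =\, T^{(n)}_{\Lambda}\, \Big( R\, \mathfrak{H}^{(n)}_{\B^d(L^+)}\, R^* \Big)\, \big(T^{(n)}_{\Lambda}\big)^*\,
\end{equation*}
and then argue that the middle factor dominates $\mathfrak{H}^{(n)}_{\mathscr{G}(\Lambda)}$ on $\ell^2(\Lambda^n)$. Using the Dirichlet-form description of the graph Laplacian of $(\mathscr{V}^n, \Phi_n(\mathscr{E}))$ from Remark \ref{rem:FirstGraph}, for every $F \in \ell^2(\Lambda^n)$,
\begin{equation*}
\big\langle F, R\, \mathfrak{H}^{(n)}_{\B^d(L^+)}\, R^* F \big\rangle\, =\, \tfrac{1}{2}\sum_{\{\x,\y\}\in \Phi_n(\mathscr{E}(\B^d(L^+)))} |R^*F(\x) - R^*F(\y)|^2\, ,
\end{equation*}
where $R^*F$ is $F$ extended by $0$ off $\Lambda^n$. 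By \eqref{eq:InducedEdges} every edge in $\Phi_n(\mathscr{E}(\Lambda))$ also lies in $\Phi_n(\mathscr{E}(\B^d(L^+)))$, and the remaining edges in the latter set contribute boundary terms of the form $|R^*F(\x)|^2$ (where $\x \in \Lambda^n$, $\y \notin \Lambda^n$) or $0$, all nonnegative. Dropping these boundary terms yields $R\, \mathfrak{H}^{(n)}_{\B^d(L^+)}\, R^* \geq \mathfrak{H}^{(n)}_{\mathscr{G}(\Lambda)}$ as operators on $\ell^2(\Lambda^n)$.

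Conjugation by $T^{(n)}_{\Lambda}$ preserves operator order, so combining the previous step with Lemma \ref{lem:monoton}(b) gives
\begin{equation*}
T^{(n)}_{\Lambda}\, \Big( R\, \mathfrak{H}^{(n)}_{\B^d(L^+)}\, R^* \Big)\, \big(T^{(n)}_{\Lambda}\big)^*\, \geq\, T^{(n)}_{\Lambda}\, \mathfrak{H}^{(n)}_{\mathscr{G}(\Lambda)}\, \big(T^{(n)}_{\Lambda}\big)^*\, \geq\, H_{\Lambda} \restriction \Hil^{\mathrm{mag}}_{\Lambda}(n)\, ,
\end{equation*}
which is the desired inequality. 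There is no substantive obstacle; the only point that needs care is the bookkeeping for the boundary edges in $\Phi_n(\mathscr{E}(\B^d(L^+))) \setminus \Phi_n(\mathscr{E}(\Lambda))$, and verifying that these contribute a positive semi-definite quadratic form after being ``seen'' through the zero-extension $R^*$. Everything else is immediate from the already-proved Lemma \ref{lem:monoton}.
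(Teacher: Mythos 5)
Your proof is correct and follows exactly the route the paper indicates (and leaves as an exercise in Appendix \ref{app:Inter}): factor $\widetilde{T}^{(n)}_{d,N}$ through restriction/zero-extension, use the Dirichlet-form monotonicity of the graph Laplacian under adding edges to get $R\,\mathfrak{H}^{(n)}_{\B^d(L^+)}\,R^* \geq \mathfrak{H}^{(n)}_{\mathscr{G}(\Lambda)}$, and then apply Lemma \ref{lem:monoton}. Your write-up in fact supplies the bookkeeping for the boundary edges that the paper omits, so nothing is missing.
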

Once again, we will prove this simple lemma in Appendix \ref{app:Inter}.
But the idea is just to use the fact, again, that increasing edges of a graph $\mathscr{G}$ increases the graph Laplacian $-\Delta_{\mathscr{G}}$,
relative to the psd cone.
There are clearly at least as many edges in $\B^d(L^+(d,N))$ as in $\Lambda(d,N)$ since $\Lambda(d,N) \subseteq \B^d(L^+(d,N))$.

\begin{proposition}
\label{prop:trace0}
Fix, $d,n \in \{1,2,\dots\}$.
Then there is an $N_1(d,n) \in \{1,2,\dots\}$ and two constants $C_1(d,n)$, $C_2(d,n)$ such that,
for any $N \geq N_1(d,n)$ and for
any $F \in \mathfrak{S}_{\B^d(L^+(d,N))}^{(n)}(\ell^2(\B^d(L^+(d,N))^n))$,
we have
\begin{equation}
\label{eq:deformTilldeT}
\|\widetilde{T}^{(n)}_{d,N} F\|^2\, \geq\, \left(1 - \frac{C_1(d,n)}{L^+(d,N)}\right)\|F\|^2
- C_2(d,N) L^+(d,N) \langle F, \mathfrak{H}^{(n)}_{\Lambda(d,N)} F\rangle\, .
\end{equation}
\end{proposition}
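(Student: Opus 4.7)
The plan is to compute $\|\widetilde{T}^{(n)}_{d,N} F\|^2$ exactly and then control the deficit relative to $\|F\|^2$ by two discrete trace inequalities. Since $(S_x^-)^2=0$, we have $\Omega^{(n)}_{\Lambda}(x_1,\dots,x_n)=0$ whenever $(x_1,\dots,x_n)\in\mathscr{I}_n(\Lambda)$, and whenever the entries are distinct, $\Omega^{(n)}_{\Lambda}(x_1,\dots,x_n)=\Omega_{\Lambda}(\{x_1,\dots,x_n\})$; by Lemma \ref{lem:preliminary}(1), these $\Omega_\Lambda(X)$ for $X\in\mathcal{P}_n(\Lambda)$ are orthonormal. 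Using these two facts together with the symmetry $U_{\B^d(L^+)}(\pi)F=F$, a short direct calculation from Definition \ref{eq:defTn0} yields, for $F\in\mathfrak{S}^{(n)}_{\B^d(L^+)}(\ell^2(\B^d(L^+)^n))$ with $L^+=L^+(d,N)$ and $\Lambda=\Lambda(d,N)$,
\begin{equation*}
\|\widetilde{T}^{(n)}_{d,N} F\|^2\, =\, \sum_{(x_1,\dots,x_n)\in \Lambda^n\setminus \mathscr{I}_n(\Lambda)} |F(x_1,\dots,x_n)|^2\, ,
\end{equation*}
and therefore, splitting $\B^d(L^+)^n$ into the three disjoint pieces $\Lambda^n\setminus\mathscr{I}_n(\Lambda)$, $\mathscr{I}_n(\Lambda)$, and $\B^d(L^+)^n\setminus\Lambda^n$, we obtain the clean identity
\begin{equation*}
\|F\|^2-\|\widetilde{T}^{(n)}_{d,N} F\|^2\, =\, \underbrace{\sum_{\bx\in\mathscr{I}_n(\Lambda)}|F(\bx)|^2}_{=:B(F)}\, +\, \underbrace{\sum_{\bx\in\B^d(L^+)^n\setminus \Lambda^n}|F(\bx)|^2}_{=:A(F)}\, .
\end{equation*}

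Thus the task reduces to bounding $A(F)$ and $B(F)$ in terms of $\|F\|^2$ and $\langle F,\mathfrak{H}^{(n)}_{\B^d(L^+)} F\rangle$. By Definition \ref{def:LambdadN}, $\B^d(L^+)\setminus\Lambda$ is contained in a single $(d-1)$-dimensional ``slab'' of side $L^+$, so $A(F)$ is a sum of $n$ symmetric pieces, each of which is the $\ell^2$-mass of $F$ on a trace set of codimension one in $\B^d(L^+)^n$. Similarly, $\mathscr{I}_n(\Lambda)=\bigcup_{i<j}\{\bx:x_i=x_j\}$ is the union of $\binom{n}{2}$ pairwise diagonals, each of codimension $d\geq 1$. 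Both $A(F)$ and $B(F)$ are therefore traces of $F$ along codimension-$\geq 1$ subsets of a box of sidelength $L^+$, and the correct scale of a discrete trace-Sobolev inequality on such a box is
\begin{equation*}
(\text{trace})\, \lesssim\, \frac{1}{L^+}\,\|F\|^2\, +\, L^+\,\langle F,\mathfrak{H}^{(n)}_{\B^d(L^+)}F\rangle\, ,
\end{equation*}
which is exactly the form of \eqref{eq:deformTilldeT}. I would derive this estimate by choosing, for each point $\bx$ in the trace set, a transverse ``ray'' of length proportional to $L^+$ into the bulk, writing $F(\bx)$ as a telescoping sum along the ray using finite differences, and then applying Cauchy--Schwarz; averaging over the rays produces the $1/L^+$ times bulk mass, while the finite differences contribute $L^+$ times the Dirichlet form (using the tensor structure $\mathfrak{H}^{(n)}=\sum_k\mathfrak{H}^{(n)}_{\B^d(L^+),k}$ from Lemma \ref{lem:free} to slice the gradient in the relevant coordinate). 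Summing the $O(n)+\binom{n}{2}$ contributions from $A(F)$ and $B(F)$ yields the bound with explicit constants $C_1(d,n)$ and $C_2(d,n)$ as claimed, provided $N$ is large enough that the slab of thickness proportional to $L^+$ actually fits inside $\B^d(L^+)$.

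The main obstacle is establishing the quantitative trace-Sobolev inequality with the correct dependence on $L^+$, which is deferred to Section \ref{sec:Trace}. Two technical points are worth flagging. First, the diagonal sets $\{x_i=x_j\}$ live in the \emph{interior} of the product box, not on its boundary, so the ray argument must run \emph{away from} the diagonal rather than toward it; this is where the symmetry of $F$ enters, allowing one to pick the transverse direction to be $x_i-x_j$ and use that differences along this direction are controlled by the single-particle energy of either particle $i$ or $j$. Second, constants must depend only on $d,n$ and not on $L^+$, requiring that one carefully optimize the ray length against the box size; this forces the factor $L^+$ in front of $\langle F,\mathfrak{H}^{(n)}F\rangle$ and the factor $1/L^+$ in front of $\|F\|^2$. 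Once this trace estimate is in hand, the proof of Proposition \ref{prop:trace0} follows by substitution into the identity above.
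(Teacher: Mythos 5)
Your proposal is correct and follows essentially the same route as the paper: the same exact identity expressing the deficit $\|F\|^2-\|\widetilde{T}^{(n)}_{d,N}F\|^2$ as the $\ell^2$-mass on $\B^d(L^+)^n\setminus\Lambda(d,N)^n$ plus the mass on $\mathscr{I}_n(\Lambda(d,N))$, followed by a one-dimensional discrete trace inequality (telescoping along a ray of length $\sim L^+/2$ into the bulk with a cut-off weight, then Cauchy--Schwarz) applied transversally to each of the two sets, with a bounded multiplicity count for how often each edge of $\Phi_n(\B^d(L^+))$ is used. The only cosmetic difference is that for the diagonal part the paper simply shifts one coordinate of one of the coinciding particles (up or down depending on which half of the box it lies in) rather than moving along the direction $x_i-x_j$, but this does not change the argument.
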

We will prove Proposition \ref{prop:trace0} in Section \ref{sec:trace0}.
\begin{proposition}
\label{prop:extension0}
Fix $d,n\in \{1,2,\dots\}$.
There exists an $N_2(d,n) \in \{1,2,\dots\}$ and a constant $C_3(d,n)$ such that,
for each $N \in \{1,2,\dots\}$ satisfying $N\geq N_3(d,n)$, there is a linear mapping 
$$
\Xi^{(n)}_{d,N} : \Hil^{\mathrm{mag}}_{\Lambda(d,N)}(n) \to \mathfrak{S}^{(n)}_{\B^d(L^+(d,N))}\Big(\ell^2\Big(\big(\B^d(L^+(d,N))\big)^n\Big)\Big)\, ,
$$
such that $\widetilde{T}^{(n)}_{d,N} \cdot \Xi^{(n)}_{d,N}$ is the identity on $\Hil^{\mathrm{mag}}_{\Lambda(d,N)}(n)$
and such that
\begin{equation}
\label{eq:extBd}
\big\langle \Xi^{(n)}_{d,N} \Psi, \mathfrak{H}_{\B^d(L^+(d,N))}^{(n)} \Xi^{(n)}_{d,N} \Psi \big\rangle\,
\leq\, C_3(d,n) \langle \Psi, H_{\Lambda(d,N)} \Psi\rangle\, ,
\end{equation}
for each $\Psi \in \Hil^{\mathrm{mag}}_{\Lambda(d,N)}(n)$.
\end{proposition}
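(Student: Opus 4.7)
The plan is to construct $\Xi^{(n)}_{d,N}$ as the composition of a canonical preimage under $T^{(n)}_{\Lambda(d,N)}$ living on the smaller box $\Lambda(d,N)^n$, followed by the extension theorem of Section \ref{sec:Extension}. For the first stage, given $\Psi \in \Hil^{\mathrm{mag}}_{\Lambda(d,N)}(n)$, Lemma \ref{lem:preliminary}(1) yields a unique expansion $\Psi = \sum_{X \in \mathcal{P}_n(\Lambda(d,N))} c_X \Omega_{\Lambda(d,N)}(X)$. Define
\[
F^{\Lambda}_{\Psi}(x_1,\ldots,x_n)\, =\, \tfrac{1}{\sqrt{n!}}\, c_{\{x_1,\ldots,x_n\}}\, \mathbf{1}\{x_1,\ldots,x_n\ \text{pairwise distinct}\}\, \in\, \ell^2\bigl(\Lambda(d,N)^n\bigr).
\]
This is symmetric, vanishes on $\mathscr{I}_n(\Lambda(d,N))$, and a direct check with Definition \ref{eq:defTn0} (using the orthogonality in Lemma \ref{lem:preliminary}(1) and that $\Omega^{(n)}$ vanishes on coincident tuples) gives $T^{(n)}_{\Lambda(d,N)} F^{\Lambda}_\Psi = \Psi$. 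For the second stage, apply the extension theorem to obtain a symmetric $\Xi^{(n)}_{d,N}\Psi$ in $\mathfrak{S}^{(n)}_{\B^d(L^+(d,N))}\bigl(\ell^2((\B^d(L^+(d,N)))^n)\bigr)$ that agrees with $F^{\Lambda}_\Psi$ on $\Lambda(d,N)^n$ (so $\widetilde{T}^{(n)}_{d,N} \Xi^{(n)}_{d,N}\Psi = \Psi$ is automatic from the definition of $\widetilde{T}$) and satisfies a bound of the form $\bigl\langle \Xi^{(n)}_{d,N}\Psi,\, \mathfrak{H}^{(n)}_{\B^d(L^+(d,N))}\, \Xi^{(n)}_{d,N}\Psi\bigr\rangle \leq K(d,n)\, \bigl\langle F^{\Lambda}_\Psi,\, \mathfrak{H}^{(n)}_{\mathscr{G}(\Lambda(d,N))}\, F^{\Lambda}_\Psi\bigr\rangle$.

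It remains to control the Bose energy of $F^\Lambda_\Psi$ on $\Lambda(d,N)^n$ by $\langle \Psi, H_{\Lambda(d,N)} \Psi\rangle$. Using the Dirichlet form representation (\ref{eq:GraphLaplacianDirichletForm}) for the graph $(\Lambda(d,N)^n, \Phi_n(\mathscr{E}(\Lambda(d,N))))$ from Remark \ref{rem:FirstGraph}, and the vanishing of $F^\Lambda_\Psi$ on $\mathscr{I}_n$, split the edge sum according to whether the edge lies in $\Theta_n(\mathscr{E}(\Lambda(d,N)))$ or not. The $\Theta_n$-part is the symmetric exclusion process Dirichlet form and, by Toth's magnon/SEP identification (underlying the equality case of Lemma \ref{lem:monoton}), equals $\langle \Psi, H_{\Lambda(d,N)} \Psi\rangle$ exactly. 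The remaining edges have one endpoint in $\mathscr{I}_n$, where $F^\Lambda_\Psi$ vanishes, so they contribute $\tfrac{1}{2}\sum_u |F^\Lambda_\Psi(u)|^2$ over off-diagonal configurations $u$ having at least one neighbor in $\mathscr{I}_n$. The trace-type discrete Sobolev inequality of Section \ref{sec:Trace} bounds this diagonal sum by a constant multiple of the SEP sum, and combining with the stage-two estimate yields (\ref{eq:extBd}) with some $C_3(d,n)$, provided $N$ exceeds the maximum of the thresholds supplied by the two Sobolev inequalities.

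The main obstacle is the diagonal/trace estimate. The coincidence set $\mathscr{I}_n(\Lambda(d,N))$ is a codimension-two subset of the product $\Lambda(d,N)^n$, so a naive pointwise or Cauchy-Schwarz argument cannot convert the SEP gradient energy into $\ell^2$-control on a neighborhood of $\mathscr{I}_n$; one must genuinely exploit the vanishing of $F^\Lambda_\Psi$ on $\mathscr{I}_n$ to cash in a full interior gradient's worth of Dirichlet energy, in direct analogy with the continuous trace inequality $\|u\|_{L^2(\partial\Omega)}^2 \leq C(\|u\|_{L^2(\Omega)}^2 + \|\nabla u\|_{L^2(\Omega)}^2)$ for functions vanishing on the boundary. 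By contrast, the extension from $\Lambda(d,N)$ to $\B^d(L^+(d,N))$ is comparatively mild: by Definition \ref{def:LambdadN} the two differ by at most a thin slab of $\B^d(L^+(d,N))\setminus\B^d(L(d,N))$, so the extension theorem produces only a bounded amplification of Bose energy and preserves symmetry by a standard symmetrization.
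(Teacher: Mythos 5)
Your first stage (the canonical symmetric preimage $F^{\Lambda}_{\Psi}$) and your identification of the $\Theta_n$-part of the Dirichlet form with $\langle\Psi,H_{\Lambda(d,N)}\Psi\rangle$ are fine. The proof breaks at the diagonal step, and the breakage is structural rather than technical. You require $\Xi^{(n)}_{d,N}\Psi$ to agree with $F^{\Lambda}_{\Psi}$ on \emph{all} of $\Lambda(d,N)^n$, hence to vanish on $\mathscr{I}_n(\Lambda(d,N))$. That is more than is needed for $\widetilde{T}^{(n)}_{d,N}\Xi^{(n)}_{d,N}\Psi=\Psi$ (only the values on $\Lambda(d,N)^n\setminus\mathscr{I}_n(\Lambda(d,N))$ matter, because $\Omega^{(n)}_{\mathscr{V}}(x_1,\dots,x_n)=0$ whenever two arguments coincide), and it is fatal. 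Take $\Psi$ proportional to $(S^-_{\Lambda(d,N)})^n\Omega_{\Lambda(d,N)}(\emptyset)$: then $F^{\Lambda}_{\Psi}$ is constant off the diagonal, $\langle\Psi,H_{\Lambda(d,N)}\Psi\rangle=0$, yet your extension is forced to jump to $0$ on $\mathscr{I}_n$, so $\langle\Xi^{(n)}_{d,N}\Psi,\mathfrak{H}^{(n)}_{\B^d(L^+(d,N))}\Xi^{(n)}_{d,N}\Psi\rangle>0$ and \eqref{eq:extBd} fails for every finite $C_3$. The inequality you want from the trace section does not exist: the vanishing of $F^{\Lambda}_{\Psi}$ on $\mathscr{I}_n$ is a definitional convention invisible to the SEP Dirichlet form, so it cannot be ``cashed in''; and Proposition \ref{prop:trace0} has the wrong shape anyway, carrying a factor $L^+(d,N)$ in front of the Dirichlet form plus an $\ell^2$ remainder of size $\|F\|^2/L^+(d,N)$, neither of which is admissible on the right side of \eqref{eq:extBd}. (Your continuum analogy also misfires: a function vanishing on the boundary has zero trace there.)

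The paper's construction sidesteps the diagonal entirely by \emph{redefining}, not retaining, the values there: $\widetilde F$ is defined on all of $\B^d(L^+(d,N))^n$ as $F$ composed with the nearest-point projection (lexicographic tie-breaking) onto the good set $\Lambda(d,N)^n\setminus\mathscr{I}_n(\Lambda(d,N))$, so $\widetilde F$ is generically nonzero on $\mathscr{I}_n$ and on the outer slab. Each Bose edge difference then equals $F(\bR^{(1)})-F(\bR^{(2)})$ for two good configurations within distance $2\rho_{\max}(n,d)+1$ (Lemma \ref{lem:MAXrho}), which is telescoped along a chain of at most $c_3(d,n)$ nearest-neighbor steps staying inside the good set (Lemma \ref{lem:LITTLEc3} — the genuinely hard combinatorial content, since the chain must avoid $\mathscr{I}_n$ and, for $d=1$, can only reach a permutation of the target). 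A multiplicity count over edges then yields \eqref{eq:extBd} purely in terms of the SEP Dirichlet form, with no separate trace estimate. To salvage your two-stage plan you would have to build this redefinition on $\mathscr{I}_n$ into the extension itself.
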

We will prove this proposition in Section \ref{sec:extension0}.

In the next section, we will explain how to 
prove Theorem \ref{thm:SpecGather}.
We do thi by combining Proposition \ref{prop:Variational1}, Proposition \ref{prop:trace0}, Proposition {prop:extension0}, 
the Rayleigh-Ritz variational approach and Chebyshev's inequality.

\section{Proof of Key Step II -- Part B:  Variational Argument}

Let us begin by noting the formula for the spectrum of the non-interacting Bose gas.
\begin{definition}
For $d,L \in \{1,2,\dots\}$, let
$\mathscr{O}(d,L)$ denote the set of all ``occupation functions,''
defined to be functions $\nu : \{0,\dots,L-1\}^d \to \{0,1,\dots\}$.
For $n \in \{0,1,\dots\}$ let $\mathscr{O}_n(d,L)$ be the set of $\nu \in \mathscr{O}(d,L)$ such that
$$
\sum_{\boldsymbol{\kappa} \in \{0,\dots,L-1\}^d} \nu(\boldsymbol{\kappa})\, =\, n\, .
$$
\end{definition}
For each choice of $\nu \in \mathscr{O}_n(d,L)$, let us define
$\mathfrak{K}_{d,L}(\nu)$ to be a subset of $(\{0,\dots,L-1\}^d)^n$, defined as the set of all 
$(\boldsymbol{\kappa}_1,\dots,\boldsymbol{\kappa}_n)$
such that
$$
\forall \boldsymbol{\kappa}
\in \{0,\dots,L-1\}^d\, :\
|\{k \in \{1,\dots,n\}\, :\, \boldsymbol{\kappa}_k=\boldsymbol{\kappa}\}| = \nu(\boldsymbol{\kappa})\, .
$$
\begin{definition}
\label{def:WidetildeF}
For each $\nu \in \mathscr{O}_n(d,L)$, define a function $\widetilde{F}^{(n)}_{d,L}(\nu;\cdot) \in \ell^2((\B^d(L))^n)$
by the formula
\begin{equation}
\widetilde{F}^{(n)}_{d,L}(\nu;\r_1,\dots,\r_n)\,
=\, L^{-nd/2} |\mathfrak{K}_{d,L}(\nu)|^{-1/2} 
\sum_{(\boldsymbol{\kappa}_1,\dots,\boldsymbol{\kappa}_n) \in \mathfrak{K}_{d,L}(\nu)}
\prod_{k=1}^{n} \prod_{j=1}^{d} f(L^{-1} \kappa_{k,j}, r_{k,j})\, ,
\end{equation}
for each $(\r_1,\dots,\r_n) \in (\B^d(L))^n$, where $f(\xi,r)$ for $\xi \in \R$ and $r \in \Z$ is given by Definition \ref{def:littleF}.
\end{definition}
The spectrum of $\mathfrak{H}^{(n)}_{\B^d(L)}$ restricted to the invariant subspace $\mathfrak{S}^{(n)}_{\B^d(L)}(\ell^2((\B^d(L))^n))$
may then be summarized, as follows.
\begin{lemma}
\label{lem:specSummary}
For $d,L \in \{1,2,\dots\}$ and $n \in \{0,1,\dots\}$, an orthonormal basis of $\mathfrak{S}^{(n)}_{\B^d(L)}(\ell^2((\B^d(L))^n))$
is $\big(\widetilde{F}^{(n)}_{d,L}(\nu;\cdot)\, :\, \nu \in \mathscr{O}_n(d,L)\big)$, and 
\begin{equation}
\mathfrak{H}^{(n)}_{\B^d(L)} \widetilde{F}^{(n)}_{d,L}(\nu;\cdot)\,
=\, \sum_{\k \in \{0,\dots,L-1\}^d} \nu(\k) \sum_{j=1}^{d} 2 \sin^2\left(\frac{\pi \kappa_{j}}{2L}\right)  \widetilde{F}^{(n)}_{d,L}(\nu;\cdot)\, ,
\end{equation}
for each $ \nu \in \mathscr{O}_n(d,L)$.
\end{lemma}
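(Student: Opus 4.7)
The plan is to diagonalize $\mathfrak{H}^{(n)}_{\B^d(L)}$ by separation of variables and then project to the symmetric subspace.

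\emph{Step 1 (one dimension).} I will first verify that on $\B^1(L)=\{1,\dots,L\}$ with graph Laplacian $-\Delta_{\B^1(L)}$ as in (\ref{eq:GraphLap}), the family $\{f(\kappa/L,\cdot):\kappa\in\{0,\dots,L-1\}\}$ from Definition~\ref{def:littleF} is an orthogonal eigenbasis with eigenvalues $2\sin^2(\pi\kappa/(2L))$. At interior vertices the eigenvalue equation follows from the product-to-sum identity $\cos(A-B)+\cos(A+B)=2\cos A\cos B$; the half-integer shift $r-\tfrac12$ built into the definition of $f$ is chosen precisely so that the same identity also gives the correct value at the two boundary vertices (i.e.\ $f(\kappa/L,\cdot)$ implements the discrete Neumann eigenbasis). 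A direct sum gives $\|f(\kappa/L,\cdot)\|_{\ell^2(\B^1(L))}^2=L$ uniformly in $\kappa$, which is what forces the normalization constants $\sqrt 2$ and $1$ in Definition~\ref{def:littleF}.

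\emph{Step 2 (tensor product).} Since $\mathfrak{H}^{(n)}_{\B^d(L)}=\sum_{k=1}^{n}\mathfrak{H}^{(n)}_{\B^d(L),k}$ is a sum of commuting one-particle operators, and since the one-particle graph Laplacian on $\B^d(L)$ itself splits as a sum of $d$ commuting copies of the one-dimensional Laplacian (one per coordinate direction), the tensor-product functions
\[
\Phi_{(\k_1,\dots,\k_n)}(\r_1,\dots,\r_n)\;=\;L^{-nd/2}\prod_{k=1}^{n}\prod_{j=1}^{d}f(L^{-1}\kappa_{k,j},r_{k,j})\, ,
\]
indexed by $(\k_1,\dots,\k_n)\in(\{0,\dots,L-1\}^d)^n$, form an orthonormal eigenbasis of $\ell^2((\B^d(L))^n)$, each with eigenvalue $\sum_{k=1}^{n}\sum_{j=1}^{d}2\sin^2(\pi\kappa_{k,j}/(2L))$.

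\emph{Step 3 (symmetrization).} The group $S_n$ acts on the labels by permutation, and a short calculation shows $U_{\B^d(L)}(\pi)\Phi_{(\k_1,\dots,\k_n)}=\Phi_{(\k_{\pi(1)},\dots,\k_{\pi(n)})}$, so the $S_n$-orbits in $(\{0,\dots,L-1\}^d)^n$ are exactly the sets $\mathfrak{K}_{d,L}(\nu)$ indexed by $\nu\in\mathscr{O}_n(d,L)$. The eigenvalue depends only on the multiset $\nu$ and evaluates to $\sum_{\k}\nu(\k)\sum_{j}2\sin^2(\pi\kappa_{j}/(2L))$. For any representative $(\k_1,\dots,\k_n)\in\mathfrak{K}_{d,L}(\nu)$, the vector $\mathfrak{S}^{(n)}_{\B^d(L)}\Phi_{(\k_1,\dots,\k_n)}$ is a common eigenvector of $\mathfrak{H}^{(n)}_{\B^d(L)}$ and is proportional to the orbit-sum $\sum_{(\k'_1,\dots,\k'_n)\in\mathfrak{K}_{d,L}(\nu)}\Phi_{(\k'_1,\dots,\k'_n)}$. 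Since the $\Phi$'s appearing in this sum are distinct elements of an orthonormal family, the orbit-sum has norm $|\mathfrak{K}_{d,L}(\nu)|^{1/2}$, so dividing by $|\mathfrak{K}_{d,L}(\nu)|^{1/2}$ yields exactly $\widetilde F^{(n)}_{d,L}(\nu;\cdot)$. Orthogonality across distinct $\nu$ is automatic from disjointness of orbits, and the dimension count $|\mathscr{O}_n(d,L)|=\binom{L^d+n-1}{n}=\dim\mathfrak{S}^{(n)}_{\B^d(L)}(\ell^2((\B^d(L))^n))$ certifies completeness.

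The only subtle step is the combinatorial bookkeeping in Step~3 for occupations with $\nu(\k)>1$: each element of the orbit is produced $\prod_{\k}\nu(\k)!$ times by the sum over $\pi\in S_n$, and the prefactor $|\mathfrak{K}_{d,L}(\nu)|^{-1/2}$ in Definition~\ref{def:WidetildeF} is precisely what compensates for this repetition and leaves a unit vector. I expect this to be the main (mild) obstacle; the rest is a routine verification using the explicit trigonometric identities and the orthonormality of the product basis.
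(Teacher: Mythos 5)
Your proposal is correct and follows exactly the route the paper takes in Appendix C (one-dimensional Neumann eigenfunctions, separation of variables across coordinates and particles, then symmetrization), the only difference being that the paper leaves this as a three-sentence sketch while you supply the details. In particular your verification of the boundary condition via the half-integer shift, the uniform normalization $\|f(\kappa/L,\cdot)\|^2=L$, and the stabilizer/orbit bookkeeping that makes $|\mathfrak{K}_{d,L}(\nu)|^{-1/2}$ the right normalization are all accurate.
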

This easy lemma will also be proved in Appendix \ref{app:specSummary}.  This calculation is well-known, since it is the spectrum
of a non-interacting ideal Bose gas on a finite box in a lattice. But we will include it for completeness.

We may now prove part of Proposition \ref{prop:Variational1}.
First, given $(\k_1,\dots,\k_n) \in \mathcal{K}(d,n)$, if $L$ is sufficiently large that $\k_k \in \{0,\dots,L-1\}^d$ for each $k \in \{1,\dots,n\}$, then let us define
$\nu_{(\k_1,\dots,\k_n)}$ to be the associated point of $\mathscr{O}_n(d,L)$:
\begin{equation}
\label{eq:nukappa}
\nu_{(\k_1,\dots,\k_n)}(\k)\, =\, \sum_{k=1}^{n} \mathbf{1}_{\{\k_k=\k\}}\, ,
\end{equation}
for each $\k \in \{0,\dots,L-1\}^d$.
A calculation then shows 
\begin{equation}
\label{eq:IdentityTildePsiTildeF}
\widetilde{\Psi}_{d,N}^{(n)}(\k_1,\dots,\k_n)\, =\, n! |\mathfrak{K}_{d,L^+(d,N)}(\nu_{(\k_1,\dots,\k_n)})|^{-1/2}
\widetilde{T}^{(n)}_{d,N} \widetilde{F}^{(n)}_{d,L^+(d,N)}(\nu_{(\k_1,\dots,\k_n)};\cdot)\, .
\end{equation}

Next we want to establish variational upper bounds, and then lower bounds using Chebyshev's inequality.
For the variational upper bounds we will use the following:
For a self-adjoint operator $A$ on a $d$-dimensional Hilbert space $\Hil$, we have that
\begin{equation}
\label{eq:RR}
\lambda_k\, =\, \min\left\{\max_{v \in \operatorname{span}(\{v_1,\dots,v_k\})\setminus \{0\}} \frac{\langle v, A v\rangle}{\|v\|^2}\, :\, v_1,\dots,v_k \in \Hil\text{ are linearly independent}\right\}\, ,
\end{equation}
for each $k \in \{1,\dots,d\}$, where the eigenvalues of $A$ are enumerated in non-decreasing order $\lambda_1 \leq \dots\leq \lambda_d$, repeating according to multiplicity.
With this, we may prove the first part of the arguments.
\begin{lemma}
\label{lem:UpperBdVar}
Let $\mathcal{S}$ be a finite subset of $\widetilde{\mathcal{K}}(d,n)$. Let $|\mathcal{S}|$  denote the cardinality of $\mathcal{S}$, as usual.
But also let us introduce a new notation
$$
\|\mathcal{S}\|^2\, =\, \max\Big\{ \sum_{k=1}^{n} \sum_{j=1}^{d} \kappa_{k,j}^2\, :\,
[(\k_1,\dots,\k_n)]_{S_n} \in \mathcal{S}\Big\}\, .
$$
(The sum is permutation independent, so the definition does not depend on which representatives of each equivalence class are chosen.)
Recall the definition of the spectral subspaces from (\ref{eq:espace}).
For each $\epsilon>0$, there exists an $N_0(\mathcal{S},\epsilon)$ such that 
$$
\dim\left(\mathfrak{L}_{\Lambda(d,N)}\left(0,\gamma \cdot [L^+(d,N)]^{-2}(\|\mathcal{S}\|^2+\epsilon)\right)\right)\, \geq\, |\mathcal{S}|\, .
$$
\end{lemma}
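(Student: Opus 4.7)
The plan is to apply the Rayleigh--Ritz variational characterization (\ref{eq:RR}) using the approximate eigenvectors $\widetilde{\Psi}^{(n)}_{d,N}(\k_1,\dots,\k_n)$ from Definition \ref{def:PsiTilde}. More specifically, for each equivalence class $\bar{\k} = [(\k_1,\dots,\k_n)]_{S_n} \in \mathcal{S}$, choose a representative $(\k_1,\dots,\k_n)$ and define $\Psi^N_{\bar{\k}} := \widetilde{\Psi}^{(n)}_{d,N}(\k_1,\dots,\k_n)$, which is a well-defined function of $\bar{\k}$ by (\ref{eq:PsiSymm}). The goal is to show that the family $\{\Psi^N_{\bar{\k}}\}_{\bar{\k} \in \mathcal{S}}$ is linearly independent for $N$ large and that the maximum Rayleigh quotient of $H_{\Lambda(d,N)}$ on its span is at most $\gamma [L^+(d,N)]^{-2}(\|\mathcal{S}\|^2+\epsilon)$. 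By (\ref{eq:RR}), this will imply that the $|\mathcal{S}|$-th smallest eigenvalue of $H_{\Lambda(d,N)}$ sits below that threshold, which is precisely the stated inequality on dimensions.

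For the linear independence, I would invoke (\ref{eq:AsymptoticIP}) from Proposition \ref{prop:Variational1}. Since $\mathcal{S}$ is a finite set of equivalence classes under $S_n$, for distinct $\bar{\k}, \bar{\k}' \in \mathcal{S}$ no representative permutation equates them, so the limit Gram entry is $0$, while for $\bar{\k}=\bar{\k}'$ the limit is $n!\,|\mathrm{Stab}(\bar{\k})| > 0$. Thus the finite Gram matrix $G_N := (\langle \Psi^N_{\bar{\k}},\Psi^N_{\bar{\k}'}\rangle)_{\bar{\k},\bar{\k}' \in \mathcal{S}}$ converges to a positive diagonal matrix $D$, and for $N$ large enough the eigenvalues of $G_N$ lie uniformly between two positive constants. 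This gives linear independence and, more importantly, the coefficient bound $\|c\|^2 \leq C \|v\|^2$ for any $v=\sum_{\bar{\k}} c_{\bar{\k}}\Psi^N_{\bar{\k}}$.

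For the energy estimate, I would write, using (\ref{eq:AsymptoticEnergy}), that $H_{\Lambda(d,N)}\Psi^N_{\bar{\k}} = \gamma[L(d,N)]^{-2}\bigl(\|\bar{\k}\|^2 \Psi^N_{\bar{\k}} + R^N_{\bar{\k}}\bigr)$ where $\|R^N_{\bar{\k}}\|\to 0$ for each $\bar{\k}$ (a finite collection). Expanding $\langle v,H_{\Lambda(d,N)}v\rangle$ for $v=\sum_{\bar{\k}} c_{\bar{\k}}\Psi^N_{\bar{\k}}$ yields
\[
\gamma^{-1}[L(d,N)]^2\langle v,Hv\rangle = \sum_{\bar{\k},\bar{\k}'} \overline{c_{\bar{\k}}} c_{\bar{\k}'} \|\bar{\k}'\|^2 G_N(\bar{\k},\bar{\k}') + \sum_{\bar{\k},\bar{\k}'} \overline{c_{\bar{\k}}} c_{\bar{\k}'}\langle \Psi^N_{\bar{\k}},R^N_{\bar{\k}'}\rangle.
\]
As $N\to\infty$, the first sum converges to $\sum_{\bar{\k}} |c_{\bar{\k}}|^2 \|\bar{\k}\|^2 d_{\bar{\k}}$, which is bounded above by $\|\mathcal{S}\|^2 \sum_{\bar{\k}} |c_{\bar{\k}}|^2 d_{\bar{\k}}$, the latter being the limit of $\|v\|^2$. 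The error term is bounded by $|\mathcal{S}|\max_{\bar{\k}}\|\Psi^N_{\bar{\k}}\|\max_{\bar{\k}}\|R^N_{\bar{\k}}\|\cdot\|c\|^2$, which, using the coefficient bound $\|c\|^2\leq C\|v\|^2$, is $o(1)\|v\|^2$. Putting the two pieces together gives $\gamma^{-1}[L(d,N)]^2\langle v,Hv\rangle/\|v\|^2 \leq \|\mathcal{S}\|^2 + o(1)$ uniformly over $v \neq 0$ in the span.

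The last step is to convert the prefactor $[L(d,N)]^{-2}$ into $[L^+(d,N)]^{-2}$. Since $L(d,N),L^+(d,N)$ differ by at most $1$ and both tend to infinity, $[L(d,N)]^{-2}/[L^+(d,N)]^{-2} = 1+o(1)$, so absorbing this into the $\epsilon$--slack gives the Rayleigh quotient bound $\gamma[L^+(d,N)]^{-2}(\|\mathcal{S}\|^2+\epsilon)$ for all $N \geq N_0(\mathcal{S},\epsilon)$. Rayleigh--Ritz then produces at least $|\mathcal{S}|$ eigenvalues (counted with multiplicity) at or below this threshold, proving the dimension bound. The main obstacle is purely bookkeeping: keeping the error controlled uniformly over the coefficient vectors $c$, which is resolved by the positive definiteness of the limit Gram matrix $D$; once that is in hand, everything else is an immediate application of Proposition \ref{prop:Variational1} and (\ref{eq:RR}).
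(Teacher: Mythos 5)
Your proposal is correct, and the overall frame (Rayleigh--Ritz with the trial vectors $\widetilde{\Psi}^{(n)}_{d,N}$, linear independence and coefficient control from the asymptotic Gram matrix of (\ref{eq:AsymptoticIP})) coincides with the paper's. Where you diverge is in how the numerator $\langle v, H_{\Lambda(d,N)}v\rangle$ is bounded. The paper does not use the approximate eigenvalue relation (\ref{eq:AsymptoticEnergy}) at this point at all: it writes $v=\widetilde{T}^{(n)}_{d,N}G$ with $G$ the corresponding combination of the exact ideal-Bose-gas eigenfunctions $\widetilde{F}^{(n)}_{d,L^+(d,N)}(\nu;\cdot)$ (via (\ref{eq:IdentityTildePsiTildeF})) and invokes the operator inequality of Lemma \ref{lem:contractionTildeT} to get $\langle v,H_{\Lambda(d,N)}v\rangle\leq\langle G,\mathfrak{H}^{(n)}_{\B^d(L^+(d,N))}G\rangle$, which is an \emph{exact} diagonal quadratic form in the coefficients with eigenvalues bounded by $\|\mathcal{S}\|^2$ after rescaling; the only asymptotic input is then the lower bound on the denominator $\|v\|^2\geq(1-\epsilon)\|G\|^2$ from (\ref{eq:AsymptoticIP}). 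Your route instead expands $H v$ using the residuals $R^N_{\bar{\k}}$ from (\ref{eq:AsymptoticEnergy}) and controls the cross terms and residual terms uniformly over coefficient vectors via the positive definiteness of the limiting Gram matrix. Both are valid; the paper's version buys a cleaner numerator (no residuals, no off-diagonal terms to estimate there), while yours is more self-contained in that it uses only the two limits stated in Proposition \ref{prop:Variational1} and does not need the contraction/transfer machinery of Lemma \ref{lem:contractionTildeT}. Your handling of the $L(d,N)$ versus $L^+(d,N)$ normalization and the absorption into the $\epsilon$-slack is fine, since the two side lengths differ by at most one and both tend to infinity.
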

\begin{proof}
We consider the associated vectors $\widetilde{\Psi}_{d,N}^{(n)}(\k_1,\dots,\k_n)$,
one chosen for each $[(\k_1,\dots,\k_n)]_{S_n} \in \mathcal{S}$ 
(and noting that the definition is permutation independent, so that the choice of which
representative we choose is not important).
By equation (\ref{eq:AsymptoticIP}), we know
they are linearly independent, for sufficiently large $N$.
%Moreover, the denominators in (\ref{eq:RR}) are controlled as above,
%by the control of the deformation of $\widetilde{T}^{(n)}_{d,N}$
For any element of the span, we have some $c : \mathcal{S} \to \C$ such that the vector is 
$$
\Phi\, =\,
\sum_{[(\k_1,\dots,\k_n)]_{S_n} \in \mathcal{S}} c([(\k_1,\dots,\k_n)]_{S_n}) 
\widetilde{\Psi}_{d,N}^{(n)}(\k_1,\dots,\k_n)
$$
Let us also define
$$
G\, =\, 
\sum_{[(\k_1,\dots,\k_n)]_{S_n} \in \mathcal{S}} c([(\k_1,\dots,\k_n)]_{S_n}) 
\widetilde{F}_{d,L^+(d,N)}^{(n)}(\nu_{(\k_1,\dots,\k_n)};\cdot)\, .
$$
Then, using equation (\ref{eq:AsymptoticIP}) from Proposition \ref{prop:Variational1}, that for any $\epsilon>0$ and sufficiently large $N$ we have
\begin{equation}
\label{eq:asymptoticExact}
1\, \geq\, \frac{\|\Psi\|^2}{\sum_{[(\k_1,\dots,\k_n)]_{S_n} \in \mathcal{S}} |c([(\k_1,\dots,\k_n)]_{S_n})|^2
(n!)^2 |\mathfrak{K}_{d,L^+(d,N)}(\nu)|^{-1} }\, \geq\, 1 - \epsilon\, ,
\end{equation}
The denominator in the middle expression is equal to $\|G\|^2$. 
The bounds in (\ref{eq:asymptoticExact}) are both upper and lower bounds for the denominator term in (\ref{eq:RR}).

But for the numerator, we have just upper bounds, so far.
Using (\ref{eq:contractionTildeT}) in Lemma \ref{lem:contractionTildeT}, and using (\ref{eq:IdentityTildePsiTildeF}), we have
\begin{align*}
\langle \Psi, H_{\Lambda(d,N)} \Psi\rangle\, 
&\leq\, \langle G, \mathfrak{H}^{(n)}_{\B^d(L^+(d,N))} G\rangle\\
&=\,
\sum_{[(\k_1,\dots,\k_n)]_{S_n} \in \mathcal{S}} |c([(\k_1,\dots,\k_n)]_{S_n})|^2
(n!)^2 |\mathfrak{K}_{d,L^+(d,N)}(\nu)|^{-1} \sum_{k=1}^{n} \sum_{j=1}^{d} \kappa_{k,j}^2\, .
\end{align*}
Then the result follows by a max-norm bound of the final sum on the right-hand-side using $\|\mathcal{S}\|^2$ to replace
$\sum_{k=1}^{n} \sum_{j=1}^{d} \kappa_{k,j}^2$ (and (\ref{eq:asymptoticExact}) for the denominator).
\end{proof}

Now we may complete the proof of 
Theorem \ref{thm:SpecGather}.
%We will take it for granted that Proposition \ref{prop:Variational1} has already been proved.
%This will be proved in Appendix \ref{app:AsymptoticEnergy} by explicit calculations.

\begin{proofof}{\bf Proof of Theorem \ref{thm:SpecGather}:}
We now use Proposition \ref{prop:extension0}.
Suppose that $m$ is fixed, as well as a small parameter $\epsilon \in (0,1/2)$,
and that we have $R$ orthonormal eigenvectors of $H_{\Lambda(d,N)}$, called 
$\Phi_1,\dots,\Phi_R \in \mathfrak{S}^{(n)}_{\B^d(L^+(d,N))} (\ell^2((\B^d(L^+(d,N)))^n))$, with energy eigenvalues
equal to $\gamma \cdot [L^+(d,N)]^{-2} \lambda_r$ for each $r \in \{1,\dots,R\}$.
And suppose that the rescaled eigenvalues satisfy $\lambda_1,\dots,\lambda_R \in [0,m+\epsilon]$.
Let $G_r = \Xi^{(n)}_{d,N} \Phi_r$ for each $r \in \{1,\dots,R\}$.
Then by Proposition \ref{prop:extension0}, we know that
\begin{equation}
\label{eq:GrEnBd}
\big\langle G_r\, ,\ \mathfrak{H}^{(n)}_{\B^d(L^+(d,N))} G_r\big\rangle\, \leq\, \gamma \cdot [L^+(d,N)]^{-2} (m+\epsilon) C_3(d,n)\, ,
\end{equation}
where $C_3(d,n)$ is a large but fixed constant, independent of $N$.
We also know that $\widetilde{T}^{(n)}_{d,N} G_r = \Phi_r$ for each $r$. In particular, this means that 
$$
\|G_r\|^2\, \geq\, \|\widetilde{T}^{(n)}_{d,N} G_r \|^2\, =\, \|\Phi_r\|^2\, =\, 1\, ,
$$
by equation (\ref{eq:contractionTildeT}).
%Actually let us note (although we do not technically need to do so) that combining (\ref{eq:GrEnBd}) and 
%(\ref{eq:contractionTildeT}), we have
%$$
%\|G_r\|^2\, \leq\, 1 + \frac{C}{L^+(d,n)}\, ,
%$$
%where $C$ only depends on the constants in (\ref{eq:contractionTildeT}) and the quantity $(m+\epsilon)C_3(d,n)$.
Then, using Lemma \ref{lem:specSummary} and Chebyshev's inequality, we may write
$$
G_r\, =\, \sum_{\nu \in \mathscr{O}_n(d,L^+(d,N))} c_r(\nu) \widetilde{F}^{(n)}_{d,L^+(d,N)}(\nu;\cdot)\, ,
$$
and we may bound the coefficients of $\nu$'s for which $\widetilde{F}^{(n)}_{d,L^+(d,N)}(\nu;\cdot)$ has a large eigenvalue.
More precisely, let us use the bound $\sin(\theta) \geq 2\theta/\pi$ for $\theta \in [0,\pi/2]$, and let us define, for a given $M$,
$$
\mathcal{S}_M\, =\, \Big\{\nu \in \mathscr{O}_n(d,L^+(d,N))\, :\, \sum_{\k \in \{0,\dots,L-1\}^d} \nu(\k) \sum_{j=1}^{d} \kappa_j^2 \leq M\Big\}\, .
$$
Then we can say that
$$
\sum_{\nu \in \mathscr{O}_n(d,L^+(d,N)) \setminus \mathcal{S}_M} |c_r(\nu)|^2\, \leq\, \frac{\gamma \cdot [L^+(d,N)]^{-2} (m+\epsilon) C_3(d,n)}{2[L^+(d,N)]^{-2}M}\,
=\, \frac{\gamma (m+\epsilon) C_3(d,n)}{2M}\, .
$$
Let us define the truncated function, where we cut-off the coefficients in energy modes higher than the natural cutoff related to $M$:
$$
\widetilde{G}_{r,M}\, =\, \sum_{\nu \in \mathcal{S}_M} c_r(\nu) \widetilde{F}^{(n)}_{d,L^+(d,N)}(\nu;\cdot)\, .
$$
Then we have shown that $\|G_r - \widetilde{G}_{r,M}\|^2 \leq C'_3(d,n) (m+\epsilon)/M$ for some constant $C_3'(d,n)$.
Moreover, by (\ref{eq:contractionTildeT}), again, this implies
$$
\|\Phi_r - \widetilde{T}^{(n)}_{d,N} \widetilde{G}_{r,M}\|^2\, =\, \|\widetilde{T}^{(n)}_{d,N}(G_r - \widetilde{G}_{r,M})\|^2\, \leq\, 
\|G_r - \widetilde{G}_{r,M}\|^2\, \leq\, \frac{ C'_3(d,n) (m+\epsilon)}{M}\, .
$$
So we first choose $M$ to be a sufficiently large, but fixed, multiple of $m$ so that this fraction is small.
In particular, then we may assume that the vectors $\widetilde{T}^{(n)}_{d,N} \widetilde{G}_{r,M}$ are approximately orthonormal, where
each inner product deviates from $\delta_{r,s}$ by an amount not larger than $O((m+\epsilon)/M)$.
Then we notice that $\mathcal{S}_M$ is still finite, and in particular the energy of $\widetilde{F}^{(n)}_{d,L^+(d,N)}(\nu;\cdot)$ for any $\nu \in \mathcal{S}_M$
is actually bounded by $\gamma \cdot [L^+(d,N)]^{-2} M$.
So, if we assume that $N$ is large, then $\widetilde{T}^{(n)}_{d,N}$ does not deform each $\widetilde{G}_r$ by very much.
Another way of saying this is as follows.
We have
$$
\widetilde{T}^{(n)}_{d,N} \widetilde{G}_r\,
=\, \sum_{\nu \in \mathcal{S}_M} c_r(\nu) \widetilde{\Psi}_{d,N}^{(n)}(\k_1(\nu),\dots,\k_n(\nu))\, ,
$$
where we pick one element $(\k_1(\nu),\dots,\k_n(\nu)) \in \mathfrak{K}_{d,L^+(d,N)}(\nu)$ for each $\nu \in \mathcal{S}_M$.

Now, using the fact that $\Phi_r$ is an exact eigenvector, we have, for each $\nu \in \mathcal{S}_M$,
\begin{align*}
\lambda_r \langle \widetilde{\Psi}_{d,N}^{(n)}(\k_1(\nu),\dots,\k_n(\nu))\, ,\
\Phi_r \rangle\,
&=\, \langle \widetilde{\Psi}_{d,N}^{(n)}(\k_1(\nu),\dots,\k_n(\nu))\, ,\
\gamma^{-1} L^2 H_{\Lambda(d,N)}\Phi_r \rangle \\
&=\, \langle \gamma^{-1} L^2 H_{\Lambda(d,N)} \widetilde{\Psi}_{d,N}^{(n)}(\k_1(\nu),\dots,\k_n(\nu))\, ,\
\Phi_r \rangle \\
&=\, \|\nu\|^2 \cdot \langle \widetilde{\Psi}_{d,N}^{(n)}(\k_1(\nu),\dots,\k_n(\nu))\, ,\
\Phi_r \rangle \\
&\qquad + \langle (\gamma^{-1} L^2 H_{\Lambda(d,N)} -\|\nu\|^2) \widetilde{\Psi}_{d,N}^{(n)}(\k_1(\nu),\dots,\k_n(\nu))\, ,\
\Phi_r \rangle \, .
\end{align*}
So we see that , assuming $\|\nu\|^2 \neq \lambda_r$, we have
\begin{align*}
|\langle \widetilde{\Psi}_{d,N}^{(n)}(\k_1(\nu),\dots,\k_n(\nu))\, ,\
\Phi_r \rangle|^2\,
&\leq\, \frac{\left|
\langle (\gamma^{-1} L^2 H_{\Lambda(d,N)} -\|\nu\|^2) \widetilde{\Psi}_{d,N}^{(n)}(\k_1(\nu),\dots,\k_n(\nu))\, ,\
\Phi_r \rangle \right|^2}{|\lambda_r - \|\nu\|^2|^2}\\
&\leq\, \frac{\|(\gamma^{-1} L^2 H_{\Lambda(d,N)} -\|\nu\|^2) \widetilde{\Psi}_{d,N}^{(n)}(\k_1(\nu),\dots,\k_n(\nu))\|^2}
{|\lambda_r - \|\nu\|^2|^2}\, ,
\end{align*}
by Cauchy-Schwarz.
But we have already established in 
Proposition \ref{prop:Variational1} that the numerator is vanishingly small as $L \to \infty$.
Therefore, since $\lambda_r < m+\epsilon$, if $\nu \in \mathcal{S}_M \setminus \mathcal{S}_m$, then we can establish that
$|\langle \widetilde{\Psi}_{d,N}^{(n)}(\k_1(\nu),\dots,\k_n(\nu))\, ,\
\Phi_r \rangle|^2$ is small.

But since $\Phi_r$ is close to $\widetilde{T}^{(n)}_{d,N} \widetilde{G}_r$ in norm, this implies that $c_r(\nu)$ is small for such $\nu$'s.
So, defining
$$
\widetilde{\Phi}_{r,m}\, =\, \sum_{\nu \in \mathcal{S}_m} c_r(\nu) \widetilde{\Psi}_{d,N}^{(n)}(\k_1(\nu),\dots,\k_n(\nu))\, ,
$$
we have that $\|\Phi_r - \widetilde{\Phi}_r\|$ is small for each $r$.
But $\Phi_1,\dots,\Phi_R$ are orthonormal, while each $\widetilde{\Phi}_r$ is in a subspace spanned by $|\mathcal{S}_m|$ vectors.
Therefore, this requires $R\leq \mathcal{S}_m$.
Combined with Lemma \ref{lem:UpperBdVar}, this establishes the correct dimensions for the spectral subspaces.
By Proposition \ref{prop:Variational1}, we have the correct number of trial wavefunctions which are also approximate eigenvectors.
Once this is established, the remainder of the proof immediately follows.
\end{proofof}

\section{Proof of Key Step II -- Part C: Trace theorem type bound}
\label{sec:trace0}

In this section we prove
Proposition \ref{prop:trace0}.
We consider $F \in \mathfrak{S}_{\B^d(L^+(d,N))}^{(n)}(\ell^2(\B^d(L^+(d,N))^n))$.
Let us allow ourselves to abbreviate $L^+(d,N)$ as $L$.
Then
$$
\|F\|^2 - \|\widetilde{T}^{(n)}_{d,N} F\|^2\, 
=\, \sum_{(\r_1,\dots,\r_n) \in \B^d(L)^n} (1 - \mathbf{1}_{\Lambda(d,N)^n \setminus \mathscr{I}_n(\Lambda(d,N))}(\r_1,\dots,\r_n)) |F(\r_1,\dots,\r_n)|^2\, .
$$
Note that the first factor in the summation selects only those points
$(\r_1,\dots,\r_n)\in\B^d(L)^n$  in the complement of $\Lambda(d,N)^n \setminus \mathscr{I}_n(\Lambda(d,N))$.
This set is the union of $\B^d(L)^n \setminus \Lambda(d,N)^n$
and $\mathscr{I}_n(\Lambda(d,N))$.
So
$$
\|F\|^2 - \|\widetilde{T}^{(n)}_{d,N} F\|^2\, 
=\, \sum_{(\r_1,\dots,\r_n) \in \B^d(L)^n \setminus \Lambda(d,N)^n} |F(\r_1,\dots,\r_n)|^2
+ \sum_{(\r_1,\dots,\r_n) \in \mathscr{I}_n(\Lambda(d,N))} |F(\r_1,\dots,\r_n)|^2\, .
$$
We will consider these two sums, separately.
But both will use the same elementary inequality which is a discrete version of the trace theorem inequality
(as it is called in the theory of PDE's):
\begin{lemma}
\label{lem:trace1}
Suppose that $f : \{1,\dots,L\} \to \C$ is a function. 
Then
\begin{equation}
\label{eq:trace1}
|f(1)|^2\, \leq\, \frac{2L}{3}\, \sum_{\ell=1}^{L-1} |f(\ell)-f(\ell+1)|^2 + \frac{2(L-1)}{L^2}\, \sum_{\ell=1}^{L} |f(\ell)|^2\, .
\end{equation}
\end{lemma}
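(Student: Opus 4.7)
The plan is to reduce \ref{eq:trace1} to a single discrete telescoping identity, followed by Cauchy--Schwarz and a Young-type splitting. Writing $g(k)=f(k+1)-f(k)$, the discrete fundamental theorem of calculus gives $f(1) = f(\ell) - \sum_{k=1}^{\ell-1} g(k)$ for each $\ell \in \{1,\dots,L\}$. Averaging these identities uniformly over $\ell$ and exchanging the order of summation, using the elementary count $\sum_{\ell=1}^{L}\sum_{k=1}^{\ell-1} = \sum_{k=1}^{L-1}(L-k)$, produces the master identity
\begin{equation*}
L f(1) = \sum_{\ell=1}^{L} f(\ell) - \sum_{k=1}^{L-1} (L-k)\, g(k).
\end{equation*}

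With this identity in hand, I would apply the Young-type splitting $|a-b|^2 \leq (1+\mu)|a|^2 + (1+1/\mu)|b|^2$, valid for any $\mu>0$, to the square of the right hand side. The two resulting pieces are then controlled by Cauchy--Schwarz: the plain sum yields $|\sum_\ell f(\ell)|^2 \leq L \sum_\ell |f(\ell)|^2$, and the weighted sum, using the standard identity $\sum_{k=1}^{L-1}(L-k)^2 = \frac{(L-1)L(2L-1)}{6}$, yields
\begin{equation*}
\Big|\sum_{k=1}^{L-1} (L-k)\, g(k)\Big|^2 \leq \frac{(L-1)L(2L-1)}{6}\sum_{k=1}^{L-1} |g(k)|^2.
\end{equation*}
Combining these two estimates and dividing by $L^2$ produces an inequality of exactly the desired shape, with coefficients depending explicitly on the free parameter $\mu$.

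The remaining task is to choose $\mu$ so that both resulting coefficients are at most the prescribed $\frac{2(L-1)}{L^2}$ on $\sum_\ell |f(\ell)|^2$ and $\frac{2L}{3}$ on $\sum_k |g(k)|^2$. Forcing equality in the $\ell^2$ term pins $\mu = (L-2)/L$, after which the gradient coefficient simplifies to $(L-1)^2(2L-1)/(3L(L-2))$; a short algebraic manipulation shows this is bounded above by $\frac{2L}{3}$ once $L$ is large enough, which is the regime of interest for the rest of the paper. I expect the main obstacle to be matching the constants uniformly in $L$: the uniform averaging over $\ell$ used above misses the target by sub-leading terms at small $L$, so one may need to refine the argument either by taking a non-uniform weight $w(\ell)$ in the averaging step and re-optimizing the Young/Cauchy--Schwarz trade-off, or by separately verifying the inequality by hand for the handful of small $L$ not covered by the asymptotic regime. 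The conceptual content of the proof, however, lies entirely in the averaged telescoping identity.
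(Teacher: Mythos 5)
Your master identity is exactly the paper's: the paper obtains it by summing by parts against the linear cutoff $\phi(\ell)=(L-\ell)/(L-1)$, which after clearing the factor $L-1$ is precisely your averaged telescoping identity $Lf(1)=\sum_{\ell=1}^{L}f(\ell)-\sum_{k=1}^{L-1}(L-k)\,g(k)$, and from there both arguments run Cauchy--Schwarz on each piece followed by a quadratic splitting. The one substantive difference is in that last step: the paper uses $(a+b)^2\le 2a^2+2b^2$ (your $\mu=1$), whereas you tune $\mu=(L-2)/L$ so that the $\ell^2$ coefficient equals $2(L-1)/L^2$ exactly, and then the gradient coefficient $(L-1)^2(2L-1)/(3L(L-2))\le 2L/3$ reduces to $L^2-4L+1\ge 0$, i.e.\ $L\ge 4$ --- so your argument is complete for $L\ge4$, which is all that is used (the lemma is only invoked with $L=L^+(d,N)$ and $N$ large). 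Your caution about small $L$ is in fact justified in a stronger sense than you suggest: the inequality as stated fails at $L=2$ (take $f(1)=1$, $f(2)=8/11$; the right-hand side is $209/242<1$), so no choice of $\mu$, nor any hand-check, can close that case --- but again this is irrelevant to the application. In short, the proposal is correct where it matters and is, if anything, a slightly more careful execution of the same proof.
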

\begin{proof}
Let $\phi(\ell) = (L-\ell)/(L-1)$.
Then, we have
$$
f(1)\, =\, 
f(1)\phi(1) - f(L) \phi(L)\,
=\, \sum_{\ell=1}^{L-1} [f(\ell)\phi(\ell) - f(\ell+1)\phi(\ell+1)]\, .
$$
Then we may rewrite
\begin{align*}
f(\ell) \phi(\ell) - f(\ell+1)\phi(\ell+1)\,
&=\, \phi(\ell) [f(\ell)-f(\ell+1)] + [\phi(\ell)-\phi(\ell+1)] f(\ell+1)\\ 
&=\, \phi(\ell) [f(\ell)-f(\ell+1)] + \frac{1}{L-1}\, f(\ell+1)\, .
\end{align*}
So
$$
f(1)\, =\, \sum_{\ell=1}^{L-1} \phi(\ell) [f(\ell)-f(\ell+1)]  + \frac{1}{L-1}\, \sum_{\ell=1}^{L-1} f(\ell+1)\, .
$$
Note that we can make the second sum slightly more symmetric, by including $f(1)$ to obtain
$$
\frac{L}{L-1}\, f(1)\, =\, 
\sum_{\ell=1}^{L-1} \phi(\ell) [f(\ell)-f(\ell+1)]  + \frac{1}{L-1}\, \sum_{\ell=1}^{L} f(\ell)\, . 
$$
Then, by the triangle inequality and Cauchy-Schwarz,
$$
\frac{L}{L-1}\, |f(1)|\, \leq\, \bigg[\sum_{\ell=1}^{L-1} \phi(\ell)^2\bigg]^{1/2} \bigg[\sum_{\ell=1}^{L-1} |f(\ell)-f(\ell+1)|^2\bigg]^{1/2}
+ \frac{1}{\sqrt{L-1}}\, \bigg[\sum_{\ell=1}^{L-1} |f(\ell)|^2\bigg]^{1/2}\, .
$$
Note that $\sum_{\ell=1}^{L-1} \phi(\ell)^2 = (L-1)^{-2} \sum_{\ell=1}^{L-1} \ell^2$ and performing the sum, this is bounded by $L^3/[3(L-1)^2]$.
Squaring and using the inequality $(a+b)^2 \leq 2a^2+2b^2$ gives the result.
\end{proof}
Now we return to the problem of bounding
$$
\sum_{(\r_1,\dots,\r_n) \in \B^d(L)^n \setminus \Lambda(d,N)^n} |F(\r_1,\dots,\r_n)|^2\, ,\quad
\text{ and }\quad
\sum_{(\r_1,\dots,\r_n) \in \mathscr{I}_n(\Lambda(d,N))} |F(\r_1,\dots,\r_n)|^2\, .
$$
Given a point $(\r_1,\dots,\r_n) \in \B^d(L)^n \setminus \Lambda(d,N)^n$, there is at least one $k_0$ such that 
$\r_{k_0} \in \B^d(L) \setminus \Lambda(d,N)$.
(We choose $k_0$ to be the minimal element in case the number of possible choices is greater than 1.)
Then, there is also some $j_0 \in \{1,\dots,d\}$ such that $r_{k_0,j_0}=L$. (Again, in case of multiple choices we choose the minimal one.)
Now we will consider a chain of points $(\widetilde{\r}_1(\ell),\dots,\widetilde{\r}_n(\ell))$ for $\ell \in \{0,\dots,\lfloor L/2\rfloor -1\}$,
as follows.
We let $\widetilde{r}_{k_0,j_0}(\ell) = L-\ell$
and $\widetilde{r}_{k,j}(\ell) = r_{k,j}$ for all $(k,j) \neq (k_0,j_0)$.
Let us actually refer to these points as $(\widetilde{\r}_1(\ell;\r_1,\dots,\r_n),\dots,\widetilde{\r}_n(\ell;\r_1,\dots,\r_n))$, for later reference.
By Lemma \ref{lem:trace1}, we then know
\begin{equation}
\label{eq:accretion}
\begin{split}
&\hspace{-0.25cm}\sum_{(\r_1,\dots,\r_n) \in \B^d(L)^n \setminus \Lambda(d,N)^n} |F(\r_1,\dots,\r_n)|^2\\ 
&\hspace{1cm}\leq\, \frac{L}{3} \sum_{(\r_1,\dots,\r_n) \in \B^d(L)^n \setminus \Lambda(d,N)^n} \sum_{\ell=1}^{\lfloor L/2\rfloor-1} 
|F(\widetilde{\r}_1(\ell;\r_1,\dots,\r_n))-
F(\widetilde{\r}_1(\ell-1;\r_1,\dots,\r_n))|^2\\
&\hspace{1cm} + \frac{2}{\lfloor L/2 \rfloor}\, 
\sum_{(\r_1,\dots,\r_n) \in \B^d(L)^n \setminus \Lambda(d,N)^n} \sum_{\ell=0}^{\lfloor L/2\rfloor-1} 
|F(\widetilde{\r}_1(\ell;\r_1,\dots,\r_n))|^2
\end{split}
\end{equation}
Now, let us count the number of times any given edge in $\Phi_n(\B^d(L))$, say going from $(\r_1',\dots,\r_n')$ to $(\r_1'',\dots,\r_n'')$, will occur as an edge,
from $(\widetilde{\r}_1(\ell;\r_1,\dots,\r_n))$ to $(\widetilde{\r}_1(\ell-1;\r_1,\dots,\r_n))$,
for some point $(\r_1,\dots,\r_n) \in \B^d(L)^n \setminus \Lambda(d,N)^n$.
Note that $r_{kj}' = r''_{kj}$ except for one choice $(k_0,j_0)$.
But then we may determine that we must have $(\r_1,\dots,\r_n)$ being the point such that $r_{k_0,j_0}=L$ and $r_{kj}=r'_{kj}$ for all $(k,j) \neq (k_0,j_0)$.
So it can only occur at most one time.
Thus, since each $(\r_1',\dots,\r_n')$ is an endpoint for $2dn$ edges (at most), we deduce that
\begin{equation}
\label{eq:firstSUM}
\sum_{(\r_1,\dots,\r_n) \in \B^d(L)^n \setminus \Lambda(d,N)^n} |F(\r_1,\dots,\r_n)|^2\,
\leq\, \frac{2L}{3}\, \langle F, \mathfrak{H}^{(n)}_{\B^d(L)} F\rangle + \frac{4nd}{\lfloor L/2 \rfloor}\, \|F\|^2\, .
\end{equation}
Let us choose this point to declare that we are proving the proposition

\begin{proofof}{\bf Proof of Proposition \ref{prop:trace0}:}
All that remains is to bound the second sum
$$
\sum_{(\r_1,\dots,\r_n) \in \mathscr{I}_n(\Lambda(d,N))} |F(\r_1,\dots,\r_n)|^2\, .
$$
We use a similar argument as before. If $(\r_1,\dots,\r_n)$ is in $\mathscr{I}_n(\Lambda(d,N))$, then 
we must have some $1\leq k_1<k_2\leq n$ such that $\r_{k_1} = \r_{k_2}$.
Now we will make $(\widetilde{\r}_1(\ell),\dots,\widetilde{\r}_n(\ell))$ for $\ell \in \{0,\dots,\lfloor L/2\rfloor -1\}$,
as follows.
If $r_{k_1,1} \in \{1,\dots,\lfloor L/2 \rfloor\}$ then we take $\widetilde{r}_{k_1,1}(\ell) = r_{k_1,1}+\ell$,
and we let $\widetilde{r}_{k,j}(\ell)=r_{k,j}$ for all $(k,j) \neq (k_1,\ell)$.
Otherwise, we have $r_{k_1,1} \in \{\lfloor L/2\rfloor+1,\dots,L\}$ and we let 
$\widetilde{r}_{k_1,1}(\ell) = r_{k_1,1}-\ell$.
We again have the same type of bound as in (\ref{eq:accretion}), except now we have $\mathscr{I}_n(\Lambda(d,N))$,
instead of $\B^d(L)^n \setminus \Lambda(d,N)^n$.

We must again ask, for an arbitrary edge, 
going from $(\r_1',\dots,\r_n')$ to $(\r_1'',\dots,\r_n'')$, how many times will it occur as an edge,
from $(\widetilde{\r}_1(\ell;\r_1,\dots,\r_n))$ to $(\widetilde{\r}_1(\ell-1;\r_1,\dots,\r_n))$,
for some point $(\r_1,\dots,\r_n) \in \mathscr{I}_n(\Lambda(d,N))$.
Note that there is exactly one $k_1 \in \{1,\dots,n\}$ such that $\widetilde{\r}_{k_1}' \neq \widetilde{\r}_{k_1}''$.
There is also exactly one $j \in \{1,\dots,d\}$ such that $r_{k_1,j}' \neq r_{k_1,j}''$.
If $j\neq 1$, then the number of times the edge occurs is $0$.
But if $j=1$ then we may do the following. 
Consider $\r'''(\ell) = (\ell,\r_{k_1,2},\dots,\r_{k_1,n})$.
Whenever we obtain that $\r'''(\ell)=\r_k'$ for some $k\neq k_1$ and $\ell \in \{1,\dots,L\}$
then we may consider the $n$-tuple where, starting with $(\r_1',\dots,\r_n')$, we replace $\r_{k_1}'$ by $\r'''(\ell)$.
Then this is a possible starting point $(\r_1,\dots,\r_n)$.
But these are the only possible starting points, and there are only $n-1$ possibilities because we must have $k \in \{1,\dots,n\} \setminus \{k_1\}$.

Therefore, as before, we obtain
\begin{equation*}
\sum_{(\r_1,\dots,\r_n) \in \mathscr{I}_n(\Lambda(d,N))} |F(\r_1,\dots,\r_n)|^2\,
\leq\, \frac{2(n-1)L}{3}\, \langle F, \mathfrak{H}^{(n)}_{\B^d(L)} F\rangle + \frac{4n(n-1)d}{\lfloor L/2 \rfloor}\, \|F\|^2\, .
\end{equation*}
Putting this together with (\ref{eq:firstSUM}) gives the desired result.
\end{proofof}

\section{Proof of Key Step II -- Part D: Extension theorem type bound}
\label{sec:extension0}

We will now prove Proposition \ref{prop:extension0}.
Suppose that we have $F \in \ell^2(\B^d(L)^n)$
which is supported on $\Lambda(d,N)^n \setminus \mathscr{I}_n(\Lambda(d,N))$.
Let us define
\begin{multline}
\widetilde{\mathfrak{H}}^{(n)}_{\Lambda(d,N)} F(\r_1,\dots,\r_n)\\
=\, \sum_{(\r_1',\dots,\r_n')\Lambda(d,N)^n \setminus \mathscr{I}_n(\Lambda(d,N))}
\mathbf{1}_{\{1\}}(\|(\r_1,\dots,\r_n) - (\r'_1,\dots,\r'_n)\|)
[F(\r_1,\dots,\r_n) - F(\r'_1,\dots,\r'_n)]\, .
\end{multline}
Then the goal is to prove that there is an extension $\widetilde{F}$ of $F$ to all of $\B^d(L)^d$,
such that 
\begin{itemize}
\item[(i)] $\widetilde{F}(\r_1,\dots,\r_n) = F(\r_1,\dots,\r_n)$
for each $(\r_1,\dots,\r_n) \in \Lambda(d,N)^n \setminus \mathscr{I}_n(\Lambda(d,N))$,
\item[(ii)]
there exists some fixed $C_3(d,n)$ (not depending on $F$) such that
\begin{equation}
\label{eq:CONDii}
\langle \widetilde{F},\mathfrak{H}^{(n)}_{\B^d(L)} \widetilde{F}\rangle\,
\leq\, C_3(d,n) \langle F,\widetilde{\mathfrak{H}}^{(n)}_{\Lambda(d,N)} F\rangle\, .
\end{equation}
\end{itemize}
Given any $(\r_1,\dots,\r_n) \not\in \Lambda(d,N)^n \setminus \mathscr{I}_n(\Lambda(d,N))$,
let $\rho(\r_1,\dots,\r_n)$ be the minimum distance to $\Lambda(d,N)^n \setminus \mathscr{I}_n(\Lambda(d,N))$:
$$
\rho(\r_1,\dots,\r_n)\, \stackrel{\mathrm{def}}{:=}\,
\min\{\|(\r_1,\dots,\r_n) - (\r_1',\dots,\r_n')\|_1\, :\, (\r_1',\dots,\r_n') \in \Lambda(d,N)^n \setminus \mathscr{I}_n(\Lambda(d,N))\}\, .
$$
Then, given $(\r_1,\dots,\r_n) \in \B^d(L)^n$ satisfying 
$(\r_1,\dots,\r_n) \not\in \Lambda(d,N)^n \setminus \mathscr{I}_n(\Lambda(d,N))$,
we define $(\widetilde{\r}_1,\dots,\widetilde{\r}_n) \in \Lambda(d,N)^n \setminus \mathscr{I}_n(\Lambda(d,N))$
to be the point such that $\|(\r_1,\dots,\r_n)-(\widetilde{\r}_1,\dots,\widetilde{\r}_n)\|_1 = \rho(\r_1,\dots,\r_n)$ 
and such that $(\widetilde{\r}_1,\dots,\widetilde{\r}_n)$ is minimal in the lexicographic
ordering in case more than 1 such point exists.
For later reference, let us denote these points as $(\widetilde{\r}_1(\r_1,\dots,\r_n),\dots,\widetilde{\r}_n(\r_1,\dots,\r_n))$.
Note that 
$$
(\r_1,\dots,\r_n) \in \Lambda(d,N)^n \setminus \mathscr{I}_n(\Lambda(d,N))\quad \Leftrightarrow\quad
(\widetilde{\r}_1(\r_1,\dots,\r_n),\dots,\widetilde{\r}_n(\r_1,\dots,\r_n))\, =\, (\r_1,\dots,\r_n)\, .
$$
Then we define, for every $(\r_1,\dots,\r_n) \in \B^d(L)$,
\begin{equation}
\label{eq:widetTildeFDef}
\widetilde{F}(\r_1,\dots,\r_n)\, \stackrel{\mathrm{def}}{:=}\, 
F(\widetilde{r}_1(\r_1,\dots,\r_n),\dots,\widetilde{r}_n(\r_1,\dots,\r_n))\, .
\end{equation}
This does satisfy $\widetilde{F}(\r_1,\dots,\r_n)=F(\r_1,\dots,\r_n)$
if $(\r_1,\dots,\r_n)$ is in $\Lambda(d,N)^n \setminus \mathscr{I}_n(\Lambda(d,N))$, condition (i).
So we just have to check condition (ii), namely equation (\ref{eq:CONDii}).

Now, the key point is the following:
suppose that $(\r_1,\dots,\r_n)$ and $(\r_1',\dots,\r_n')$ are neighbors in $\B^d(L)^n$,
then we have
\begin{multline}
\label{eq:diffFirst}
\widetilde{F}(\r_1,\dots,\r_n)
-\widetilde{F}(\r_1',\dots,\r_n')\\
=\, F(\widetilde{\r}_1(\r_1,\dots,\r_n),\dots,\widetilde{\r}_n(\r_1,\dots,\r_n))
- F(\widetilde{\r}_1(\r_1',\dots,\r_n'),\dots,\widetilde{\r}_n(\r_1',\dots,\r_n'))\, .
\end{multline}
The question is how to bound the right hand side, using $\langle F, \widetilde{H}^{(n)}_{\Lambda(d,N)} F\rangle$.
We are somewhat motivated by the argument of 
Section \ref{sec:trace0}.
We will construct a chain connecting
$(\widetilde{\r}_1(\r_1,\dots,\r_n),\dots,\widetilde{\r}_n(\r_1,\dots,\r_n))$
to $(\widetilde{\r}_1(\r_1',\dots,\r_n'),\dots,\widetilde{\r}_n(\r_1',\dots,\r_n'))$ within
$\Lambda(d,N)^n \setminus \mathscr{I}_n(\Lambda(d,N))$
and then we will use a telescoping sum.

We combine these two lemmas:
\begin{lemma}
\label{lem:MAXrho}
There is a constant $\rho_{\max}(n,d)$ such that 
for any $(\r_1,\dots,\r_n) \in \B^d(L)$, we have 
$$
\rho(\r_1,\dots,\r_n)\, \leq\, \rho_{\max}(n,d)\, .
$$
\end{lemma}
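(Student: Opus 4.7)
The plan is to produce, for every $(\r_1,\dots,\r_n) \in \B^d(L^+(d,N))^n$, an explicit tuple $(\widetilde{\r}_1,\dots,\widetilde{\r}_n) \in \Lambda(d,N)^n \setminus \mathscr{I}_n(\Lambda(d,N))$ with $\|\r_k - \widetilde{\r}_k\|_1$ bounded by a constant $R(n,d)$ for each $k$, so that $\rho(\r_1,\dots,\r_n) \leq n R(n,d) =: \rho_{\max}(n,d)$. The construction has two stages, and the heart of the argument is a simple lattice-counting fact that makes a greedy separation succeed.

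First I would push each $\r_k$ into $\B^d(L(d,N)) \subseteq \Lambda(d,N)$ by replacing any coordinate equal to $L^+(d,N)$ with $L(d,N)$, costing at most $d$ in $\ell^1$ per particle (using $L^+(d,N) - L(d,N) \leq 1$). Call the resulting points $\r_k^{(0)} \in \B^d(L(d,N))$. Then I would iteratively choose $\widetilde{\r}_1,\dots,\widetilde{\r}_n$: set $\widetilde{\r}_1 = \r_1^{(0)}$, and, having fixed $\widetilde{\r}_1,\dots,\widetilde{\r}_{k-1}$, take $\widetilde{\r}_k$ to be a point of $\B^d(L(d,N)) \setminus \{\widetilde{\r}_1,\dots,\widetilde{\r}_{k-1}\}$ minimizing $\|\widetilde{\r}_k - \r_k^{(0)}\|_1$.

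The key input is combinatorial: for any $\r \in \B^d(L')$ with $L' \geq 2R$, the $\ell^1$-ball $\{\r' \in \Z^d : \|\r' - \r\|_1 \leq R\}$ intersected with $\B^d(L')$ contains a full orthant of the ball (pick, for each coordinate $j$, the $\pm$-direction that stays inside the box, which is possible as long as $L' \geq 2R$), and this orthant has $\binom{R+d}{d}$ lattice points. Since this grows polynomially in $R$ of degree $d$, choosing $R = R(n,d)$ large enough depending only on $n$ and $d$ makes it exceed $n$, so the greedy choice can always be made within $\ell^1$-distance $R$, yielding $\rho(\r_1,\dots,\r_n) \leq n(R(n,d)+d)$ in this regime.

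For the finitely many remaining $N$ (with $N\geq n$ so that $\Lambda(d,N)^n \setminus \mathscr{I}_n(\Lambda(d,N))$ is nonempty, but $L(d,N) < 2R(n,d)$), $\rho(\r_1,\dots,\r_n)$ is trivially bounded by the $\ell^1$-diameter of $\B^d(L^+(d,N))^n$, itself bounded by a constant depending only on $n,d$. Taking the maximum of the two bounds gives the uniform $\rho_{\max}(n,d)$. The only subtlety I anticipate is the uniform verification of the orthant count: it must hold no matter where $\r$ sits in the box, including at a corner, which is why I explicitly pick one orthant of coordinate-shifts of chosen sign rather than using the full $\ell^1$-ball.
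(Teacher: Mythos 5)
Your proposal is correct, but it takes a genuinely different route from the paper's. You bound $\rho$ by a greedy pigeonhole argument: after retracting any coordinate equal to $L^+(d,N)$ to $L(d,N)$ (cost at most $d$ per particle), you place the particles one at a time at pairwise distinct sites of $\B^d(L(d,N))$, using the counting fact that an $\ell^1$-ball of radius $R$ about any point of $\B^d(L')$ contains a full coordinate orthant of $\binom{R+d}{d}$ box points once $L'\geq 2R$; choosing $R$ with $\binom{R+d}{d}\geq n$ (already $R=n-1$ works) guarantees an unoccupied site within distance $R$ at every step. That argument is sound, including the care taken at corners of the box. The paper instead runs an explicit dynamical construction: Step 1 moves every coordinate equal to $L$ down to $L-1$, and Step 2 partitions the particle indices into ``clusters'' under an $\ell^1$-proximity relation and spreads the first coordinates within each cluster apart (distinguishing high and low clusters and scheduling the moves as left-movers and right-movers so as to avoid collisions), arriving at the bound $nd+n\max\{n-1,(n-1)(2n-2)\}$ under a standing assumption $L\geq 3n(n-2)$, $L\geq 3n$. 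Your version is shorter, yields a smaller constant (about $n(n-1+d)$), and explicitly disposes of the finitely many small $L$ via the diameter bound, which the paper covers only by its largeness assumption. What the paper's longer construction buys, and yours does not, is an explicit nearest-neighbour chain realizing the displacement; those same Step 1/Step 2 algorithms are invoked again in the proof of Lemma \ref{lem:LITTLEc3}, where the path structure, not merely the distance, is what is needed. For the lemma as stated, which asserts only a bound on the $\ell^1$ distance $\rho$, your argument is complete.
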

\begin{lemma}
\label{lem:LITTLEc3}
There is a constant $c_3(d,n)$ such that the following holds.
If $(\r_1,\dots,\r_n)$ and $(\r_1',\dots,\r_n')$ are two points in $\Lambda(d,N)^n \setminus \mathscr{I}_n(\Lambda(d,N))$ satisfying
$$
\|(\r_1,\dots,\r_n) - (\r_1',\dots,\r_n')\|\, \leq\, 2 \rho_{\max}(n,d)+1\, ,
$$
then, for some $\tau \leq c_3(d,n)$, there are points $(\hat{\r}_1(t),\dots,\hat{\r}_n(t)) \in \Lambda(d,N)^n \setminus \mathscr{I}_n(\Lambda(d,N))$, 
for $t \in \{0,\dots,\tau\}$,
satisfying
\begin{itemize}
\item $(\hat{\r}_1(0),\dots,\hat{\r}_n(0)) = (\r_1,\dots,\r_n)$,
\item $(\hat{\r}_1(\tau),\dots,\hat{\r}_2(\tau)) = (\r_{\pi_1},\dots,\r_{\pi_n})$, for some permutation $\pi \in S_n$,
where $\pi$ equals the identity if $d>1$,
and 
\item
for each $t \in \{1,\dots,\tau\}$,
$$
\|(\hat{\r}_1(t),\dots,\hat{\r}_2(t)) - (\hat{\r}_1(t-1),\dots,\hat{\r}_2(t-1))\|_1\, =\, 1\, .
$$
\end{itemize}
\end{lemma}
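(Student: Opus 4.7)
The plan is to construct the unit-step path in $\Lambda(d,N)^n\setminus \mathscr{I}_n(\Lambda(d,N))$ explicitly: first fix the permutation $\pi$ by a rank-matching argument, then move particles one at a time along approximate shortest paths, using order-preservation in $d=1$ and local detours in $d\geq 2$ to avoid collisions.

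For the choice of $\pi$: take $\pi = \id$ when $d\geq 2$; in $d=1$, order both tuples increasingly on $\Z$ and let $\pi$ be the unique rank-preserving matching of $\r_k$ to $\r'_{\pi_k}$. In $d=1$ this is forced because under any valid unit-step motion keeping all positions distinct, two particles on the line cannot swap without colliding, so relative order is preserved along any such path. By the rearrangement inequality, the rank-preserving $\pi$ also minimizes $\sum_k \|\r_k - \r'_{\pi_k}\|_1$ over $\pi \in S_n$, so this sum is bounded by $\|(\r_1,\dots,\r_n) - (\r'_1,\dots,\r'_n)\|_1 \leq 2\rho_{\max}(n,d)+1$.

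The main construction processes particles sequentially: at phase $k$, move particle $k$ one coordinate at a time from its current location to $\r'_{\pi_k}$, leaving the other particles fixed. In $d=1$, since the rearranged particles and targets share the same left-to-right order, sweeping them in an appropriate order (rightmost-first for rightward movers, leftmost-first for leftward movers, in monotone blocks in the general case) produces no collision, and the total length is exactly $\sum_k \|\r_k - \r'_{\pi_k}\|_1$, already bounded. In $d\geq 2$, whenever the shortest path for particle $k$ would step onto a currently occupied vertex, I would insert a local detour in a free coordinate direction. The key fact is that $\Z^d$ with $d\geq 2$ minus any finite set of obstacles remains connected with distances increased by only a bounded amount, and a standard routing argument gives a detour overhead per particle bounded by a constant $c(d,n)$ depending only on $d$ and the obstacle count $n-1$.

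Setting $c_3(d,n) := 2\rho_{\max}(n,d)+1+n\cdot c(d,n)$ then suffices to bound $\tau$. The main obstacle is ensuring the detour steps remain inside $\Lambda(d,N)$: every intermediate configuration lies within $\ell^\infty$-distance $c_3(d,n)$ of the starting one, and $\Lambda(d,N)\supseteq \B^d(L(d,N))$ with $L(d,N)\to\infty$, so for $N$ large enough there is always at least one coordinate direction in which the moving particle can step aside and remain in $\Lambda(d,N)$. For the finitely many small values of $N$, either the hypothesis $|\Lambda(d,N)^n\setminus\mathscr{I}_n(\Lambda(d,N))|\geq 2$ fails (making the lemma vacuous) or one enlarges $c_3(d,n)$ to absorb the finite worst case; no uniformity in $N$ is lost.
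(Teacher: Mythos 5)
There is a genuine gap in the $d\geq 2$ routing step. Your scheme moves one particle at a time toward its target, holding the others fixed and inserting ``local detours'' around occupied vertices, justified by the claim that $\Z^d$ minus $n-1$ obstacles stays connected with bounded detour. But the ambient graph is not $\Z^d$: it is $\Lambda(d,N)$, a box plus a ragged partial layer $S(d,N)$, and near its boundary the connectivity claim fails. Concretely, in $d=2$ a particle at the corner $(1,1)$ with other particles at $(2,1)$ and $(1,2)$ has no legal move at all, so the ``move particle $k$ while the rest are frozen'' strategy deadlocks on a perfectly admissible input configuration; one must move the blockers first, and your proposal gives no mechanism or ordering for doing so. Relatedly, the assertion that every intermediate configuration is within $\ell^\infty$-distance $c_3(d,n)$ of the start ``so there is always a free coordinate direction to step aside'' does not follow: the starting configuration may itself sit on the boundary of $\Lambda(d,N)$ (including inside $S(d,N)$, where even a single sideways step can leave the domain), and closeness to the start says nothing about distance to that boundary.

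For contrast, the paper avoids both problems by a different architecture: instead of routing $(\r_1,\dots,\r_n)$ directly to $(\r'_1,\dots,\r'_n)$, it drives \emph{both} configurations to a common normal form (first an initialization that pushes coordinates off the irregular outer layer and then ``spreads apart'' the particles so that, for each coordinate index $j$, all $n$ particles have distinct $j$th coordinates), and the moves are globally scheduled by a cluster decomposition with a left-mover/right-mover ordering that provably never produces a collision --- no local detours and no reliance on connectivity of the punctured domain. Your $d=1$ analysis (order preservation forces the rank-matching permutation, monotone sweeps avoid collisions, total length bounded by the rearrangement inequality) is sound and essentially matches the paper's one-dimensional mechanism; to repair the higher-dimensional case you would need either a global collision-free schedule of the paper's type or a genuine proof that the one-at-a-time-with-detours procedure never deadlocks in $\Lambda(d,N)$, including at corners and in the partial layer.
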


These lemmas are not difficult. But they require detailed descriptions.
We wish to state their implication, first, which is the proof of  Proposition \ref{prop:extension0}.

\begin{proofof}{\bf Proof of Proposition \ref{prop:extension0}:}
Suppose that $(\r_1,\dots,\r_n)$ and $(\r_1',\dots,\r_n')$ are two points in $\B^d(L)^n$ 
%$\Lambda(d,N)^n \setminus \mathscr{I}_n(\Lambda(d,N))$ 
satisfying
$$
\|(\r_1,\dots,\r_n) - (\r_1',\dots,\r_n')\|_1\, \leq\, 1\, .
$$
Then by Lemma \ref{lem:MAXrho} and the triangle inequality, we see that
$$
\|(\widetilde{\r}_1(\r_1,\dots,\r_n),\dots,\widetilde{\r}_n(\r_1,\dots,\r_n))
- (\widetilde{\r}_1(\r_1',\dots,\r_n'),\dots,\widetilde{\r}_n(\r_1',\dots,\r_n'))\|_1\, \leq 2\rho_{\max}(n,d)+1\, .
$$
Then, by Lemma \ref{lem:LITTLEc3}, there is a chain
$(\hat{\r}_1(t),\dots,\hat{\r}_n(t)) \in \Lambda(d,N)^n \setminus \mathscr{I}_n(\Lambda(d,N))$,
for $t \in \{0,\dots,\tau\}$
linking $(\widetilde{\r}_1(\r_1,\dots,\r_n),\dots,\widetilde{\r}_n(\r_1,\dots,\r_n))$ and 
$(\widetilde{\r}_{\pi_1}(\r_1',\dots,\r_n'),\dots,\widetilde{\r}_{\pi_n}(\r_1',\dots,\r_n'))$, for some $\pi \in S_n$.
The fact that there is a permutation, does not concern us because $F$ is a symmetric function.
In other words, we then have, by (\ref{eq:diffFirst}) a telescoping sum
$$
\widetilde{F}(\r_1,\dots,\r_n)
-\widetilde{F}(\r_1',\dots,\r_n')\,
\leq\,
\sum_{t=1}^{\tau} F((\hat{\r}_1(t),\dots,\hat{\r}_n(t)))
- F((\hat{\r}_1(t-1),\dots,\hat{\r}_n(t-1)))\, .
$$
Then, by the Cauchy-Schwarz inequality, we obtain
$$
|\widetilde{F}(\r_1,\dots,\r_n)
-\widetilde{F}(\r_1',\dots,\r_n')|^2\,
\leq\,
\tau \sum_{t=1}^{\tau} |F((\hat{\r}_1(t),\dots,\hat{\r}_n(t)))
- F((\hat{\r}_1(t-1),\dots,\hat{\r}_n(t-1)))|^2\, .
$$
But by Lemma \ref{lem:LITTLEc3}, we know that $\tau\leq c_3(d,n)$.
So we obtain
\begin{equation}
\label{eq:chainDIFF}
|\widetilde{F}(\r_1,\dots,\r_n)
-\widetilde{F}(\r_1',\dots,\r_n')|^2\,
\leq\,
c_3(d,n) \sum_{t=1}^{\tau} |F((\hat{\r}_1(t),\dots,\hat{\r}_n(t)))
- F((\hat{\r}_1(t-1),\dots,\hat{\r}_n(t-1)))|^2\, .
\end{equation}
Now let us begin to enumerate how many times a given edge will be chosen in some term such as the right hand side above, when we apply this bound
to every summand of the formula
\begin{equation}
\label{eq:ENTOT}
\langle \widetilde{F}, \mathfrak{H}^{(n)}_{\B^d(L)} \widetilde{F}\rangle\,
=\, \frac{1}{2}\, \sum_{
\{(\r_1,\dots,\r_n),
(\r_1',\dots,\r_n')\} \in \Phi_n(\B^d(L))}
|\widetilde{F}(\r_1,\dots,\r_n)
-\widetilde{F}(\r_1',\dots,\r_n')|^2\, ,
\end{equation}
We do not seek a reasonable bound, just a finite one.
Therefore, given any pair 
$$
(\r_1^{(1)},\dots,\r_n^{(1)}),
(\r_1^{(2)},\dots,\r_n^{(2)}) \in \Lambda(d,N)^n \setminus \mathscr{I}_n(\Lambda(d,N))\, ,
$$
of distance 1 apart, we see that if it is in a chain then there are at most 
$$
\mathfrak{N}(d,n)\, \stackrel{\mathrm{def}}{:=}\, \sum_{\tau=2}^{c_3(d,n)}(\tau-1)(nd)^{\tau-1}\, ,
$$ 
choices for how
the chain continues to the left and the right (since at each step, the chain has at most $(nd)$ choices for how to take its next step,
and the length of the left and right sides of the chain emanating from our initial edge must be $(k,\tau-1-k)$ for some $k \in \{0,\dots,\tau-1\}$).
Given $((\hat{\r}_1(0),\dots,\hat{\r}_n(0)))$
and $((\hat{\r}_1(\tau),\dots,\hat{\r}_n(\tau)))$, there is still a choice of the two points $(\r_1,\dots,\r_n),(\r_1',\dots,\r_n') \in \B^d(L)^n$
such that
$$
(\widetilde{\r}_1(\r_1,\dots,\r_n),\dots,\widetilde{\r}_n(\r_1,\dots,\r_n))\, =\, ((\hat{\r}_1(0),\dots,\hat{\r}_n(0)))
$$
and
$$
(\widetilde{\r}_1(\r_1',\dots,\r_n'),\dots,\widetilde{\r}_n(\r_1',\dots,\r_n'))\, =\, ((\hat{\r}_1(\tau),\dots,\hat{\r}_n(\tau)))\, .
$$
But, no matter what value of $\tau$ we have, we may just use Lemma \ref{lem:MAXrho}
to say that $\|(\r_1,\dots,\r_n)-((\hat{\r}_1(0),\dots,\hat{\r}_n(0)))\|_1$ 
and $\|(\r_1',\dots,\r_n') - ((\hat{\r}_1(\tau),\dots,\hat{\r}_n(\tau)))\|_1$
are bounded by $\rho_{\max}(n,d)$.
Since  $((\hat{\r}_1(0),\dots,\hat{\r}_n(0)))$
and $((\hat{\r}_1(\tau),\dots,\hat{\r}_n(\tau)))$ have been already chosen,
this implies that there are at most $(2\rho_{\max}(d,n)+1)^{nd}$ choices for each of the points.
But there is also a possibility of a permutation $\pi \in S_n$.
So, we have determined that the pair of points
$
(\r_1^{(1)},\dots,\r_n^{(1)}),
(\r_1^{(2)},\dots,\r_n^{(2)}) \in \Lambda(d,N)^n \setminus \mathscr{I}_n(\Lambda(d,N))$
can come from at most 
$$
\mathfrak{N}'(d,n)\, 
\stackrel{\mathrm{def}}{:=}\, (2\rho_{\max}(d,n)+1)^{2nd} (n!) \mathfrak{N}(d,n)
$$
terms in the right hand side of (\ref{eq:chainDIFF}) when we use these to bound all the summands in (\ref{eq:ENTOT}).
Thus, combining all this, we obtain
$$
\langle \widetilde{F}, \mathfrak{H}^{(n)}_{\B^d(L)} \widetilde{F}\rangle\,
\leq\, \frac{1}{2}\, c_3(d,n) \mathfrak{N}'(d,n) \langle F, \widetilde{H}^{(n)}_{\Lambda(d,N)} F\rangle\, .
$$
\end{proofof}

One clear fact is that our bounds are quite large. This will also be the case in the proof of Lemma \ref{lem:MAXrho}
and Lemma \ref{lem:LITTLEc3}.

\begin{proofof}{\bf Proof of Lemma \ref{lem:MAXrho}:}
Let us write $\bR$ for $(\r_1,\dots,\r_n)$, in order to simplify notation.
Given $\bR = (\r_1,\dots,\r_n) \in \B^d(L)^n$ we will
find a $\tau$ and a sequence $\check{\bR}(t) = (\check{\r}_1(t),\dots,\check{\r}_n(t)) \in \B^d(L)^n$ for $t \in \{0,\dots,\tau\}$,
which is a chain (so that $\|\check{\bR}(t)-\check{\bR}(t-1)\|_1=1$ for each $t \in \{1,\dots,\tau\}$)
such that $\check{\bR}(0) = \bR$ and $\check{\bR}(\tau) \in \Lambda(d,N)$.
Then if we find a uniform bound on $\tau$, that will give $\rho_{\max}(d,n)$.

Given $\bR$, we will call this a point-vector. We will call each $\r_1,\dots,\r_k$ the points. Sometimes we just refer to $k$ as point index.
We refer to $j \in \{1,\dots,d\}$ as a coordinate index, since those index the coordinates $\r_k = (r_{k,1},\dots,r_{k,d})$ for some point $\r_k$.

\smallskip
\noindent
\underline{Step 1:}
First let $\tau_1=\sum_{k=1}^{n} \sum_{j=1}^{d} \mathbf{1}_{\{L\}}(r_{k,j})$.
We want to push all coordinates of all points back from $L$, so that at the end no point of $\check{\bR}(\tau_1)$ is in $\B^d(L)$.
I.e., we enumerate those pairs $(k,j)$ such that $r_{k,j}=L$.
At each $t \in \{1,\dots,\tau_1\}$, we choose the $t$th pair $(k,j)$ and we let $r_{k,j}(t)=L-1$, whereas, we had $r_{k,j}(t-1)=L$.

Note that we can only move 1 coordinate of 1 point at each step. So it suffices to say which one of these changes and how it changes (by going up or down by 1 step).

So, in the end, $\check{\bR}(\tau_1)$ has $\check{r}_{k,j}(\tau_1) = \min\{L-1,r_{k,j}\}$ for each $(k,j)$.
Note, $\tau_1\leq nd$.

\smallskip
\noindent
\underline{Step 2:}
This is the last step for this proof. It is harder. It involves a procedure we call ``clustering'' and ``spreading apart.''

Let us define $\approx$ to be a symmetric relation on $\{1,\dots,n\}$ (although not transitive)
wherein $k_1 \approx_n k_2$ if $\|\check{\r}_{k_1}(\tau)-\check{\r}_{k_2}\|_1<2n-1$.
The reason for the choice of the distance, $2n-1$, will become more apparent, later.
Then we define an equivalence relation $\sim$ wherein $k_1 \sim k_2$ if and only if 
there is some $m$ and some sequence $\widehat{k}_0,\dots,\widehat{k}_m$ such that 
$\widehat{k}_0=k_1$, $\widehat{k}_m=k_2$ and $\widehat{k}_{p} \approx_n \widehat{k}_{p-1}$
for each $p \in \{1,\dots,m\}$.
We define the equivalence classes of $\sim$ to be the ``clusters.''

Now we will only focus on the first coordinate $j=1$, in this proof.
(But in the proof of Lemma \ref{lem:LITTLEc3}, we will need an algorithm similar to this
applied to all the coordinates.)
Our goal is to find $\tau_2$ and to extend $\check{\bR}(\tau_1+t)$ for $t \in \{1,\dots,\tau_2\}$
so that $\check{r}_{1,1}(\tau_1+\tau_2),\check{r}_{2,1}(\tau_1+\tau_2),\dots,\check{r}_{n,1}(\tau_1+\tau_2)$ are all distinct.
I.e., we want the points of $\check{\bR}(\tau_1+\tau_2)$ to have distinct $1$st coordinates.

Suppose $\mathcal{K}_1,\dots,\mathcal{K}_m$ are the clusters of $\{1,\dots,n\}$.
For each $p \in \{1,\dots,m\}$, we enumerate the cluster $\mathcal{K}_p$ as $\{k_{p,1},\dots,k_{p,|\mathcal{K}_p|}\}$ for some $r(p)$, where we choose the ordering such that 
$$
\check{r}_{k_{p,r},1}(\tau_1)\, \leq\,\check{r}_{k_{p,s},1}(\tau_1)\, ,
$$
for each pair $r,s$ with $1\leq r<s\leq |\mathcal{K}_p|$. 
(In case the 1st coordinate of some point indices in $\mathcal{K}_p$ coincide, 
we may choose the ordering for those coinciding point indices in any arbitrary way, such as the usual order.)

Now, if $\check{r}_{k_{p,1},1}(\tau_1) \in \{\lfloor L/3 \rfloor,\dots,L-1\}$, then we say that $p$ enumerates a high cluster,
or $\mathcal{K}_p$ is a high cluster.
If this fails to happen, but $\check{r}_{k_{p,|\mathcal{K}_p|},1}(\tau_1) \in \{1,\dots,L-\lfloor L/3\rfloor\}$,
then we say $p$ enumerates a low cluster.

Now, also, note that since $k_{p,1} \sim k_{p,|\mathcal{K}_p|}$, this means that 
$$
k_{p,|\mathcal{K}_p|} - k_{p,1}\, \leq\, (|\mathcal{K}_p|-1)(n-1)\, \leq\, (n-1)^2\, .
$$
Therefore, if $\mathcal{K}_p$ is not a high cluster, then $k_{p,1}<\lfloor L/3 \rfloor$,
and that means $k_{p,|\mathcal{K}_p|} \leq \lfloor L/3 \rfloor + (n-1)^2 - 1$.
So, in this case $\mathcal{K}_p$ is a low cluster, as long as we have
$$
L-\lfloor L/3 \rfloor\, \geq\, \lfloor L/3 \rfloor + (n-1)^2 - 1\qquad \Leftrightarrow\qquad 
L-2\lfloor L/3\rfloor\, \geq\, (n-1)^2-1=n(n-2)
$$
In turn, this is insured if $L\geq 3n(n-2)$.
So, henceforth, we assume $L$ satisfies this lower bound.
We also always assume $L\geq 3n$ (in case $n-2\leq 1$).

\underline{High cluster case:} Recursively, for each $p \in \{1,\dots,m\}$, we define $\tau_{2,p}$ and $\check{\bR}(\tau_1+\tau_{2,1}+\dots+\tau_{2,p-1}+t)$
for $t \in \{1,\dots,\tau_{2,p}\}$ in order to do the following.
If $\mathcal{K}_p$ is a high cluster, then we take
$$
\tau_{2,p} = \sum_{r=1}^{|\mathcal{K}_p|} \check{r}_{k_{p,r},1}(\tau_1+\tau_{2,1}+\dots+\tau_{2,p-1}) - \sum_{r=1}^{|\mathcal{K}_p|} 
\left[\check{r}_{k_{p,r},1}(\tau_1+\tau_{2,1}+\dots+\tau_{2,p-1})+r-1\right]\, .
$$
This is the minimum number of steps needed to send the coordinates $\check{r}_{k_{p,r},1}(\tau_1+\tau_{2,1}+\dots+\tau_{2,p-1})$ to 
$\left[\check{r}_{k_{p,r},1}(\tau_1+\tau_{2,1}+\dots+\tau_{2,p-1})+r-1\right]$.
That is how we prescribe $\check{\bR}(\tau_1+\tau_{2,1}+\dots+\tau_{2,p-1}+t)$
for $t \in \{1,\dots,\tau_{2,p}\}$.

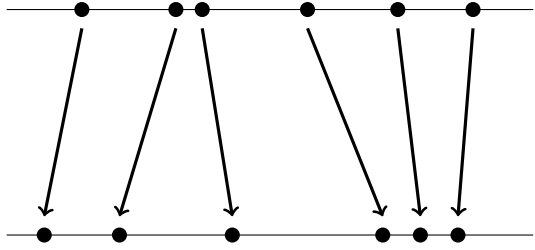
\begin{figure}
\begin{center}
\begin{tikzpicture}
\draw (0,0) -- (7,0);
\fill (1,0) circle (1mm);
\fill (2.25,0) circle (1mm);
\fill (2.6,0) circle (1mm);
\fill (4,0) circle (1mm);
\fill (5.2,0) circle (1mm);
\fill (6.2,0) circle (1mm);
\fill (0.5,-3) circle (1mm);
\fill (1.5,-3) circle (1mm);
\fill (3,-3) circle (1mm);
\fill (5,-3) circle (1mm);
\fill (5.5,-3) circle (1mm);
\fill (6,-3) circle (1mm);
\draw[very thick,->] (1,-0.25) -- (0.5,-2.75);
\draw[very thick,->] (2.25,-0.25) -- (1.5,-2.75);
\draw[very thick,->] (2.6,-0.25) -- (3,-2.75);
\draw[very thick,->] (4,-0.25) -- (5,-2.75);
\draw[very thick,->] (5.2,-0.25) -- (5.5,-2.75);
\draw[very thick,->] (6.2,-0.25) -- (6,-2.75);
\draw (0,-3) -- (7,-3);
\end{tikzpicture}
\end{center}
\caption{ Left-movers are $1$st, $2$nd, $6$th particles, and 
right-movers are the $3$rd, $4$th and $5$th.
\label{fig:leftrightmovers}}
\end{figure}
Actually, now we will be more precise about a particular order for the updates (because this will be useful in the proof of 
Lemma \ref{lem:LITTLEc3}).
Let us say that $k_{p,r}$ is a ``left-mover'' if
$$
\check{r}_{k_{p,r},1}(\tau_1+\tau_{2,1}+\dots+\tau_{2,p-1}) > \left[\check{r}_{k_{p,r},1}(\tau_1+\tau_{2,1}+\dots+\tau_{2,p-1})+r-1\right]\ ,
$$
and say it is a ``right-mover'' if the strict inequality is reversed.
We say it is stationary if there is equality instead of a strict inequality.
See Figure \ref{fig:leftrightmovers}.

We move the left-movers first, and then the right-movers, second.
Moreover, we move the left-movers, starting the the left-most one, and proceeding to the right.
Then since the final point coordinates are ordered (in the same order), we can see that no left mover collides with a right-mover which is staying still, since any right-mover to the left of the left-mover
initially, also ends up to the left of the left-mover finally, and it starts even more to the left than that (while it is staying still).
Also, no left mover collides with any right mover to its right, because the left-mover is moving right.
By the time the left-mover moves, all the left-movers to its left have moved to their final positions, which is to the left of its final positions.
So it does not collide with any of those, and it also does not collide with any of the left-movers to its right, because they stay stationary during its moves,
and they are to its right (like the right-movers to the right).
A similar argument applies to the right-movers when it is their turn to move.
(One could do reflection and time-reversal to flip the roles of left- and right-movers, to see this, if one so desired.)

This focus on non-collision will be more important during the proof of Lemma \ref{lem:LITTLEc3}.

\underline{Low cluster case:} If $\mathcal{K}_p$ is a low cluster, then it will be clear by symmetry what we do, based on the high cluster case.
We take
$$
\tau_{2,p} = -\sum_{r=1}^{|\mathcal{K}_p|} \check{r}_{k_{p,|\mathcal{K}_p|+1-r},1}(\tau_1+\tau_{2,1}+\dots+\tau_{2,p-1}) + \sum_{r=1}^{|\mathcal{K}_p|} 
\left[\check{r}_{k_{p,|\mathcal{K}_p|},1}(\tau_1+\tau_{2,1}+\dots+\tau_{2,p-1})+1-r\right]\, .
$$
This is the minimum number of steps needed to send the coordinates
$\check{r}_{k_{p,|\mathcal{K}_p|+1-r},1}(\tau_1+\tau_{2,1}+\dots+\tau_{2,p-1})$
to  
$\left[\check{r}_{k_{p,|\mathcal{K}_p|},1}(\tau_1+\tau_{2,1}+\dots+\tau_{2,p-1})+1-r\right]$
for each $r \in \{1,\dots,|\mathcal{K}_p|\}$.

We do it in the same way as before, non-colliding. We will not repeat the details, here.

\underline{Conclusion:}
We know that the 1st coordinate of all the points in a given cluster are now spread apart, so they are not intersecting.
We still need to check that points from distinct clusters did not move their first coordinates so as to now intersect.
But in the high case, we moved the points at most $|\mathcal{K}_p|-1$ points to the right (for example if they all began intersecting at the coordinate of the left-most point).
We did not move them to the left of the left-most point.
So they are within $n-1$ of the original left-most-point.
Similarly, in the low case, they are within $n-1$ of the right-most-point.

But, initially, points from distinct clusters have their first coordinates separated by at least $2n-1$.
So after the final time, we do not have any points from distinct clusters with 1st coordinate smaller than 1 apart.
Thus we have achieved the desired goal of moving all the 1st coordinates apart at time $\tau_1 + \tau_2$,
where $\tau_2 = \tau_{2,1} + \dots +\tau_{2,m}$.

Now we want to bound $\tau_2$.
Note that in each cluster, initially the diameter is at most $(n-1)(2n-2)$.
So some thought shows that each point moves at most $\max\{n-1,(n-1)(2n-2)\}$.
Since there are $n$ points to move, we can bound $\tau_2$ by $n\max\{n-1,(n-1)(2n-2)\}$.
(We assume $n>1$, otherwise there was no need for Step 2.)
Since we already had a bound on $\tau_1$, we have a bound on $\tau=\tau_1+\tau_2$.
We call this bound $\rho_{\max}(n,d)$. It is $nd + n\max\{n-1,(n-1)(2n-2)\}$.
\end{proofof}

\begin{proofof}{\bf Proof of Lemma \ref{lem:LITTLEc3}:}
We will construct a path as follows.
We will construct 2 paths, $\hat{\bR}(t)$ and $\hat{\bR}'(t)$ starting from $\bR$ and $\bR'$,
respectively, ending in the same point. Then we may join them there, to get a chain that goes from $\bR$ to $\bR'$.

To begin with, 
we are going to perform the same algorithm for the two point-vectors $\bR$ and $\bR'$, independently of each other.
I.e., initially, the algorithm for $\bR$ will be done independently of what $\bR'$ happens to be, and vice-versa.
We call this initialization.

\underline{Initialization:}
The first part of the initialization step involves fixing for the fact that $\Lambda(d,N)$ may not be a perfect box,
since this will obstruct some later steps.
There are two possiblities due to the
construction of $\Lambda(d,N)$.
Either no point of $\Lambda(d,N)$ has the first coordinate equal to $L$.
Or else, for every point, if we replace its $d$th coordinate by $L$, then that new point is in $\B^d(L)$.
That is because of the lexicographic order we used.

In case 1, we perform the STEP2 algorithm from the proof of Lemma \ref{lem:LITTLEc3}, first on coordinate 1.
Note that on the first coordinate, every point already has that coordinate in $\{1,\dots,L-1\}$, as it also was in 
the proof of Lemma \ref{lem:LITTLEc3}.
After this, we have coordinate 1 for all points distinct.

Then we do the STEP1 algorithm for all coordinates $2$ to $d$. This moves all the coordinate away from $L$.
We do not cause any collisions because all points had the first coordinate distinct.
Then we do STEP2 from the proof of Lemma \ref{lem:LITTLEc3} for each coordinate $2$ to $d$, in turn.

In case 2, we perform the STEP2 algorithm from the proof of Lemma \ref{lem:LITTLEc3} on coordinate $d$, modified to replace $L$ by $L+1$
(because now the $d$th coordinate may be in $\{1,\dots,L\}$ instead of $\{1,\dots,L-1\}$.
Then we perform the STEP1 algorithm for each of the coordinates $1,\dots,d-1$.
This does not create any collisions because all points have distinct $d$th coordinate.
Then we perform STEP2 algorithm for each of the coordinates $1,\dots,d-1$, in turn.
Then we perform the STEP1 algorithm for the $d$th coordinate, in order to reduce its range to $\{1,\dots,L-1\}$.
This does not create any intersections because for each $j \in \{1,\dots,d-1\}$ all points have distinct $j$th coordinate.
Then we perform STEP2 again for coordinate $d$, to get all of the points to have distinct $d$th coordinate.

The initialization step is over.
Let us say that our new points are $\bR^{(2)}$ and $\bR^{(3)}$, in place of $\bR$ and $\bR'$, respectively.

\underline{Lining up coordinate sets:}
We point out that there is some bound $\mathrm{bd}(d,n)$ for the number of steps we have made, which depends only on $d$ and $n$.
For example, we may see that we could take $2d\mathfrak{N}''$.
This means that we have moved each coordinate of each point at most $\mathrm{bd}(d,n)$.
But initially, each coordinates of each point of $\bR$ is within $2\rho_{\max}(d,n)+1$ from the corresponding coordinate and point of $\bR'$.
So, now, what is true is that they are within $2\rho_{\max}(d,n)+1+2\mathrm{bd}(d,n)$.
We call this $\mathrm{bd}^{(1)}(d,n)$.

We now do the following.
For each coordinate index $j$, we order the set of coordinates of the points of $\bR^{(2)}$
and the set of coordinate of the points of $\bR^{(3)}$.
Then we move the former set to the latter set using the left-mover/right-mover algorithm to avoid collisions.
Note that each point moved at most $\mathrm{bf}^{(1)}(d,n)$ because those were the distances between the points initially, and ordering only makes distances decrease
(of the corresponding points, in the order).
This gives us a new $\bR^{(4)}$. For each coordinate index $j$, the set of $j$th coordinates of the points of $\bR^{(4)}$ is the same set as the $j$th coordinates
of the points of $\bR^{(3)}$.
But there may be a permutation, induced.

We note that we have now done at most $nd\mathrm{bd}^{(1)}(d,n)$ more steps.
Therefore,
\begin{equation}
\|\bR^{(4)} - \bR^{(3)}\|\, \leq\, (nd+1) \mathrm{bd}^{(2)}(d,n)\, \stackrel{\mathrm{def}}{=:}\, \mathrm{bd}^{(3)}(d,n)\, .
\end{equation}

\underline{Conclusion for $d=1$:}
If $d=1$, then we cannot necessarily get rid of the permutation.
In this case, we terminate. We permute all the point indices of the second point and path by the inverse of the permutation.
Then this will result in $\pi \bR'$ going to $\pi \bR^{(4)}$ and we choose $\pi$ so that $\pi \bR^{(4)} = \bR^{(3)}$.
Then doing the steps to go from $\bR$ to $\bR^{(3)}$ and then undoing the steps of the other chain goes from $\pi \bR^{(4)}$ to $\pi \bR'$.
This is what was claimed.

\underline{Conclusion for $d>1$:}

Coordinate, by coordinate fix the permutation.
If we are fixing the permutation in coordinate $j$, then this will cause collisions in coordinate $j$.
But since all the other coordinates have all distinct points, this does not cause us to enter $\mathscr{I}_n(\Lambda(d,N))$.
Draw the permutation and perform nearest neighbor transformations one at a time as they appear on the diagram, stretching the vertical direction
if necessary to remove degeneracies.
This will diminish distance at each stage because we are permuting to the correct order.
After a permutation is done in coordinate $j$, we end again with all distinct points.
Then move on to the next coordinate.
\end{proofof}

\bigskip
\appendix
\begin{center}
      {\LARGE  Appendices}
\end{center}

\section{Proof of advanced calculus fact for ``new lows''}
\label{app:Silly}

\begin{proofof}{\bf Proof of Lemma \ref{lem:realAn}:}
For any $\epsilon \in (0,1)$, we know that there exists $n_0$ such that for $n\geq n_0$, we have
$$
(1-\epsilon)Cn^{-p}\, \leq\, t_n\, \leq\, (1+\epsilon)C n^{-p}\, .
$$
We let $n_1$ be defined as 
$$
n_1\, =\, \min\{ n \in \{n_0,n_0+1,\dots\}\, :\,  (1-\epsilon)C n^{-p} < \min\{t_a,\dots,t_{n_0}\} \}\, .
$$
Note that $\min\{t_a,\dots,t_{n_0}\}$ is strictly positive, and $(1-\epsilon) C n^{-p}$ converges to $0$.
So there is an $n$ in $\{n_0,n_0+1,\dots\}$ such that 
$(1-\epsilon)C n^{-p} <\min\{t_a,\dots,t_{n_0}\}$.

Then we know that for all $n\geq n_1$ we have
$$
\min\{t_a,\dots,t_n\}\, \geq\, (1-\epsilon) C n^{-p}\, .
$$
On the other hand we know that for any $m \in \{n_0,n_0+1,\dots\}$,
$$
t_m\, \leq\, (1+\epsilon)C m^{-p}\, . 
$$
So defining $A(\epsilon;n)= \min\{k \in \Z\, :\, k \geq  [(1+\epsilon)/(1-\epsilon)]^{1/p}n \}$, which is an integer bigger than 1,
we know that for $n \in \{n_1,n_1+1,\dots\}$ and $m \in \{A(\epsilon;n),A(\epsilon;n)+1,\dots\}$,
$$
t_m\, \leq\, \min\{t_a,\dots,t_n\}\, .
$$
Now let $k^* = \min\{k \in \{a,a+1,\dots\}\, :\, t_k \leq t_{A(\epsilon;n)}\}$.
Note that with this choice we have $k^*\leq A(\epsilon;n)$. But by the last displayed equation for $m=A(\epsilon;n)$ we also have $k^*\geq n$.
Moreover, for any $k<k^*$ we have $t_k>t_{A(\epsilon;n)}\geq t_{k^*}$. So $t_{k^*} = \min\{t_a,\dots,t_{k^*}\}$.
So $k^*$ is a ``new low'' in the interval $\{n,\dots,A(\epsilon;n)\}$.

So $\mathcal{N}_{\mathrm{low}}(n) \leq A(\epsilon; n)$ since this is the smallest new low in $\{n,n+1,\dots\}$.

By hypothesis, we know that $t_{\mathcal{N}_{\mathrm{low}}(n)} \leq t_n \leq (1+\epsilon) C n^{-p}$.
But now we also know 
$$
t_{\mathcal{N}_{\mathrm{low}}(n)}\, \geq\, C (1-\epsilon) [\mathcal{N}_{\mathrm{low}}(n)]^{-p}\,
\geq\, C (1-\epsilon) [A(\epsilon;n)]^{-p}\, .
$$
Hence we see that
$$
(1-\epsilon)^2 (1+\epsilon)^{-1}\, =\, 
(1-\epsilon) \liminf_{n \to \infty} [A(\epsilon;n)/n]^{-p}\, 
\leq\, \liminf_{n \to \infty} \frac{t_{\mathcal{N}_{\mathrm{low}}(n)}}{C n^{-p}}\,
\leq\, \limsup_{n \to \infty} \frac{t_{\mathcal{N}_{\mathrm{low}}(n)}}{C n^{-p}}\,
\leq\, 1+\epsilon\, .
$$
Since $\epsilon \in (0,1)$ is arbitrary, and the bracketing quantities both converge to $1$ as $\epsilon \to 0$,
this proves the claim.
\end{proofof}

%%%%%%%%%%%%%%%%%%%%%%%%%%%%%%%%%%%%%%%%%%%%%%%%%%%%%%%
%%%%%%%%%%%%%%%%%%%%%%%%%%%%%%%%%%%%%%%%%%%%%%%%%%%%%%%
%%%%%%%%%%%%%%%%%%%%%%%%%%%%%%%%%%%%%%%%%%%%%%%%%%%%%%%
%%%%%%%%%%%%%%%%%%%%%%%%%%%%%%%%%%%%%%%%%%%%%%%%%%%%%%%
%%%%%%%%%%%%%%%%%%%%%%%%%%%%%%%%%%%%%%%%%%%%%%%%%%%%%%%
\section{Proof of summary of spectrum of $\mathfrak{H}^{(n)}_{\B^d(L)}$}
\label{app:specSummary}

The full spectrum of $\mathfrak{H}^{(n)}_{\B^d(L)}$ may be calculated as products of functions $\prod_{k=1}^{n}\prod_{j=1}^{d} f(L^{-1} \kappa_{k,j},r_{k,j})$,
unsymmetrized.
This may be seen in $d=1$. But then it follows from separation of variables for higher $d$.
Then symmetrization is what is necessary to restrict to the range of $\mathfrak{S}^{(n)}_{\B^d(L)}$.

%%%%%%%%%%%%%%%%%%%%%%%%%%%%%%%%%%%%%%%%%%%%%%%%%%%%%%%
%%%%%%%%%%%%%%%%%%%%%%%%%%%%%%%%%%%%%%%%%%%%%%%%%%%%%%%
%%%%%%%%%%%%%%%%%%%%%%%%%%%%%%%%%%%%%%%%%%%%%%%%%%%%%%%
%%%%%%%%%%%%%%%%%%%%%%%%%%%%%%%%%%%%%%%%%%%%%%%%%%%%%%%
%%%%%%%%%%%%%%%%%%%%%%%%%%%%%%%%%%%%%%%%%%%%%%%%%%%%%%%
\section{Proofs of easy facts about the graph Laplacians}
\label{app:Inter}

We leave these lemmas as an exercise. They can be done using the fact that $H_{\mathscr{G}} \restriction \Hil_{\mathscr{V}}^{(n)}$
is unitarily equivalent to $\widetilde{\mathfrak{H}}^{(n)}_{\mathscr{G}}$ restricted to the range of $\mathfrak{S}^{(n)}_{\mathscr{V}}$,
and properties of graph Laplacians: such as monotonicity relative to the psd cone under the action of adding edges.

%%%%%%%%%%%%%%%%%%%%%%%%%%%%%%%%%%%%%%%%%%%%%%%%%%%%%%%
%%%%%%%%%%%%%%%%%%%%%%%%%%%%%%%%%%%%%%%%%%%%%%%%%%%%%%%
%%%%%%%%%%%%%%%%%%%%%%%%%%%%%%%%%%%%%%%%%%%%%%%%%%%%%%%
%%%%%%%%%%%%%%%%%%%%%%%%%%%%%%%%%%%%%%%%%%%%%%%%%%%%%%%
%%%%%%%%%%%%%%%%%%%%%%%%%%%%%%%%%%%%%%%%%%%%%%%%%%%%%%%

\section{Proof of Proposition \ref{prop:Variational1}}
\label{app:AsymptoticEnergy}

Based on the results of Appendix \ref{app:Inter}, what we must prove is the following.
Suppose that we have $(\k_1,\dots,\k_n) \in (\{0,1,\dots\}^d)^n$.
For some $L_0$, we have $(\k_1,\dots,\k_n) \in (\{0,\dots,L_0-1\}^d)^n$.
Then we consider only values of $L$ with $L\geq L_0$.
We define $\nu_{(\k_1,\dots,\k_n)} \in \mathscr{O}_n(d,L)$.
Then, considering $\widetilde{F}^{(n)}_{d,L}(\nu_{(\k_1,\dots,\k_n)};\cdot)$,
we know by Lemma \ref{lem:specSummary} that
$$
\mathfrak{H}^{(n)}_{\B^d(L)} \widetilde{F}^{(n)}_{d,L}(\nu_{(\k_1,\dots,\k_n)};\cdot)\, 
=\, \lambda^{(n)}_{d,L}(\k_1,\dots,\k_n)  \widetilde{F}^{(n)}_{d,L}(\nu_{(\k_1,\dots,\k_n)};\cdot)\, ,
$$
where
\begin{equation}
\label{eq:lambdaDEF}
\lambda^{(n)}_{d,L}(\k_1,\dots,\k_n)\, =\, \sum_{k=1}^{n} \sum_{j=1}^{d} 2 \sin^2\left(\frac{\pi \kappa_{k,j}}{2L}\right)\, .
\end{equation}
Therefore, for any $(\r_1,\dots,\r_n) \in \Lambda(d,N)^n \setminus \mathscr{I}_n(\Lambda(d,N))$,
we have
\begin{multline*}
\left(\widetilde{\mathfrak{H}}^{(n)}_{\Lambda(d,N)}
- \lambda^{(n)}_{d,L}(\k_1,\dots,\k_n)\right) 
\widetilde{F}^{(n)}_{d,L}(\nu_{(\k_1,\dots,\k_n)};\r_1,\dots,\r_n)\\
=\, 
\left(\widetilde{\mathfrak{H}}^{(n)}_{\Lambda(d,N)}
- \mathfrak{H}^{(n)}_{\B^d(L)}\right) 
\widetilde{F}^{(n)}_{d,L}(\nu_{(\k_1,\dots,\k_n)};\r_1,\dots,\r_n)\, .
\end{multline*}
Since everything is symmetric, this implies that
\begin{multline*}
\left\|\left(H_{\Lambda(d,N)}-\lambda^{(n)}_{d,L}(\k_1,\dots,\k_n)\right) \widetilde{\Psi}^{(n)}_{d,N}(\k_1,\dots,\k_n)\right\|^2\\
=\, \sum_{(\r_1,\dots,\r_n) \in \Lambda(d,N)^n \setminus \mathscr{I}_n(\Lambda(d,N))}
\left|\left(\widetilde{\mathfrak{H}}^{(n)}_{\Lambda(d,N)}
- \mathfrak{H}^{(n)}_{\B^d(L)}\right) 
\widetilde{F}^{(n)}_{d,L}(\nu_{(\k_1,\dots,\k_n)};\r_1,\dots,\r_n)\right|^2\, .
\end{multline*}
We may also rewrite this formula as 
\begin{multline*}
\left(\widetilde{\mathfrak{H}}^{(n)}_{\Lambda(d,N)}
- \mathfrak{H}^{(n)}_{\B^d(L)}\right) 
\widetilde{F}^{(n)}_{d,L}(\nu_{(\k_1,\dots,\k_n)};\r_1,\dots,\r_n)\\
=\, 
\frac{1}{2}\, \sum_{\substack{(\r_1',\dots,\r_n') \\
\|(\r_1',\dots,\r_n') - (\r_1,\dots,\r_n)\|=1}} \left(1-\mathbf{1}_{\Lambda(d,N)^n\setminus \mathscr{I}_n(\Lambda(d,N))}(\r_1',\dots,\r_n')\right)
(F(\r_1,\dots,\r_n)-F(\r_1',\dots,\r'_n))\, ,
\end{multline*}
where we have written $F$ for $\widetilde{F}^{(n)}_{d,L}(\nu_{(\k_1,\dots,\k_n)};\cdot)$ in this formula to reduce notation.
The number of terms in this sum is always bounded by $2nd$ (the degree of $(\Z^d)^n \cong \Z^{nd}$).
Therefore, by Cauchy-Schwarz, and a sup-norm bound,
we may bound
\begin{multline*}
\left\|\left(H_{\Lambda(d,N)}-\lambda^{(n)}_{d,L}(\k_1,\dots,\k_n)\right) \widetilde{\Psi}^{(n)}_{d,N}(\k_1,\dots,\k_n)\right\|^2\\
\leq\, nd \max_{\substack{(\r_1,\dots,\r_n), (\r_1',\dots,\r_n') \in (\B^d(L))^n\\ 
\|(\r_1',\dots,\r_n') - (\r_1,\dots,\r_n)\|=1}} 
|F(\r_1,\dots,\r_n)-F(\r_1',\dots,\r'_n)|^2 \cdot \mathcal{N}\, ,
\end{multline*}
where $\mathcal{N}$ equals the number of edges in $\Phi_n(\B^d(L)) \setminus \Theta_n(\Lambda(d,N))$.
For any pair $(\r_1,\dots,\r_n), (\r_1',\dots,\r_n') \in (\B^d(L))^n$ with $\|(\r_1',\dots,\r_n') - (\r_1,\dots,\r_n)\|=1$,
we note, by the fundamental theorem of calculus (and Cauchy-Schwarz), that
$$
|F(\r_1,\dots,\r_n)-F(\r_1',\dots,\r'_n)|^2\, \leq\, \int_{0}^{1} \Big\|\frac{d}{dt} F(t\r_1+(1-t)\r_1',\dots,t\r_n+(1-t)\r_n')\Big\|^2 dt\, ,
$$
where we exted the definition of $F$ from $(\B^d(L))^n$ to $\R^{dn}$ by the same formula as in  Definition \ref{def:WidetildeF}.
We note that only one coordinate $k \in \{1,\dots,n\}$ has $\r_k\neq \r_k'$, and in this case there is only one $j \in \{1,\dots,d\}$ such that
$r_{kj} \neq r'_{kj}$.
Then direct calculation shows
$$
\Big\|\frac{d}{dt} F(t\r_1+(1-t)\r_1',\dots,t\r_n+(1-t)\r_n')\Big\|^2\, \leq\, 
L^{-nd} |\mathfrak{K}_{d,L}(\nu)|^{-1} 2^{nd}\, \cdot \frac{\pi^2 \kappa_{kj}^2}{L^2}\, .
$$
So, what we see is that
$$
\left\|\left(H_{\Lambda(d,N)}-\lambda^{(n)}_{d,L}(\k_1,\dots,\k_n)\right) \widetilde{\Psi}^{(n)}_{d,N}(\k_1,\dots,\k_n)\right\|^2\\
\leq\, nd2^{nd} \max_{\substack{k \in \{1,\dots,n\}\\ j \in \{1,\dots,d\}}} \frac{\kappa_{kj}^2}{L^2} \cdot \frac{\mathcal{N}}{L^{nd}}\, .
$$
But we claim  that there is some constant $C(n,d)$ such that
$$
\frac{\mathcal{N}}{L^{nd}}\, \leq\, \frac{C(n,d)}{L}\, ,
$$
because at least one of the points $(\r_1,\dots,\r_n), (\r_1',\dots,\r_n') \in (\B^d(L))^n$ must be in $(\B^d(L))^n \setminus \mathcal{I}_n(\Lambda(d,N))$.
Assume it is $(\r_1',\dots,\r_n')$.
Then this means that at least one of the two scenarios must occur. 
The first possibility is that for some $k \in \{1,\dots,d\}$ we have $\r_k' \in \B^d(L) \setminus \Lambda(d,N)$, meaning that for some $j \in \{1,\dots,d\}$
we must have $r'_{kj} = L$. This restricts that choice of that coordinate.
Or, the second possibility is that for some pair $1\leq k_1<k_2 \leq n$ we have $\r_{k_1}' = \r_{k_2}'$.
This restricts all $j$ coordinates $r_{k_2,j}'$ once we have chosen $\r_{k_1}'$.
So this proves the claim.
Putting this all together, we obtain
$$
\left\|\left(H_{\Lambda(d,N)}-\lambda^{(n)}_{d,L}(\k_1,\dots,\k_n)\right) \widetilde{\Psi}^{(n)}_{d,N}(\k_1,\dots,\k_n)\right\|^2\\
\leq\, C'(d,n) \max_{\substack{k \in \{1,\dots,n\}\\ j \in \{1,\dots,d\}}} \frac{\kappa_{kj}^2}{L^3}\, .
$$
Multiplying through by $\gamma^{-1} \cdot L^2$, we obtain
$$
\left\|\left(\gamma^{-1} \cdot L^2 H_{\Lambda(d,N)}-\gamma^{-1} \cdot L^2 \lambda^{(n)}_{d,L}(\k_1,\dots,\k_n)\right) \widetilde{\Psi}^{(n)}_{d,N}(\k_1,\dots,\k_n)\right\|^2\\
\leq\, C'(d,n) \max_{\substack{k \in \{1,\dots,n\}\\ j \in \{1,\dots,d\}}} \frac{\kappa_{kj}^2}{L}\, .
$$
The right hand side does converge to $0$ as $L \to \infty$.
Moreover, by (\ref{eq:lambdaDEF}), we do have the simple formula
$$
\lim_{L \to \infty} L^2 \lambda^{(n)}_{d,L}(\k_1,\dots,\k_n)\,
=\, \lim_{L \to \infty} \sum_{k=1}^{n} \sum_{j=1}^{d} 2L^2 \sin^2\left(\frac{\pi \kappa_{k,j}}{2L}\right)\, 
=\, \sum_{k=1}^{n} \sum_{j=1}^{d} \frac{\pi^2}{2}\, \kappa_{k,j}^2\, 
=\, \gamma \sum_{k=1}^{n} \sum_{j=1}^{d} \kappa_{k,j}^2\, .
$$
This proves (\ref{eq:AsymptoticEnergy}).

In principle it is easier to prove (\ref{eq:AsymptoticIP}) since one is operating at the level of $\ell^2$ not energy.
In fact, the main difference is that there will not be any need for the fundamental theorem of calculus.
Everything follows from sup-norm bounds on the functions $f$ (which is akin to a type of near equipartition of the $\ell^2$ energy)
along with the fact that the relative fraction of bad set of vertices is still bounded by $\mathcal{N}/L^d = O(L^{-1})$.
We leave the details to the reader.
%(Alternatively, one could use 
%equations (\ref{eq:contractionTildeT})
%and (\ref{eq:deformTilldeT}) [neither of which rely on Proposition \ref{prop:Variational1} for their proofs]
%as well as the energy bound on $F$ obtained from Lemma \ref{lem:specSummary}.)

\section*{Acknowledgments}
This research was partially supported by the National Science Foundation under Grant DMS-1515850 (B.N.).

\end{document}